\pdfoutput=1

\documentclass[toc=listof,toc=bibliography]{scrartcl}
\makeatletter
\DeclareOldFontCommand{\rm}{\normalfont\rmfamily}{\mathrm}
\DeclareOldFontCommand{\sf}{\normalfont\sffamily}{\mathsf}
\DeclareOldFontCommand{\tt}{\normalfont\ttfamily}{\mathtt}
\DeclareOldFontCommand{\bf}{\normalfont\bfseries}{\mathbf}
\DeclareOldFontCommand{\it}{\normalfont\itshape}{\mathit}
\DeclareOldFontCommand{\sl}{\normalfont\slshape}{\@nomath\sl}
\DeclareOldFontCommand{\sc}{\normalfont\scshape}{\@nomath\sc}
\makeatother

\usepackage{a4wide}
\usepackage{amsthm}
\usepackage{amsfonts}
\usepackage{amsmath}
\usepackage{amssymb}
\usepackage{hyperref}
\usepackage{array}
\usepackage{multirow}
\usepackage{color}
\usepackage{mathpartir}
\usepackage{mathrsfs}
\usepackage{lscape}
\usepackage{
pgfrcs}
\usepackage{
pgf}
\usepackage{
tikz}
\usetikzlibrary{%
  shapes.misc,
  shapes.geometric,
  positioning,
  shadows%
}
\usepackage[all]{xy}
\xyoption{2cell}
\xyoption{curve}
\UseTwocells
\SelectTips{cm}{}
\usepackage{stmaryrd}

\usepackage{wasysym} 
\usepackage{pigpen}  

\newcommand{\pb}{\text{\large\pigpenfont J}}

\usepackage[framemethod=TikZ]{mdframed}

\newtheorem{thm}{Theorem}
\newtheorem{theorem}[thm]{Theorem}
\newtheorem{cor}[thm]{Corollary}
\newtheorem{lem}[thm]{Lemma}
\newtheorem{prop}[thm]{Proposition}
\newtheorem{defn}[thm]{Definition}
\newtheorem{expl}[thm]{Example}
\newtheorem{fact}[thm]{Fact}




\mdfsetup{skipabove=\topskip,skipbelow=\topskip}
\newmdtheoremenv[roundcorner=8pt,backgroundcolor=gray!32,nobreak=true,linecolor=blue]{myremark}[thm]{Remark}

\def\refeq#1{(\ref{#1})}






\newcommand{\tVar}{\mathsf{tVar}}
\newcommand{\Sg}[1]{\mathsf{Sg}(#1)}
\newcommand{\Grnd}[1]{\mathsf{Ground}(#1)}
\newcommand{\Typs}[1]{\mathsf{Types}(#1)}
\newcommand{\Trms}[1]{\mathsf{Terms}(#1)}

\newcommand{\deq}{:=}					

\newcommand{\ibox}{\Box}

\newcommand{\axst}[1]{\ensuremath{\mathsf{#1}}}
\newcommand{\intl}{\mathsf{int}}
\newcommand{\fullsig}{\ibox\intl}

\newcommand{\necr}{\axst{NEC}}
\newcommand{\mpr}{\axst{MP}}

\newcommand{\kax}{\axst{(nrm)}}
\newcommand{\kaxb}{\axst{(opr)}}
\newcommand{\kfax}{\axst{(trns)}}
\newcommand{\tax}{\axst{(refl)}}
\newcommand{\rax}{\axst{(r)}}
\newcommand{\raxb}{\axst{(fmap)}}
\newcommand{\cfax}{\axst{(bind)}}
\newcommand{\laxax}{\axst{(pll)}}
\newcommand{\kwax}{\axst{(wl\ddot{o}b)}}
\newcommand{\rlax}{\axst{(sl\ddot{o}b)}}
\newcommand{\gtax}{\axst{(glb)}}
\newcommand{\henkax}{\axst{(henk)}}
\newcommand{\grzax}{\axst{(grz)}}
\newcommand{\sgrzax}{\axst{(sgrz)}}
\newcommand{\nxtax}{\axst{(next)}}
\newcommand{\nxtaxb}{\axst{(derv)}}
\newcommand{\verax}{\axst{(ver)}}
\newcommand{\veraxb}{\axst{(boxbot)}}
\newcommand{\clax}{\axst{(cl)}}
\newcommand{\claxb}{\axst{(em)}}
\newcommand{\dax}{\axst{(nv)}}
\newcommand{\ndax}{\axst{(nnv)}}
\newcommand{\gdax}{\axst{(gd)}}
\newcommand{\dtax}{\axst{.3}}

\newcommand{\iF}{\mathcal{L}_{\intl}}
\newcommand{\mF}{\mathcal{L}_{\fullsig}}
\newcommand{\iS}{\Sigma}

\newcommand{\acls}{\ensuremath{\mathbf{(a)}\,}}
\newcommand{\bcls}{\ensuremath{\mathbf{(b)}\,}}

\newcommand{\Acls}{\ensuremath{\mathbf{(A)}\,}}
\newcommand{\Bcls}{\ensuremath{\mathbf{(B)}\,}}
\newcommand{\Ccls}{\ensuremath{\mathbf{(C)}\,}}
\newcommand{\Dcls}{\ensuremath{\mathbf{(D)}\,}}
\newcommand{\Ecls}{\ensuremath{\mathbf{(E)}\,}}
\newcommand{\Fcls}{\ensuremath{\mathbf{(F)}\,}}
\newcommand{\Gcls}{\ensuremath{\mathbf{(G)}\,}}

\newcommand{\icc}{\:\unlhd\:}
\newcommand{\cci}{\:\unrhd\:}
\newcommand{\iccs}{\:\lhd\:}
\newcommand{\mcc}{\:\prec\:}

\newcommand{\upset}[1]{\mathtt{Up}_{\icc}(#1)}

\newcommand{\lpl}{\oplus}
\newcommand{\dpl}{ + }

\newcommand{\cmp}{\:;}

\newcommand{\diag}{\Delta}

\newcommand{\reluc}[2]{#1^{#2}\!\!\uparrow}
\newcommand{\reldc}[2]{#1^{#2}\!\!\downarrow}

\newcommand{\ribox}{\boxdot}
\newcommand{\irrbox}{\divideontimes}

\newcommand{\ufpax}{\axst{(ufp)}}

\newcommand{\diagf}[2]{\mathsf{diag}_{#2}#1}

\newcommand{\ilog}{\axst{i}}
\newcommand{\clog}{\axst{cl}}

\newcommand{\iPC}{\axst{IPC}}
\newcommand{\iLC}{\axst{LC}}
\newcommand{\iCl}{\axst{Cl}}
\newcommand{\iSFour}{{\axst{S4}^{\ilog}}}
\newcommand{\cSFour}{\axst{S4}^{\clog}}
\newcommand{\iKFour}{\axst{K4}^{\ilog}}
\newcommand{\cKFour}{\axst{K4}^{\clog}}

\newcommand{\iLax}{{\axst{PLL}^{\ilog}}}
\newcommand{\iCLl}{{\axst{CL}}}
\newcommand{\iK}{{\axst{K}^{\ilog}}}
\newcommand{\cK}{\axst{K}^{\clog}}
\newcommand{\iKW}{{\axst{GL}^{\ilog}}}
\newcommand{\cKW}{{\axst{GL}^{\clog}}}
\newcommand{\iNKW}{\axst{CBL}^{\ilog}}
\newcommand{\iT}{\axst{T}^{\ilog}}
\newcommand{\iCFour}{{\axst{C4}^{\ilog}}}
\newcommand{\iR}{{\axst{R}^{\ilog}}}

\newcommand{\iLob}{{\axst{SL}^{\ilog}}}
\newcommand{\iRLob}{\iLob}
\newcommand{\iSL}{\iLob}

\newcommand{\iNext}{{\axst{CB}^{\ilog}}}
\newcommand{\imHC}{{\axst{mHC}}}
\newcommand{\mHC}{{\axst{mHC}}}

\newcommand{\iKM}{{\axst{KM}}}
\newcommand{\KM}{{\axst{KM}}}
\newcommand{\iNLob}{\iKM}

\newcommand{\iRCl}{{\axst{R}^{\clog}}}
\newcommand{\cR}{{\axst{R}^{\clog}}}

\newcommand{\iD}{\axst{NV}^{\ilog}}
\newcommand{\iND}{\axst{NNV}^{\ilog}}

\newcommand{\iIncn}{\mF}
\newcommand{\iITriv}
{{\axst{Triv}^{\ilog}}}
\newcommand{\iTriv}
{{\axst{Triv}^{\ilog}}}
\newcommand{\iCTriv}{{\axst{Triv}^{\clog}}}
\newcommand{\iIVer}
{{\axst{Ver}^{\ilog}}}
\newcommand{\iVer}
{{\axst{Ver}^{\ilog}}}
\newcommand{\iCVer}{{\axst{Ver}^{\clog}}}

\newcommand{\cwGrz}{{\axst{wGrz}^{\clog}}}
\newcommand{\csGrz}{{\axst{sGrz}^{\clog}}}

\newcommand{\PA}{\ensuremath{\mathsf{PA}}}
\newcommand{\HA}{\ensuremath{\mathsf{HA}}}

\newcommand{\powset}[1]{\mathtt{Pow}(#1)}

\newcommand{\vars}[1]{\overline{#1}}

\newcommand{\qInt}{{\axst{QINT}}}
\newcommand{\qHC}{{\axst{QHC}}}

\newcommand{\HAO}{\ensuremath{\mathsf{HAO}}}
\newcommand{\PABM}{\ensuremath{\mathsf{PABM}}}

\newcommand{\gH}{\mathfrak{H}}

\newcommand{\gT}{\mathfrak{T}}

\newcommand{\hml}{\mathsf{h}}
\newcommand{\iml}{\mathsf{i}}
\newcommand{\jml}{\mathsf{j}}

\newcommand{\wzi}{\axst{[\,i\,]}}
\newcommand{\wzm}{\axst{[m]}}

\newcommand{\wzid}{\axst{\langle\,i\,\rangle}}

\newcommand{\mixax}{\axst{(mix)}}
\newcommand{\cSF}{{\axst{S4}^{\clog}_{\axst{i},\axst{m}}}}
\newcommand{\cSFM}{{\axst{S4M}^{\clog}_{\axst{i},\axst{m}}}}
\newcommand{\cGFM}{{\axst{sGrzM}^{\clog}_{\axst{i},\axst{m}}}}
\newcommand{\biF}{\mathcal{L}_{\axst{i},\axst{m}}}

\newcommand{\wztr}{\flat}

\newcommand{\wzmin}{\tau^{\axst{mix}}}
\newcommand{\wzmax}{\sigma^{\axst{mix}}}
\newcommand{\wzint}{\rho_{\axst{in}}}

\newcommand{\ct}[1]{\mathscr{#1}}

\newcommand{\ctC}{\ct{E}}

\newcommand{\ctE}{\ct{E}}

\newcommand{\ctR}{\ct{R}}

\newcommand{\truem}{\mathtt{true}}
\newcommand{\falsem}{\mathtt{false}}

\newcommand{\Tconn}[1]{#1}
\newcommand{\topT}{\Tconn{\truem}}
\newcommand{\botT}{\Tconn{\falsem}}
\newcommand{\andT}{\Tconn{\wedge}}
\newcommand{\negT}{\Tconn{\neg}}
\newcommand{\orT}{\Tconn{\vee}}
\newcommand{\impT}{\Tconn{\rightarrow}}

\newcommand{\eqcT}{\mathtt{eq}}
\newcommand{\eqT}{\approx}
\newcommand{\leqcT}{\mathtt{leq}}
\newcommand{\inT}{\Tconn{\in}}

\newcommand{\pwoC}[1]{\mathcal{P}{#1}}
\newcommand{\memoC}{\varbigcirc\!\!\!\!\!\!\!\ni\,}
\newcommand{\memmC}{\ni}

\newcommand{\trmo}{\mathbf{1}}
\newcommand{\inio}{\mathbf{0}}
\newcommand{\isom}{\cong}

\newcommand{\subob}[3]{\{ #1 \in #2 \mid #3 \}}

\newcommand{\monar}{\rightarrowtail}
\newcommand{\chrm}[1]{\chi_{#1}}

\newcommand{\dent}[1]{[\!\![#1]\!\!]}

\newcommand{\tmod}[3]{#1,#2 \Vdash #3}
\newcommand{\conss}[3]{#1 \vDash_{#2} #3}
\newcommand{\ofseq}{\overline{f}}
\newcommand{\ocseq}{\overline{c}}

\newcommand{\cpp}{+}
\newcommand{\epim}{\twoheadrightarrow}

\newcommand{\Set}{\mathbf{Set}}


\newcommand{\chr}{\vdash}

\newcommand{\chrmn}{\vdash_{\!\!.}\,}

\newcommand{\obmo}{\odot}
\newcommand{\chobmo}{\synt{\obmo}}

\newcommand{\anymo}{\circledcirc}
\newcommand{\chanymo}{\synt{\circledcirc}}

\newcommand{\finim}{\,\mathtt{fin}
{\scriptstyle{!}}}
\newcommand{\chtim}{\synt{\times}}

\newcommand{\chE}{\synt{E}}
\newcommand{\chF}{\synt{F}}
\newcommand{\chexp}[2]{{#2}^{#1}}
\newcommand{\chtrm}{\synt{\trmo}}
\newcommand{\chom}{\synt{\Omega}}
\newcommand{\chpow}[1]{\synt{\mathcal{P}}#1}
\newcommand{\chto}{\rightsquigarrow}
\newcommand{\chand}{\wedge}

\newcommand{\chtrue}{\synt{\truem}}
\newcommand{\chfalse}{\synt{\falsem}}

\newcommand{\chfin}{\underline{\finim}}

\newcommand{\synt}[1]{\underline{#1}}

\newcommand{\Isubt}{\mathsf{SubTe}}
\newcommand{\Icont}[2]{#1 \subseteq #2}
\newcommand{\Imaxst}{\mathsf{MaxST}}
\newcommand{\Ifix}{\mathsf{fix}\_\mathsf{so}}
\newcommand{\Icontr}{\mathsf{Non}\_\mathsf{exp}}

\newcommand{\currya}[2]{{\mathsf{curry}^{#1}(#2)}}

\newcommand{\eval}{\mathsf{eval}}

\newcommand{\id}{\mathsf{id}}

\newcommand{\prdf}{{\scriptstyle\Pi}_1}

\newcommand{\chprdf}[1]{\underline{{\scriptstyle\Pi}_1}#1}
\newcommand{\prds}{{\scriptstyle\Pi}_2}
\newcommand{\chprds}[1]{\underline{{\scriptstyle\Pi}_2}#1}

\newcommand{\prdar}[1]{\langle#1\rangle}
\newcommand{\chprd}[2]{\underline{\langle}#1,#2\underline{\rangle}}

\newcommand{\cpr}[2]{[#1,#2]}
\newcommand{\cprar}[1]{[#1]}

\newcommand{\chof}[2]{{#1}^{\underline{\wr}}{#2}}





\newcommand{\hm}[2]{\mathsf{hom}^{#2}_{\ct{#1}}}

\newcommand{\ylt}[1]{\breve{#1}}
\newcommand{\ylm}[1]{\hat{#1}}

\newcommand{\point}[1]{(\!\!(\!#1\!)\!\!)}
\newcommand{\pbcl}[2]{[#2.#1]}

\newcommand{\delbirk}{\CIRCLE}

\newcommand{\toptrees}{\mathcal{S}}







\usepackage[
    backend=biber,
    bibstyle=alphabetic,
    citestyle=alphabetic,
    sorting=nty,
    natbib=false,
    isbn=false,
    uniquename=false,
    firstinits=true,
    url=true, 
    doi=false,
    safeinputenc,
    bibencoding=auto,
    maxnames=5,
    maxalphanames=5
]{biblatex}
\AtEveryBibitem{%
  \clearlist{language}%
  \clearfield{month}%
  \clearlist{location}%
  \clearfield{pagetotal}%
}

\setlength\bibitemsep{0pt}

\addbibresource{intmod_esakiafull.bib}


\usepackage{listings}

\lstdefinelanguage{Coq}%
  {morekeywords={Variable,Inductive,CoInductive,Fixpoint,CoFixpoint,%
      Definition,Lemma,Theorem,Axiom,Local,Save,Grammar,Syntax,Intro,%
      Trivial,Qed,Intros,Symmetry,Simpl,Rewrite,Apply,Elim,Assumption,%
      Left,Cut,Case,Auto,Unfold,Exact,Right,Hypothesis,Pattern,Destruct,%
      Constructor,Defined,Fix,Record,Proof,Induction,Hints,Exists,let,in,%
      Parameter,Split,Red,Reflexivity,Transitivity,if,then,else,Opaque,%
      Transparent,Inversion,Absurd,Generalize,Mutual,Cases,of,end,Analyze,%
      AutoRewrite,Functional,Scheme,params,Refine,using,Discriminate,Try,%
      Require,Load,Import,Scope,Set,Open,Section,End,match,with,Ltac 
    },%
   sensitive, %
   morecomment=[n]{(*}{*)},%
   morestring=[d]",%
   literate={=>}{{$\Rightarrow$}}1 {>->}{{$\rightarrowtail$}}2{->}{{$\to$}}1
   {\/\\}{{$\wedge$}}1
   {|-}{{$\vdash$}}1
   {\\\/}{{$\vee$}}1
   {~}{{$\sim$}}1
  }

\lstset{language=Coq}

\title{Constructive Modalities with Provability Smack}
\subtitle{Author's Cut, v. 2.04: updated and extended electronic version}
\author{Tadeusz Litak} 
\titlehead{Informatik 8, FAU Erlangen-N\"{u}rnberg \newline\texttt{tadeusz.litak@fau.de}} 


 \dedication{\small\flushright \emph{To the memory of Leo Esakia and Dito Pataraia}}

\date{\small The references were last updated in 2015}


\begin{document}

\begin{titlepage}

\maketitle

\abstract{I overview the work of the Tbilisi school on intuitionistic modal logics of well-founded/scattered structures and its connections with contemporary theoretical computer science. 
Fixed-point theorems and their consequences are of particular interest. 
}


\tableofcontents

\paragraph*{Disclaimer}

\footnotesize

The paper is a modified version of \cite{Litak14:esakia}, which has
been commissioned for the highly recommended volume \cite{EsakiaOutstanding}. Please make clear
which version you are quoting, as even the numeration of all
environments is very different than in the version printed in the
volume (this is by no means the only difference).

\end{titlepage}


\section{Introduction}

The intended audience of \cite{EsakiaOutstanding} is probably aware that much of Leo Esakia's research concentrated on 
semantics for the intuitionistic logic $\iPC$, the modal logic $\cKW$  of L\"{o}b, its weakening $\cwGrz$ and intuitionistic-modal systems like the logic $\iKM$ or its weakening $\mHC$; see Table \ref{tab:addaxioms} for all definitions. $\cKW$ is also known as the  \textit{G\"{o}del-L\"{o}b} logic, but this name may suggest more personal involvement with the system than G\"{o}del ever had;  $\iKM$ or $\mHC$ will be discussed in 
 Section \ref{sec:deriv}. A central feature of semantics for such systems is  well-foundedness  or \emph{scatteredness}. 
While in the case of $\iPC$ well-foundedness is a sufficient, but not necessary condition---intuitionistic logic is complete wrt well-founded or even finite partial orders, but sound wrt much bigger class of structures---$\cKW$ and $\iKM$ require it even for soundness.  This is due to the fact that the latter two systems include a form of an explicit induction axiom: in the case of $\cKW$ the well-known L\"{o}b axiom (which here will be called the \emph{weak L\"{o}b axiom}) and in the case of $\iKM$ the \emph{strong L\"{o}b axiom}---less well-known to modal logicians, but as we are going to see, better  known to type theorists. Scatteredness is the topological generalization of well-foundedness; Simmons \cite{Simmons82:lnm}  provided tools necessary to define its point-free counterpart and the Tbilisi school 
noted that this notion also makes sense in the topos setting. In fact, the most generic way of defining scateredness is via modal syntax: as validity of the L\"{o}b principle for a  
 suitable ``later'' modality.

The interplay of relational, topological, point-free and algebraic aspects in the above paragraph should not feel unnatural to anybody familiar with Leo's attitude to research. 
Let us look at an important example how results can travel from one setting to another. 
 In the mid-1970's, it was established that L\"{o}b-like logics 
enjoy the so-called Fixpoint Theorem. 
At first, the intention was to grasp the algebraic content of 
 G\"{o}del's Diagonalization Lemma. 
 Yet in its own right it turned out to be one of the most fascinating results ever proved about such systems.  Section \ref{sec:fixtheorem} gives an overview of some of its applications and consequences. For now, let us just mention that 
Leo Esakia used it, e.g., to characterize  algebras for $\iKM$, 
 see Theorem \ref{th:algfixp} and Corollary \ref{cor:leofix} here. Furthermore, it seems to have inspired  the work on scattered toposes: \cite[Section 3]{EsakiaJP00:apal} claims to present its \emph{topos-theoretic counterpart}.  However, as the result central for the topos version (Theorem \ref{th:scanonexp} here) 
 does not even include modalities in its formulation, the word \textit{counterpart} has to be understood rather loosely. 

As we will see, in hindsight \cite{EsakiaJP00:apal} turns out to be closely connected to very recent developments in Theoretical Computer Science, in particular the work of Birkedal et al. on the \emph{topos of trees} \cite{BirkedalMSS12:lmcs}, itself an example of a scattered topos. Thus, it seems particularly regrettable that the spadework of the Tbilisi school has not been carried further and is not more widely known. 





The paper is structured as follows. 
Section \ref{sec:primer} recalls syntactic and semantic basics of \emph{intuitionistic} normal modal logics. 
Section \ref{sec:fixtheorem} focuses on fixpoint results for L\"{o}b-like systems. 
Section \ref{sec:deriv} introduces the work of the Georgian school 
on 
extensions of $\mHC$. 
Finally, Section \ref{sec:scatopoi} discusses scattered toposes, beginning with 
 an overview of the topos logic.

While the work is intended as an overview and claims to novelty are minimal, they are perhaps not entirely non-existent. Theorem \ref{th:algfixp} is the most general form of  \cite[Proposition 3]{Esakia06:jancl} I could think of and Section \ref{sec:complibt} reproves results on extensions of $\mHC$ 
using the framework of \cite{WolterZ97:al,WolterZ98:lw}; in fact, it seems that Corollary \ref{cor:blokesakia} 
is the first published proof of the corresponding extension of the Blok-Esakia Theorem announced in \cite{Esakia06:jancl}. 

\begin{myremark}\label{rem:coq}
As a part of a larger project, I formalized most of syntactic
derivations in the paper---in particular those relevant for Section
\ref{sec:scafix}--- in the Coq proof assistant. Readers interested in
this ongoing project are welcome to contact me. The formalization
covers, in particular, most of material in \cite{Esakia98:bsl}, which
prepared the proof-theoretical background for  \cite{EsakiaJP00:apal}
and inspired the title of this paper.   
\end{myremark}


\section{A Primer on Intuitionistic Modalities} \label{sec:primer}


Modal formulas over a supply of propositional variables  $\iS$ are defined by 
\[
A,B :\deq \bot \mid p \mid A \to B \mid A \wedge B \mid A \vee B \mid \ibox A
\]
where $p\in\iS$. The set is denoted by $\mF\iS$, but unless explicitly stated otherwise, I will keep $\iS$ fixed throughout and drop it from the notation. The purely intuitionistic language (i.e., without $\ibox$) will be denoted by $\iF$. Note that the syntax extended with a $\Diamond$ operator, intuitionistically \emph{not} definable from $\ibox$, is of no interest for us here. 

\def\tbskip{1.2mm}

\begin{table}

\scriptsize
\hrule
\vspace{1mm}

\caption{\label{tab:intaxioms}Axioms for 
$\iK$}

\vspace{5mm}



\begin{tabular}{@{}l@{\hspace{1mm}}l@{\hspace{25mm}}l@{\hspace{2mm}}l}
\multicolumn{4}{c}{Axioms of the intuitionistic propositional calculus, see, e.g., \cite[Sec. 1.3,(A1)-(A9)]{ChagrovZ97:ml}}\\[5mm] 
\multicolumn{4}{c}{Axiom for $\ibox$ \qquad} \\[\tbskip]
\multicolumn{2}{c}{\kax \qquad} & \multicolumn{2}{l}{$\ibox (A \to B) \to (\ibox A \to \ibox B)$} \\[4mm]
\multicolumn{2}{l}{Inference rule for $\iF$-fragment} & \multicolumn{2}{l}{Inference rule for modality} \\[\tbskip]
\mpr & \qquad $\inferrule{A \to B, \quad A}{B}$  & \necr & $\inferrule{A}{\ibox A}$ \\[\tbskip]
\end{tabular}
\vspace{1mm}
\hrule
\end{table}

 $\Gamma \subseteq \mF$ is \emph{a normal $\mF$-logic} or an \emph{intuitionistic normal modal logic} if it is closed under rules and axioms from Table \ref{tab:intaxioms} plus the rule of substitution. 
For any $\Gamma,\Delta \subseteq \mF$, $\Gamma \lpl \Delta$ will denote the closure of $\Gamma \cup \Delta$ under  substitution 
and the rules $\mpr$  and $\necr$. In the case of $\Delta = \{\alpha\}$, I will also write $\Gamma \lpl \alpha$. Occasionally, I will write $\Gamma \dpl \Delta$ for the closure under  substitution 
and $\mpr$, but without $\necr$. This notation is analogous to the one used in \cite{ChagrovZ97:ml}. 

\begin{table}
\footnotesize
\hrule
\vspace{1mm}

\caption{\label{tab:addaxioms}$\mF$ axioms and logics. See, e.g., \cite{Sotirov84:ml,WolterZ97:al,WolterZ98:lw} for more (also in the syntax extended with a $\Diamond$ operator). $\ribox A$ below abbreviates $A \wedge \ibox A$}
\vspace{1mm}
\begin{tabular}{>{$}l<{$}>{$}l<{$}@{\hspace{4mm}}>{$}l<{$}>{$}l<{$}
} 
\clax &  ((B \to A) \to B) \to B & \claxb & A \vee \neg A \\
& \iCl = \iPC \dpl \clax = \iPC \dpl \claxb && \\[\tbskip]
\kax & \ibox (A \to B) \to (\ibox A \to \ibox B) 
& \kaxb & \ibox (A \wedge B) \leftrightarrow (\ibox A \wedge \ibox B) \\
& \iK = \iPC \lpl \kax = \iPC \lpl \kaxb && \cK = \iK \lpl \iCl  
\\[\tbskip] 
\kfax & \ibox A \to \ibox\ibox A & & \\
& \iKFour = \iK \lpl \kfax && \cKFour = \iKFour \lpl \iCl
\\[\tbskip] 
\cfax &  \ibox\ibox A \to \ibox A & 
& \iCFour = \iK \lpl \cfax
\\[\tbskip]
\rax &  A \to \ibox A &   
\raxb & (A \to B) \to (\ibox A \to \ibox B) \\ 
& \iR = \iK \lpl \rax = \iK \lpl \raxb & \multicolumn{2}{l}{Note that above $\iR$, $\lpl$ is the same as $\dpl$} \\
&  \multicolumn{2}{l}{In using the symbol $\axst{R}$, I follow \cite{FairtloughM97:ic}}\\[\tbskip]
\tax & \ibox A \to A && \iSFour = \iT \lpl \iKFour\\
& \iT = \iK \lpl \tax && \iITriv = \iT \dpl \iR
\\[\tbskip]
\laxax & 
(A \vee \ibox\ibox A) \to \ibox A & & \\
& \iLax = \iK \lpl \laxax = \iCFour \lpl \iR && \\[\tbskip]
\kwax & \ibox (\ibox A \to A) \to \ibox A & 
  \henkax & \ribox (A \leftrightarrow \ibox A) \to A \\
\ufpax & \multicolumn{3}{l}{$\ribox (B \leftrightarrow A[B/p]) \to (\ribox (C \leftrightarrow A[C/p]) \to (B \leftrightarrow C))$} \\
& \multicolumn{3}{l}{$\iKW = \iK \lpl \kwax = \iKFour \lpl \henkax = \iKFour \lpl \ufpax$} \\
& \text{(see Theorem \ref{th:lobax} below)} && \cKW = \iKW \lpl \iCl
\\[\tbskip]
\rlax &  (\ibox A \to A) \to A & 
 \gtax & (\ibox A \to A) \to \ibox A  \\ 
& \multicolumn{3}{l}{$\iRLob = \iK \lpl \rlax = \iK \lpl \gtax = \iKW \dpl \iR$} \\
& \multicolumn{3}{l}{The form $\gtax$ comes from Goldblatt \cite{Goldblatt81:mlq}} 
\\[\tbskip]
\grzax & \ibox(\ibox(A \to \ibox A) \to A) \to \ibox A & \sgrzax & \ibox(\ibox(A \to \ibox A) \to A) \to A  \\
& \cwGrz = \cKFour \lpl \grzax & & \csGrz = \cK \lpl \sgrzax \\
& \multicolumn{3}{l}{Note we only consider here classical variants of $\grzax$} \\[\tbskip] 
\nxtax &  \ibox A \to (((B \to A) \to B) \to B) &  
\nxtaxb & \ibox A \to ((B \to A) \vee B)  \\
& \multicolumn{2}{l}{$\iNext = \iK \lpl \nxtax = \iK \lpl  \nxtaxb$}  
\\
& \multicolumn{2}{l}{$\iNext$ stands for \emph{Cantor-Bendixson}, see Sec. \ref{sec:deriv}}
& \imHC = \iR \dpl \iNext \\ 
& \iNKW = \iNext \lpl \iKW &&
\iNLob = \iNext \lpl \iLob \\
& \multicolumn{3}{l}{$\iKM = \iRLob \dpl \iNext = \iLob \dpl \imHC$}
\\[\tbskip]
\gdax & (A \to B) \vee (B \to A)  &    &  \iLC = \iPC \dpl \gdax \\
\dtax & \ibox (\ribox A \to B) \vee \ibox(\ribox B \to A) &&  　\\[\tbskip]
\verax & \ibox A & \veraxb & \ibox \bot\\
& \multicolumn{2}{c}{$\iIVer = \iK \lpl \verax = \iK \lpl \veraxb$} &
\\[\tbskip]
\ndax & \neg\neg\ibox\bot & \dax & \neg\ibox\bot \\
& \iND = \iK \lpl \ndax && \iD = \iK \lpl \dax  
\end{tabular}

\hrule
\end{table}







\begin{figure}
\def\jnds{-0.3mm}
\def\njoin{\mathbf{\lpl}}
\def\ndfillclr{blue!10!black!7}
\def\clfillclr{green!10}
\def\mxfillclr{red!5}
\def\isolineclr{black!40}
\def\nlev{3.45}
\def\vlev{14.35}

\begin{tikzpicture}[scale=0.75,xshift=-1.4cm,
nlog/.style={rounded rectangle,draw=black!65,thick,top color=white,bottom color=\ndfillclr,inner sep=0pt,minimum width=1.3cm,minimum height=0.5cm,font=\footnotesize,drop shadow},
botlog/.style={chamfered rectangle,draw=black!50,semithick,top color=white,bottom color=\ndfillclr,inner sep=0pt,minimum width=1.3cm,minimum height=0.45cm,font=\tiny,drop shadow}, 
bignjoin/.style={
inner sep=-0.15mm,
font=\large},
njoin/.style={circle,inner sep=-0.25mm,
font=\scriptsize,fill=white,opacity=1},
intcl/.style={shape=ellipse,font=\tiny,top color=white,bottom color=\clfillclr,inner sep=0,opacity=0.6},
maxconscl/.style={shape=ellipse,font=\tiny,top color=white,bottom color=\mxfillclr,inner sep=0,opacity=0.6},
isoline/.style={dotted,color=\isolineclr,<->},
isotext/.style={font=\tiny,inner sep=0,opacity=0.8}
]


  \node (iSFour) at (9.29, 6.88) [nlog] {$\iSFour$};
  \node (iSFjoin) [below=\jnds of iSFour,njoin] {$\njoin$};

  \node (iKFour) at (5.6, 2.7) [nlog] {$\iKFour$};

  \node (iLax) at (7.67,6) [nlog] {$\iLax$};
  \node (iLaxjoin) [below=\jnds of iLax,njoin] {$\njoin$};

  \node (iK) at (6.78,0.5) [nlog] {$\iK$};

  \node (iKW) at (3.12,4.1) [nlog] {$\iKW$};

  \node (iNKW) at (2.96,6.92) [nlog] {$\iNKW$};
  \node (iNKWjoin) [below=\jnds of iNKW,njoin] {$\njoin$};

  \node (iT) at (10.1,4.88) [nlog] {$\iT$};

  \node (iCFour) at (9.6,3.09) [nlog] {$\iCFour$};

  \node (iR) at (5,4.17) [nlog] {$\iR$};

  \node (iRLob) at (-0.2,7.9) [nlog] {$\iRLob$};
  \node (iRLjoin) [below right=\jnds of iRLob,njoin] {$\njoin$};

  \node (iNext) at (7.6,3.24) [nlog] {$\iNext$};

  \node (imHC) at (6.1,7.95) [nlog] {$\imHC$};
  \node (imHCjoin) [below=\jnds of imHC,njoin] {$\njoin$};

  \node (iKM) at (3.06,11.7) [nlog] {$\iKM$};
  \node (iKMjoin) [below=\jnds of iKM,njoin] {$\njoin$};


  \node (iMLax) at (7.9,10.3) [bignjoin] {$\mathbf{\lpl}$}; 

  \node (iRCl) at (9.48,12.32) [nlog] {$\iRCl$};
  \node (iRCljoin) [below=\jnds of iRCl,njoin] {$\njoin$};

  \node (iD) at (11.73,\nlev) [nlog] {$\iD$};	

  \node (iND) at (0.7,4.3) [nlog] {$\iND$};


  \node (cK) at (14.18,4.6) [nlog] {$\cK$};
  \node (cCFour) at (14.1,6.6) [bignjoin] {$\njoin$};
  \node (cD) at (14.87,6.4) [bignjoin] {$\njoin$}; 
  \node (cKFour) at (12.88,7.8) [bignjoin] {$\njoin$}; 
  \node (cwGrz) at (12.3,9) [nlog] {$\cwGrz$};
  \node (cT) at (14.79,7.8) [bignjoin] {$\njoin$}; 
  \node (cSFour) at (14.68,8.93) [bignjoin] {$\njoin$}; 
  \node (cGrz) at (14.5,11.6) [nlog] {$\csGrz$};
  \node (cGrzjoin) [below=\jnds of cGrz,njoin] {$\njoin$};

  \node (cKW) at (7,12.34) [nlog] {$\cKW$}; 
  \node (cKWjoin) [below=\jnds of cKW,njoin] {$\njoin$};


  \node (iITriv) at (11.65,11.4) [nlog] {$\iITriv$};
  \node (iITjoin) [below left=\jnds of iITriv,njoin] {$\njoin$};

  \node (iCTriv) at (11.9,15) [nlog] {$\iCTriv$};  
  \node (iCTjoin) [below=\jnds of iCTriv,njoin] {$\njoin$};
  \node (iCTRjoin) [below left=\jnds of iCTriv,njoin] {$\njoin$};

  \node (iIVer) at (0,10.75) [nlog] {$\iIVer$};
  \node (iIVjoin) [below right=\jnds of iIVer,njoin] {$\njoin$};

  \node (iCVer) at (1.2,\vlev) [nlog] {$\iCVer$};
  \node (iCVjoin) [below left=\jnds of iCVer,njoin] {$\njoin$};
  \node (iCVNDjoin) [below=\jnds of iCVer,njoin] {$\njoin$};

  \node (iIncn) at (7.0,17) [nlog] {$\iIncn$};
  \node (iIncnND) [below left=\jnds of iIncn,njoin] {$\njoin$};
  \node (iIncnKW) [below=\jnds of iIncn,njoin] {$\njoin$};


  \node [below=0.56cm of iIncn,maxconscl,text width=32mm] {There are no normal logics in this area, $\iCVer$ and $\iCTriv$ are the only maximal consistent ones above $\iK$};


  \path (iK) edge[->] (iD);
  \path (iK) edge[->] (iND);
  \path (iK) edge[->] (iKFour);
  \path (iK) edge[->] (iNext);
  \path (iK) edge[->] (iCFour);
  \path (iND) edge[->] (iRLob);
  \path (iKFour) edge[->] (iKW);
  \path (iKFour) edge[->] (iR);  
  \path (iNext) edge[->] (iT);
  \path (iCFour) edge[->] (iT);
  \path (iD) edge[->] (iT);
  \path (iSFour) edge[->] (iITriv);
  \path (iND) edge[->] (iRLob);
  \path (iMLax) edge[->] (iITriv);  
  \path (iMLax) edge[->] (iRCl);
  \path (iKM) edge[->] (iCVer);	
  \path (iRCl) edge[->] (iCVer);
  \path (iCTriv) edge[->] (iIncn);
  \path (iCVer) edge[->] (iIncn);
  \path (iRLob) edge[->] (iIVer);

  \path (imHC) edge[->] (iKM);

  \path (iNext) edge[->] (cK);
 
  \path (cD) edge[->] (cT);
  \path (cCFour) edge[->] (cT);
  \path (iSFour) edge[->] (cSFour);
  \path (iNKW) edge[->] (cKW);
  \path (cKW) edge[->] (iCVer);
  \path (cwGrz) edge[->] (cKW);
  \path (cwGrz) edge[->] (iRCl);
  \path (cwGrz) edge[->] (iRCl);
  \path (cKFour) edge[->] (cwGrz);
  \path (cGrz) edge[->] node[intcl,sloped,text width=1.5cm] {An isomorphic copy  of $[\iPC,\iCl]$} (iCTriv);


  \draw (iKW) -- (iRLjoin) -- (iR);
  \draw (iKW) -- (iNKWjoin) -- (iNext);
  \draw (iRLob) -- (iKMjoin) -- (iNKW);
  \draw (iR) -- (imHCjoin) -- (iNext);
  \draw (iR) -- (iLaxjoin) -- (iCFour);
  \draw (iKFour) -- (iSFjoin) -- (iT);
  \draw (iR) -- (iITjoin) -- (iT);
  \draw (imHC) -- (iMLax) -- (iLax);
  \draw (iKW) -- (iIVjoin) -- (iCFour);	
  \draw (iIVer) -- node[intcl,sloped,text width=1.2cm] {An isomorphic copy  of $[\iPC,\iCl]$} (iCVjoin) -- (iNext);	
  \draw (iITriv) -- node[intcl,sloped,text width=1.5cm] {An isomorphic copy  of $[\iPC,\iCl]$}(iCTjoin) -- (cK);	
  \draw (iND) -- (iIncnND) -- (iD);	
  \draw (iKW) -- (iIncnKW) -- (iD);	
  \draw (iND) -- (iCVNDjoin) -- (cK);	
  \draw (iR) -- (iRCljoin) -- (cK);	
  \draw (iCFour) -- (cCFour) -- (cK);	
  \draw (iD) -- (cD) -- (cK);	
  \draw (iKFour) -- (cKFour) -- (cK);	
  \draw (iT) -- (cT) -- (cK);	
  \draw (cT) -- (cSFour) -- (cKFour);	
  \draw (iKW) -- (cKWjoin) -- (cK);	
  \draw (iRCl) -- (iCTRjoin) -- (cD);	
  \draw (cwGrz) -- (cGrzjoin) -- (cSFour);

  \path (iKM) edge[isoline] node[isotext]{see \cite{KuznetsovM86:sl}} (cKW);
  \draw[isoline] (imHC) -- 
(cwGrz);

\end{tikzpicture}

\caption{\label{fig:intKfig} \scriptsize Relationships between systems introduced in Table \ref{tab:addaxioms}. Ordinary arrows correspond to inclusion between systems. Lines joined by $\lpl$ denote joins: a join of two logics is of course the smallest normal modal logic containing both. Remember however that \textbf{\textit{not all joins are shown}}, especially in case of joins with weak systems like $\iKFour$ or $\iND$. 
For more on $\cwGrz$, see \cite{Esakia06:jancl} and also \cite{Litak07:bsl} and further references therein. The observation that $\iCTriv$ and $\iCVer$ are the only maximal consistent logics in $\iK$ seems to have been made first by Vakarelov in \cite{Vakarelov81:sl}. Wolter \cite{Wolter97:sl} is one of most advanced studies of the lattice of extensions of $\iK$; in particular, it solves a number of problem posed in \cite{Vakarelov81:sl} and investigates the counterparts of $\Gamma^{\clog}$ above $\iK$ for any given $\Gamma$. Finally, Wolter and Zakharyaschev \cite{WolterZ97:al,WolterZ98:lw} show how to reinterpret this lattice as the lattice of extensions of a certain bimodal classical logic. This is in many ways the most general of Blok-Esakia type results as will be discussed later in Section \ref{sec:complibt}.}

\end{figure}
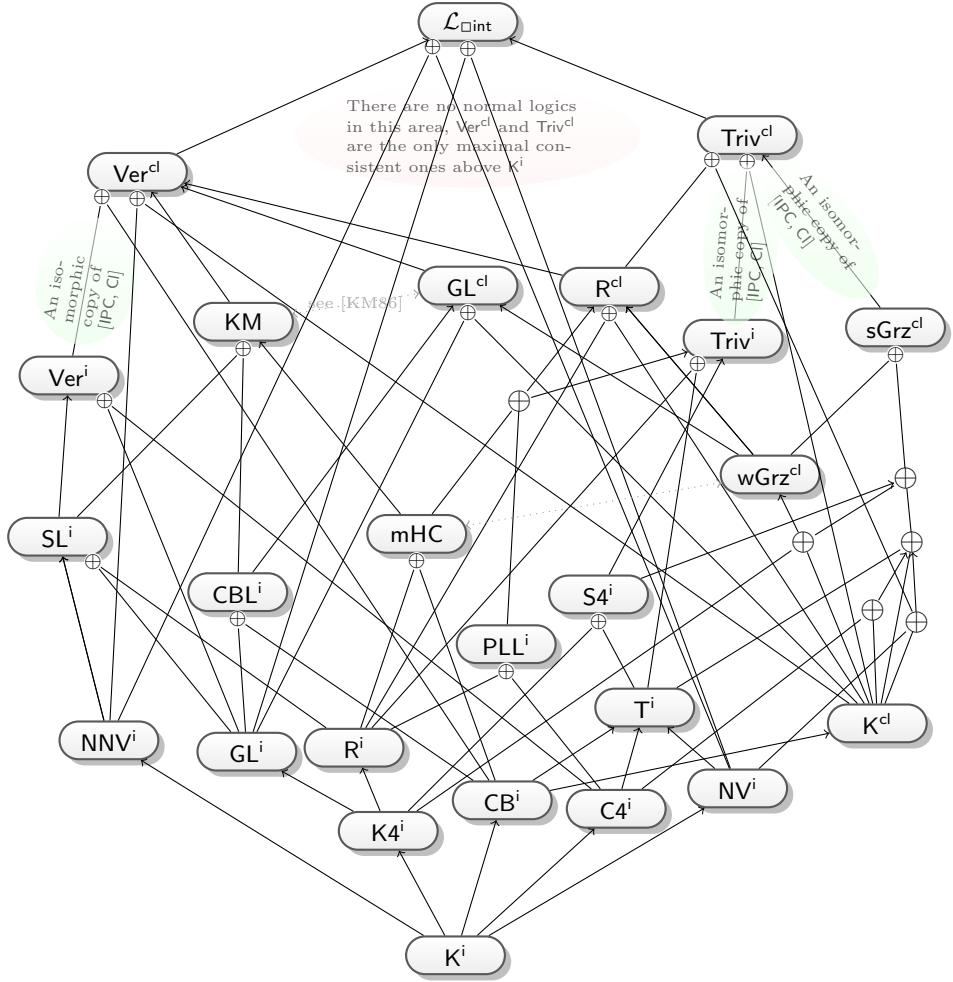



$\iK$ is the smallest intuitionistic normal modal logic, i.e., $\iPC \lpl \kax$.  $\iPC$---the intuitionistic propositional calculus---can be thus defined as the intersection of $\iK$ and $\iF$. Table \ref{tab:addaxioms}  provides a list of additional axioms and logics which will be of interest to us. 
 $\iKW$, $\iRLob$, $\mHC$ and $\iKM$ are of particular importance. As
 we see in Table \ref{tab:addaxioms} and  Figure \ref{fig:intKfig}, there are several ways in which these and related systems can be axiomatized.  
 In particular, 
 we have 

\begin{thm}[Ursini \cite{Ursini79:sl}, following Smorynski for the classical case] \label{th:lobax}
The following formalisms have the same set of theorems:
\begin{enumerate}
\item $\iKW$ as defined in Table \ref{tab:addaxioms}
\item $\iKFour \lpl \inferrule{\ibox A \to A }{A}$
\item $\iKFour \lpl \inferrule{\ibox A \to A }{\ibox A}$
\item $\iKFour \lpl \ufpax =\ribox (B \leftrightarrow A[B/p]) \to (\ribox (C \leftrightarrow A[C/p]) \to (B \leftrightarrow C))$
\item $\iKFour \lpl \henkax = \ribox (A \leftrightarrow \ibox A) \to A$
\end{enumerate}
\end{thm}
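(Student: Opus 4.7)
My plan is to organize the five items into a rule-based group $\{(1),(2),(3)\}$ and a fixpoint-based group $\{(4),(5)\}$, establishing the equivalences within each group and then bridging them.

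First I would verify that $\iK \lpl \kwax$ already derives the transitivity axiom $\kfax$, so that (1) coincides with $\iKFour \lpl \kwax$; the classical derivation of $4$ from L\"{o}b's axiom carries over intuitionistically without essential change. Within the rule group, $(2) \Leftrightarrow (3)$ is immediate: $(2) \Rightarrow (3)$ by necessitating the conclusion, and $(3) \Rightarrow (2)$ by $\mpr$ on $\ibox A$ against the hypothesis. The direction $(1) \Rightarrow (2)$ is the standard one-line argument: from $\vdash \ibox A \to A$, apply $\necr$, then $\kwax$ and $\mpr$ to obtain $\vdash \ibox A$, then $\mpr$ once more. The converse $(3) \Rightarrow (1)$ requires a small syntactic trick---a suitably chosen auxiliary formula $B$ for which $\vdash \ibox B \to B$ is derivable in $\iKFour$, after which the rule delivers $\vdash B$ and one unpacks the result to obtain $\kwax$. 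This is essentially Ursini's intuitionistic adaptation of the classical template.

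The substantive content lies in the bridge to the fixpoint group. In the forward direction, the Fixpoint Theorem for $\iKW$ developed in Section \ref{sec:fixtheorem} delivers both axioms: $\ufpax$ is literally the uniqueness clause of the theorem, while $\henkax$ is uniqueness specialized to the equation $p \leftrightarrow \ibox p$, whose canonical fixpoint is $\top$. In the reverse direction one must recover $\kwax$ without presupposing the full Fixpoint Theorem. For $(5) \Rightarrow (1)$, substituting $A := \ibox B \to B$ into $\henkax$ and unwinding propositionally should yield $\kwax$. For $(4) \Rightarrow (1)$, one checks that $\top$ and an appropriate L\"{o}b-like candidate both satisfy a common fixpoint equation, so that $\ufpax$ forces their propositional equivalence, from which $\kwax$ follows.

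The main obstacle I anticipate is the reverse direction of the bridge, especially $(5) \Rightarrow (1)$: finding the clever substitution into $\henkax$ and verifying at each step that no classical reasoning---double negation elimination, case splits on modal subformulas, and so on---sneaks into an ostensibly intuitionistic derivation. The proof is purely syntactic and Ursini's template is available, but each propositional manipulation has to be confirmed constructively valid inside $\iKFour$.
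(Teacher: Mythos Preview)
The paper does not supply its own proof of this theorem; it is stated with attribution to Ursini and then used as a black box (indeed, Section~\ref{sec:fixtheorem} opens by \emph{citing} it). So there is no in-paper argument to compare against, and your outline must stand on its own.

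Your plan for the rule-based group $\{(1),(2),(3)\}$ and for the reverse bridge $(4),(5)\Rightarrow(1)$ is sound in outline. The genuine gap is in your forward bridge $(1)\Rightarrow(4)$. You propose to invoke ``the Fixpoint Theorem for $\iKW$ developed in Section~\ref{sec:fixtheorem}'' and say that ``$\ufpax$ is literally the uniqueness clause of the theorem''. But in this paper's organization that is circular: Section~\ref{sec:fixtheorem} explicitly takes Theorem~\ref{th:lobax} as the \emph{source} of uniqueness, and the Diagonalization Theorem~\ref{theorem:diagonalization} actually proved there is an \emph{existence} result only. Existence of explicit fixed points does not by itself give you $\ufpax$; uniqueness is a separate fact with its own derivation. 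What you really need for $(1)\Rightarrow(4)$ is the direct Bernardi--Sambin--de~Jongh argument: under $\ribox(B\leftrightarrow A[B/p])$ and $\ribox(C\leftrightarrow A[C/p])$ with $p$ $\ibox$-guarded in $A$, first establish $\ibox(B\leftrightarrow C)\to(B\leftrightarrow C)$ via the replacement lemma for guarded occurrences, then use $\kfax$ to get $\ribox X\to\ibox\ribox X$ so that the hypotheses persist under $\ibox$, apply $\kwax$ to obtain $\ibox(B\leftrightarrow C)$, and conclude. The same remark applies to $(1)\Rightarrow(5)$: derive it directly (or as an instance of $(4)$ with $A:=\ibox p$, noting that $\top$ is a fixed point), rather than by appeal to a theorem whose uniqueness half is the very claim at issue.
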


A variable $p \in \iS$ is \emph{$\ibox$-guarded} in $A \in \mF$ if all its occurrences are within the scope of $\ibox$. This notion will be used repeatedly in connection with $\iKW$ and its extensions.

\begin{myremark}\label{rem:deriv}
Equalities in Table \ref{tab:addaxioms} and joins in Figure \ref{fig:intKfig} should be in fact treated as a large lemma on interderivability 
for a number of intuitionistic normal modal axioms. Of particular interest for this chapter are: the derivability of $\rax$  from $\rlax$, which mirrors derivability of $\kfax$ from $\kwax$; interderivability between $\gtax$ and $\rlax$; equivalence between either of these and the conjunction of $\kwax$ and $\rax$; two different ways of axiomatizing $\iNext$ by using $\nxtax$ and $\nxtaxb$. All these statements are made assuming $\iK$.
\end{myremark}

All normal 
$\fullsig$-logics as defined above---more precisely, their associated global consequence relations---are strongly finitely algebraizable; see standard references like
\cite{BlokP89:ams,Font06:sl,FontJP03a:sl,FontJP09:sl,Rasiowa74:aatnl}   for basic notions of algebraic logic and a more detailed discussion. Strong finite algebraizability also applies to normal logics in any fragment of $\fullsig$ containing $\to$. This is due to the fact that  all these systems are \emph{implicative logics} in the sense of Rasiowa \cite{Rasiowa74:aatnl}.  Given a normal logic $\Gamma$, I will call the corresponding class of algebras obtained via the algebraization process  \emph{$\Gamma$-algebras}, e.g., $\iK$-algebras, $\iKW$-algebras, $\cwGrz$-algebras etc. $\iK$-algebras are obviously special cases of \HAO s---Heyting algebras with operators---namely \HAO s with a single unary operator. Recall that an \emph{operator} on a Heyting algebra is an operation preserving $\top$ and finite meets. 
An operator on a Heyting algebra which turns it into a $\Gamma$-algebra will be called a \emph{$\Gamma$-operator}. 

Finally, recall that  for any algebra $\gH$, a \emph{$\gH$-polynomial} is a term in the similarity type of $\gH$ \emph{enriched with a separate constant for each element of $\gH$} \cite[Definition 13.3]{BurrisS81:cua}. In my notation for polynomials, I will not distinguish between an element of $\gH$ and its corresponding constant. Moreover, I will use elements of $\iS$ (i.e., propositional variables) as variables of polynomials, consistent with the general policy of blurring the distinction between logical formulas and algebraic terms. The notion of $\ibox$-guardedness for polynomials will be used in the same way as for formulas.


\subsection{Relational Semantics} \label{sec:relsem}

It is pretty obvious what should be the ingredients of a Kripke frame (a relational structure) for intuitionistic normal modal logic: it should be of the form $(W,\icc,\mcc)$, where $\icc$ is a poset order used to interpret intuitionistic connectives, whereas $\mcc$ is the modal accessibility relation used to interpret $\ibox$. Valuations of propositional variables should be defined as as $V: \iS \to \upset{W}$ and the inductive extension to intuitionistic connectives using $\icc$ is standard. However, once we set to provide a precise interpretation of modalities, there are choices to be made. Here is what \cite[Section 3.3]{Simpson94:phd}---one of most comprehensive overviews---says on the subject: 

\begin{quote}
One could take also
the usual satisfaction clauses for the modalities in modal models \dots 
However, an essential feature of intuitionistic models is the monotonicity lemma \dots
If the standard satisfaction clauses for the modalities are used
then the monotonicity lemma does not hold. There are two possible remedies.
\acls One is to modify the satisfaction clauses. This might be a reasonable thing to do,
for one might wish to use the partial order to give a more intuitionistic reading of
the modalities. \bcls The other remedy is to impose conditions on models that ensure
that the monotonicity lemma does hold.
\end{quote}

The \emph{monotonicity lemma} referred to in the above is of course the fact that the denotation of any intuitionistic formula in any Kripke model is upward closed. The two solutions lead to two alternative satisfaction definitions:

\begin{itemize}
\item $\ibox_\acls A \deq  \{ w \in W \mid \forall v,x.(w \icc v \text{ and } v \mcc x \text{ implies } x \in A) \}$ 
\item $\ibox_\bcls A \deq  \{ w \in W \mid \forall x.(w \mcc x \text{ implies } x \in A) \}$ 
\end{itemize}

By definition, $\ibox_\acls A \in \upset{W}$ for any $A \subseteq W$. This, however, is easily seen to fail in general for $\bcls$, even for some $A \in \upset{W}$. This is precisely while we need to impose additional conditions on the interaction of $\icc$ and $\mcc$:

\begin{lem}[\cite{BozicD84:sl}] \label{lem:mincond}
For any $(W,\icc,\mcc)$, the condition 
\begin{center}
for any $A \in \upset{W}$, $\ibox_\bcls A \in \upset{W}$ 
\end{center}
(i.e., $\upset{W}$ is a subalgebra of $\powset{W}$ also with respect to $\ibox_\bcls$) is satisfied iff the following condition holds:
\begin{equation} \label{eq:mincond}
\icc \cmp \mcc \subseteq \mcc \cmp \icc
\end{equation}
\end{lem}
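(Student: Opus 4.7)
The plan is to prove the equivalence by handling each direction separately and to note that the key ingredient for $(\Rightarrow)$ is picking an appropriate upward-closed ``witness'' set.

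For the direction $(\ref{eq:mincond}) \Rightarrow$ preservation of upward closure, I would argue by straightforward diagram chasing: assume $A \in \upset{W}$, fix $w \in \ibox_\bcls A$ and $w \icc v$, and take any $x$ with $v \mcc x$. Then $(w,x) \in \icc \cmp \mcc$, so by $(\ref{eq:mincond})$ there exists $y$ with $w \mcc y$ and $y \icc x$. Since $w \in \ibox_\bcls A$, we have $y \in A$; since $A$ is an upper set, $x \in A$. Hence $v \in \ibox_\bcls A$.

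For the converse, I would proceed by contraposition. Suppose there exist $w, v, x$ with $w \icc v$ and $v \mcc x$ but no $y$ with $w \mcc y$ and $y \icc x$. The witness to the failure of preservation of upward closedness is obtained by taking the smallest upper set containing the $\mcc$-successors of $w$, namely
\[
A \deq \upclose{\{y \in W \mid w \mcc y\}} = \{ z \in W \mid \exists y.\; w \mcc y \text{ and } y \icc z \}.
\]
By construction $A \in \upset{W}$, and $w \in \ibox_\bcls A$ trivially (every $\mcc$-successor of $w$ lies in $A$). On the other hand, the assumption that no $y$ with $w \mcc y$ satisfies $y \icc x$ means precisely $x \notin A$; combined with $v \mcc x$, this yields $v \notin \ibox_\bcls A$. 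Thus $\ibox_\bcls A$ fails to be upward closed, contradicting the hypothesis.

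Neither direction presents a serious obstacle; the only real choice to be made is the construction of the witness set $A$ in the converse direction, and the natural candidate (the upward closure of the $\mcc$-image of $w$) works without modification. The argument is essentially an instance of the standard ``minimal filter'' trick familiar from completeness proofs for intuitionistic Kripke semantics.
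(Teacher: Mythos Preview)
Your argument is correct in both directions; the interpretation of $\cmp$ as forward relational composition matches the paper's convention, and the witness $A = \upclose{\{y \mid w \mcc y\}}$ is exactly the right choice for the contrapositive. The paper itself does not supply a proof of this lemma---it merely states it with attribution to \cite{BozicD84:sl}---so there is nothing to compare against; your proof is the standard one and would serve perfectly well.
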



Curiously, condition \refeq{eq:mincond}, even though isolated in \cite{BozicD84:sl} and discussed in references such as \cite{Simpson94:phd}, does not seem to appear too often in intuitionistic modal literature. Most references, even those which adopt interpretation \bcls, assume conditions stronger than \refeq{eq:mincond}. Goldblatt \cite{Goldblatt81:mlq} requires 
\begin{equation} \label{eq:goldblattcond}
\icc \cmp \mcc \subseteq \mcc
\end{equation}
but in most of intuitionistic modal logic literature (see, e.g., \cite{Sotirov84:ml,WolterZ97:al,WolterZ98:lw}) we have still stronger
\begin{equation} \label{eq:wzcond}
\icc \cmp \mcc \cmp \icc = \mcc
\end{equation}

Obviously, \refeq{eq:wzcond} implies \refeq{eq:goldblattcond} and \refeq{eq:goldblattcond} implies \refeq{eq:mincond}, as $\icc$ is a partial order. But as noted in \cite{BozicD84:sl,Goldblatt81:mlq}
\begin{equation} \label{eq:upcond}
\mcc \cmp \icc = \mcc
\end{equation}

(i.e., for any $x$, $\mcc^{-1}[x]$ is upward closed) cannot be forced by any axiom, as any frame $(W,\icc,\mcc)$ can be shown to be semantically equivalent to $(W,\icc,\mcc')$ satisfying \refeq{eq:upcond} by putting $\mcc' = \mcc \cmp \icc$. 

In presence of \refeq{eq:upcond}, \refeq{eq:mincond} implies \refeq{eq:wzcond} and hence all these conditions become equivalent. Thus, we can follow the others and assume \refeq{eq:wzcond}, but from a logical point view, only \refeq{eq:mincond} would be necessary.

\begin{myremark}\label{rem:wzcond}
A justification for adopting the strongest condition \refeq{eq:wzcond}
is provided by completeness/canonicity result, for which the earliest
references seem to be \cite{BozicD84:sl,Sotirov84:ml}: when we define
$\mcc$ between prime filters of a Heyting algebra with a normal
$\ibox$ in a standard way, the resulting Kripke frame will satisfy
\refeq{eq:wzcond}. This provides a justification of
\refeq{eq:wzcond} and a bridge between algebraic and Kripke semantics,
i.e., a completeness result for $\iK$ wrt Kripke frames satisfying \refeq{eq:wzcond} with $\ibox$ read as dictated by \bcls.
\end{myremark}

\begin{myremark}
What is the connection with \acls-semantics? It is easy to see that $\acls$-reading of $\ibox$ wrt $\mcc$ yields the same operator on $\upset{W}$ as $\bcls$-reading wrt $\mcc$ modified to satisfy \refeq{eq:goldblattcond}: that is, the closure of $\mcc$ under pre-composition with $\icc$. Note here that in order to obtain a relation satisfying \refeq{eq:wzcond}, one needs to close $\mcc$ under both pre- and post-composition with $\icc$.
\end{myremark}

Thus, as stated in \cite{Simpson94:phd}

\begin{quote}
it turns out that
both semantics induce the same intuitionistic modal logic. \dots 
Only when the $\Diamond$ connective is added do the differences in the semantics become apparent.
\end{quote}

The reason why \acls is often seen in the literature (see
\cite{Simpson94:phd} and references therein and also, e.g,
\cite{AlechinaMdPR01:csl,PlotkinS86:tark}) is that it becomes pretty
natural when intuitionistic modal logic is considered as a fragment of
intuitionistic predicate logic---and, in close connection, when one
wants to interpret $\Diamond$ using the same relation as for
$\ibox$. However, this is not what we are interested in here. Thus, in
what follows, I stick to \bcls reading of modality (henceforth $\bcls$
is dropped as a subscript) and impose condition \refeq{eq:wzcond} on all
frames. This is justified not only by Remark \ref{rem:wzcond} above,
but also by the fact that intuitionistic modal semantics in
practically all references relevant for further discussion satisfied
\refeq{eq:wzcond} anyway. Let us summarize the discussion above with the
following definition:

\emph{A Kripke frame} or \emph{a relational structure} 
is of the form $(W,\icc,\mcc)$, where
\begin{itemize}
\item $\icc$ is a poset order used to interpret intuitionistic connectives
\item $\mcc$ is the modal accessibility relation used to interpret $\ibox$ and
\item $\icc \cmp \mcc \subseteq \mcc$, where $\cmp$  is relation composition
\end{itemize}

A \emph{valuation} 
 is a mapping $V: \iS \to \upset{W}$, where $ \upset{W}$ is the Heyting algebra of upward closed sets of $W$ and the inductive extension to $\mF\iS$ 
 is standard.


 \begin{myremark}\label{rem:brouwerian}
 When at least one of lattice connectives is removed, the situation at first sight appears more complicated. While the papers proving the separation property of $\iPC$ \cite{Horn62:jsl,Milne10:rsl,Prawitz65:nd} showed that its reducts 
  remain complete wrt relational semantics, a Stone-type representation theorem for arbitrary algebras 
 would seem more problematic; the one for Heyting algebras 
 relies crucially on the fact that they have distributive lattice reducts. However, a series of papers beginning with \cite{Kohler81:ams} and finishing with \cite{BezhanishviliJ12:acs} established  that Brouwerian semilattices enjoy in fact Stone-, Priestley- and Esakia-type dualities. 
\end{myremark}

\begin{myremark}
It is worth discussing whether we can reconstruct this relational
semantics from coalgebraic point of view, just like in the classical
case. The relationship between the category of Heyting algebras (\HA)
and the category of posets with bounded morphisms (say, \PABM) is the
same as in the case of boolean algebras and sets (dual adjunction,
restricting to equivalence in the finite case). Kripke frames for
classical modal logic can be thought as coalgebras for covariant
powerset functor. The counterpart of that functor in \PABM is the
covariant upset functor $\upset{\cdot}$, which sends $(W,\icc)$ to
$(\upset{W},\supseteq)$ and $\upset{f}(A) = f[A]$. Then if we attempt
to define relational structures as \PABM-morphisms, i.e., bounded
morphisms from  $(W,\icc)$ to $(\upset{W},\supseteq)$, the very choice
of codomain of this morphism forces \refeq{eq:upcond} and the forth
condition of bounded morphism becomes precisely
\refeq{eq:goldblattcond}, so altogether we get just
\refeq{eq:wzcond}. However, the back condition seems to impose a
rather special condition on $\mcc$: 
 for any $x \in W$ and any upward closed $A \subseteq
\mcc^{-1}[x]$, there is $y$ s.t. $x \icc y$ and $\mcc^{-1}[y] =
A$. There are many natural examples of structures satisfying
\refeq{eq:wzcond} where this condition would fail and in general I do
not know how many logics would be possibly complete with respect to
frames of this kind. It is not necessarily harmless for some of the
axioms considered here. However, when we can forget about the back condition for morphisms (i.e., in case of modalities over distributive lattices rather than over Heyting algebras), \refeq{eq:wzcond} emerges as the natural basic semantic condition also from the coalgebraic point of view.
\end{myremark}

Table \ref{tab:semcond} lists semantic conditions corresponding to  modal axioms. For $\iKW$ in particular, we have: 

\begin{thm}[\cite{Ursini79:sl}] \label{th:ursini}
A structure $(W,\icc,\mcc)$ 
validates $\iKW$ iff 
\begin{itemize}
\item $\mcc$ is transitive, i.e., $\mcc \cmp \mcc \, \subseteq \, \mcc$ 
and
\item $\mcc$ is \emph{$\upset{W}$-Noetherian}: for any $A \in \upset{W}$, if $A \neq W$, then there is $w \in \ibox A - A$, i.e., a $\ibox$-maximal non-$A$ point  
\end{itemize}
\end{thm}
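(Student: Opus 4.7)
The plan is to prove both implications by combining the standard Kripke correspondence arguments for transitivity with a more careful treatment of Noetherianity that respects the $\icc$-upward closure discipline of our frames.

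For the $(\Rightarrow)$ direction, transitivity of $\mcc$ follows from validity of $\kfax$---a theorem of $\iKW$ by Theorem \ref{th:lobax}---in the usual way. For Noetherianity, I would proceed by contraposition: suppose $A \in \upset{W}$ satisfies $\ibox A \subseteq A$, pick any $p \in \iS$, and consider the valuation with $V(p) = A$. The semantic clause for intuitionistic implication then makes $\ibox p \to p$ true everywhere, so $\ibox(\ibox p \to p)$ is true everywhere too (since $\ibox W = W$ trivially). Validity of $\kwax$ forces $\ibox A = W$, whence $A \supseteq \ibox A = W$, contradicting $A \neq W$.

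For the $(\Leftarrow)$ direction, transitivity of $\mcc$ directly yields $\kfax$. The crux is $\ibox(\ibox A \to A) \subseteq \ibox A$ for each $A \in \upset{W}$. Assume toward a contradiction $w \in \ibox(\ibox A \to A)$ but $w \notin \ibox A$, and pick $v_0$ with $w \mcc v_0$ and $v_0 \notin A$. Using $\mcc \cmp \icc = \mcc$ from \refeq{eq:wzcond}, the hypothesis at $w$ simplifies to: every $u$ with $w \mcc u$ satisfies $u \in \ibox A \Rightarrow u \in A$. Contrapositively, any $\mcc$-successor $u$ of $w$ lying outside $A$ admits a further $\mcc$-successor $u' \notin A$, and transitivity gives $w \mcc u'$. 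Iterating produces an infinite chain $v_0 \mcc v_1 \mcc v_2 \cdots$ of $\mcc$-successors of $w$ entirely outside $A$.

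To derive a contradiction from Noetherianity, I set $B := \icc^{-1}[\{v_i \mid i \in \nats\}]$, which is $\icc$-downward closed, so $A' := W \setminus B$ lies in $\upset{W}$; since $v_0 \in B$, one has $A' \neq W$. Noetherianity then yields $w' \in \ibox A' \setminus A'$, so $w' \in B$ while all $\mcc$-successors of $w'$ lie in $A'$. But $w' \in B$ means $w' \icc v_i$ for some $i$, and then \refeq{eq:goldblattcond} produces $w' \mcc v_{i+1}$, with $v_{i+1} \in B$, the desired contradiction. The main obstacle is precisely this last manoeuvre: Noetherianity is stated relative to $\icc$-upsets rather than to arbitrary subsets of $W$, so the classical infinite-chain argument must be upgraded by passing to the $\icc$-downward closure of the chain and transporting $\mcc$-edges back through \refeq{eq:goldblattcond}.
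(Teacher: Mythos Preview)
Your proof is correct. Both directions go through as written; in particular, the downward-closure manoeuvre in the $(\Leftarrow)$ direction is exactly what is needed to turn an infinite $\mcc$-chain into an upward-closed witness to the failure of Noetherianity, and your use of \refeq{eq:goldblattcond} to transport the $\mcc$-edge back to $w'$ is the right move.

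However, your $(\Leftarrow)$ argument differs substantially from the paper's (which sketches only this direction). Rather than building an infinite chain and then engineering an upset from it, the paper takes $B$ to be the extension of the failing instance of $\kwax$ itself and verifies directly that $\ibox B \subseteq B$. Since $B$ lies in $\upset{W}$ automatically and $B \neq W$ by assumption, this immediately contradicts Noetherianity with no intermediate construction. The trade-off: the paper's route is shorter and---notably---avoids the tacit appeal to Dependent Choice in your chain construction. Your route, by contrast, follows the familiar classical argument for $\cKW$ and makes the geometric content (an infinite $\mcc$-ascending chain outside $A$) explicit, at the cost of the extra work you correctly flagged in your final paragraph.
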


\begin{proof}[Proof (sketch)]
We show only the ''if'' direction.  Assuming $\kwax$ fails under a valuation $V$, take $B$ to be the extension of $\kwax$ under $V$ and 
show  that $\ibox B \subseteq B$. This means that $W-B$ witnesses the failure of 
Noetherianity.
\end{proof}

Note that Theorem \ref{th:ursini} together with the $\rax$ row
of Table \ref{tab:semcond}  provides also the condition
for $\iSL$: as already stated in Table\ref{tab:addaxioms} and Figure
  \ref{fig:intKfig}, $\iSL$ is obtained as $\iKW \lpl \rax$, so one
  simply takes the conjunction of the two conditions.

\begin{table}
\footnotesize
\hrule

\caption{\label{tab:semcond} Semantic counterparts of intuitionistic modal axioms. See, e.g., \cite{Dosen85:sl,Sotirov84:ml,WolterZ97:al,WolterZ98:lw} for more. $\diag \deq \{(w,w) \mid w \in W \}$ and $\iccs \deq \icc - \diag$, i.e., it is the strict version of the intuitionistic poset order. For $R \in \{\mcc,\icc\}$, $V\subseteq W$, $\reluc{V}{R} \deq \{w \in W \mid \exists v \in V.vRw\}$ and 
$\reldc{V}{R} \deq \{w \in W \mid \exists v \in V.wRv\}$}
\vspace{1mm}
\begin{tabular}{>{$}l<{$}@{\hspace{0.2cm}}>{$}l<{$}@{\hspace{0.3cm}}>{$}l<{$}@{\hspace{0.2cm}}>{$}l<{$}@{\hspace{0.3cm}}>{$}l<{$}@{\hspace{0.2cm}}>{$}l<{$}
} 
\text{Axiom} & \text{Semantic condition} & \text{Axiom} & \text{Semantic condition} & \text{Axiom} & \text{Semantic condition} 
\\[\tbskip] 
\clax & \icc =  \diag&
\kfax &  \mcc \cmp \mcc \, \subseteq \, \mcc 
&
\cfax  & \mcc  \, \subseteq \, \mcc \cmp \mcc \\[\tbskip] 
\rax   
& \mcc \, \subseteq \, \icc  
&
\tax &  \icc \, \subseteq \, \mcc  &
\kwax & \text{see Theorem \ref{th:ursini}} 
 \\[\tbskip] 
\nxtax & \iccs \, \subseteq \, \mcc & \verax & \mcc = \emptyset & \dax & \reldc{W}{\mcc} = W \\[\tbskip] \ndax & \reldc{(W-\reldc{W}{\mcc})}{\icc} = W &&&&
\\[\tbskip]
\end{tabular}

\hrule

\end{table}

\cite{Ursini79:sl} provides an interesting motivation for this semantics of $\iKW$ in terms of \emph{projects} and \emph{streamlines} in a research. It also provides other important results, such as the finite model property and decidability. 




\section{The Fixpoint Theorem} \label{sec:fixtheorem}

Theorem \ref{th:lobax} above gave us several equivalent axiomatizations of $\iKW$. 
In particular, $\ufpax$  forces \emph{uniqueness of fixed points}. 
\begin{defn}
Let $B$ be a formula of $\mF$ (its denotation in a given algebra and under a given valuation). $B$ is a \emph{fixed point} of (the term function associated with) $A \in \mF$ relative to $p$ in a given normal logic $\Gamma$ (here always an extension of $\iKW$) if 
$B \leftrightarrow A[B/p] \in \Gamma$ and 
$p$ does not occur in $B$. 
\end{defn}

According to \cite{Smorynski79:sl}, 
the fact that $\ufpax$ holds in $\cKW$ was discovered independently by Bernardi, Sambin and de Jongh. We know thus that, surprisingly, in $\iKW$ a syntactic fixed point of an expression is unique up to equivalence whenever it exists; same applies to all of its extensions, such as $\iLob$ or $\iKM$. 
But do they exist at all? An even more amazing fact is that they not only do exist---under the assumption of $\ibox$-guardedness on $p$---but are effectively computable. This is guaranteed by the following algebraic (or propositional, if one prefers) variant of G\"{o}del's Diagonalization Lemma.  Sambin \cite{Sambin76:sl} proved it for $\iKW$ itself and de Jongh proved it for $\cKW$ building on an earlier result by Smorynski, another proof being found soon afterwards by Boolos; the reader is referred to \cite{Boolos93:lop,BoolosS91:sl,Muravitsky:here,Smorynski79:sl} for more on its history and the connection with G\"{o}del's result:

\begin{thm}[Diagonalization]
\label{theorem:diagonalization}
For any $A$ and $p$, there exists a constructively obtained formula $\diagf{A}{p}$ s.t. 
\begin{enumerate}
\item $\diagf{A}{p} \leftrightarrow B \in \iKW$, where $B$ is obtained from $A$ by replacing all occurrences of $p$ under $\ibox$ by $\diagf{A}{p}$
\item $A$ and $\diagf{A}{p}$ have provably the same fixed points with respect to $p$, that is, for any $C$ not containing $p$ we have 
$$\ribox(C \leftrightarrow A[C/p]) \leftrightarrow \ribox(C \leftrightarrow \diagf{A}{p}[C/p]) \in \iKW$$
\end{enumerate}
\end{thm}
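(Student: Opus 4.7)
The plan is to reduce to the ``$p$ is $\ibox$-guarded'' case and then invoke the standard Sambin-de Jongh fixed-point construction for $\iKW$. Introduce a fresh propositional variable $r$ and define $A^{*}$ by replacing every occurrence of $p$ in $A$ that lies in the scope of some $\ibox$ with $r$. Then $r$ is $\ibox$-guarded in $A^{*}$ by construction, and for any formula $H$ not containing $r$, $A^{*}[H/r]$ equals, syntactically, the formula $B$ of the theorem statement with $\diagf{A}{p}$ replaced by $H$. Thus Part 1 reduces to producing an $r$-free formula $H$ with $H \leftrightarrow A^{*}[H/r] \in \iKW$, and setting $\diagf{A}{p} \deq H$.

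Construct $H$ by induction on the complexity of $A^{*}$ viewed as a polynomial in $r$. The base case ($r$ absent from $A^{*}$) is trivial: take $H \deq A^{*}$. The compound cases for Boolean connectives are handled by composing inductively obtained fixpoints and simplifying, using the $\iK$-normality of $\ibox$ together with intuitionistic propositional reasoning. The crucial case is when the outermost connective above a free occurrence of $r$ is $\ibox$: here Sambin's construction yields an $r$-free $H$ whose verification that $H \leftrightarrow A^{*}[H/r]$ relies essentially on the L\"{o}b axiom $\kwax$, together with necessitation and the $\iKFour$-axioms for substituting equivalents under $\ibox$. This $\ibox$-case is the main technical obstacle, and is precisely where $\kwax$ cannot be replaced by the weaker $\kfax$.

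For Part 2, substitute $C$ for $p$ in the Part 1 equivalence using the rule of substitution to obtain $\diagf{A}{p}[C/p] \leftrightarrow A^{*}[C/p]\bigl[\diagf{A}{p}[C/p]/r\bigr] \in \iKW$; equivalently, $\diagf{A}{p}[C/p]$ is a fixed point in $r$ of the $\ibox$-guarded polynomial $A^{*}[C/p]$. Since $C$ does not contain $p$, one checks that $A^{*}[C/p][C/r] = A[C/p]$, so under the assumption $\ribox(C \leftrightarrow A[C/p])$ the formula $C$ is also a fixed point in $r$ of $A^{*}[C/p]$. The forward direction of Part 2 now follows by $\ufpax$---available in $\iKW$ by Theorem \ref{th:lobax}---which forces any two fixed points of a $\ibox$-guarded polynomial to be provably equivalent; an analogous argument via necessitation and $\iK$-normality upgrades the resulting $\leftrightarrow$ to the required $\ribox$. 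The converse direction is immediate by substitution of equivalents into the (substituted) Part 1 equivalence and does not require $\ufpax$.
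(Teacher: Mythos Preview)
Your fresh-variable reduction to the $\ibox$-guarded case is clean, and your derivation of Part 2 from Part 1 via $\ufpax$ (with $\kfax$ to push $\ribox$ under the box) is correct; the paper's sketch does not treat Part 2 at all. The gap is in your Part 1 construction: the structural induction you describe does not work. ``Boolean compound cases handled by composing inductively obtained fixpoints'' is false in general---if $A^{*} = A_1 \wedge A_2$ with individual $r$-fixpoints $H_1$, $H_2$, then $H_1 \wedge H_2$ need not be a fixpoint of $A^{*}$, since what is required is $H \leftrightarrow A_1[H/r] \wedge A_2[H/r]$ for a \emph{single} $H$. Concretely, take $A_1 = \ibox r$ and $A_2 = \neg\ibox r$: the respective fixpoints are $\top$ and $\neg\ibox\bot$, but their conjunction $\neg\ibox\bot$ is not equivalent in $\iKW$ to $\ibox(\neg\ibox\bot) \wedge \neg\ibox(\neg\ibox\bot)$, which simplifies (using $\ibox\neg\ibox\bot \leftrightarrow \ibox\bot$ from $\kwax$) to $\bot$. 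The actual fixpoint of $A_1 \wedge A_2$ is $\bot$, not recoverable by ``composing'' the two. Your $\ibox$-case is also circular as written: ``Sambin's construction'' \emph{is} the fixed-point theorem for guarded formulas, so it cannot be invoked as a subroutine inside an induction meant to establish precisely that result.

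The paper's (Sambin's) induction is on a different parameter: the number $k$ of \emph{outermost} boxed subformulas, not the Boolean structure. One writes $A = B(\ibox C_1, \dots, \ibox C_k, p, \vars{q})$ with $B$ non-modal; for the inductive step, each $A_i$ is obtained by replacing the single subformula $\ibox C_i$ with $\top$ (so $A_i$ has only $k-1$ outermost boxes), one computes $\diagf{A_i}{p}$ recursively, and then sets $\diagf{A}{p} \deq B(\ibox C_1[\diagf{A_1}{p}/p], \dots, \ibox C_k[\diagf{A_k}{p}/p], p, \vars{q})$. The $\top$-for-$\ibox C_i$ substitution is the missing idea that makes the count drop; note also that this handles the general (non-guarded) $p$ directly, so your fresh-variable detour, while valid, is not needed.
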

 
Clearly, if $p$ is $\ibox$-guarded in $A$, then
 $B$ in the first clause is precisely $A[\diagf{A}{p}/p]$ and 
 $\diagf{A}{p}$ does not contain $p$, hence being trivially its own fixed point wrt $p$. 
Thus, in such a situation $\diagf{A}{p}$ itself is also the unique fixed-point of $A$ with respect to $p$! 

\begin{proof}[Proof of Theorem \ref{theorem:diagonalization}, sketch]
We only give a sketch of how $\diagf{A}{p}$ is constructed. Any formula $A(p,\vars{q}) \in \mF$ in variables $p,\vars{q} \in \iS$ can be written as $B(\ibox C_1(p,\vars{q}), \dots,\ibox C_k(p,\vars{q}),p,\vars{q})$, where $B \in \iF$ (i.e., is a formula without $\ibox$) and $\overline{C} \in \mF$. Clearly, if $k=0$, then $A$ itself belongs to $\iF$ and, in particular, there are no occurrences of $p$ under $\ibox$. Hence we can take $\diagf{A}{p}$ to be $A$ itself. Otherwise, the proof can be conducted by induction on $k$, as we already have the base step. 
For any $i \leq k$, set 
$$A_i \deq B(\ibox C_1(p,\vars{q}), \dots, \ibox C_{i-1}(p,\vars{q}), \top, \ibox C_{i+1}(p,\vars{q}), \dots, \ibox C_k(p,\vars{q}),p,\vars{q}).$$
By definition, the inductive hypothesis applies to $A_i$. Now we set
\[
\hspace{2cm} \diagf{A}{p} \deq B(\ibox C_1(\diagf{A_1}{p}/p,\vars{q}), \dots, \ibox C_k(\diagf{A_k}{p}/p,\vars{q}),p,\vars{q}).
\hspace{1.8cm} 
\]
\end{proof}

\begin{myremark}\label{rem:kinderg}
In fact, extensions of $\iRLob$---in particular $\iKM$---allow a much simpler proof of Theorem \ref{theorem:diagonalization}  and a much simpler algorithm for computing these fixpoints: it is enough to substitute $\top$ for $p$. This follows already from observations made by Smorynski in \cite[Lemma 2.3]{Smorynski79:sl} and has been discussed explicitly in \cite[Propositions 4.2--4.6]{deJonghV95:emb}. De Jongh and Visser describe $\iRLob$ as \emph{a kind of Kindergarten Theory in which all the well-known syntactical results
of Provability Logic have extremely simple versions}. 
\end{myremark}


\begin{myremark}\label{}
It is known that at least in the case of $\cKW$ a non-constructive and non-explict form of Theorem \ref{theorem:diagonalization}  can be obtained already from uniqueness of fixed-points combined with the Beth definability theorem for $\cKW$, see, e.g.,  \cite{Boolos93:lop,Smorynski79:sl} for more information. However, as should be clear from the discussion below, the very fact that fixed points are obtained explicitly and constructively seems an advantage not to be given up lightly.
\end{myremark}

Theorem \ref{theorem:diagonalization} has a nice algebraic corollary. I present it here as a  more general version of \cite[Proposition 3]{Esakia06:jancl}.

\begin{thm} \label{th:algfixp}
A $\iKFour$-algebra $\gH$ is a $\iKW$-algebra iff every $\gH$-polynomial $t(p)$ in one $\ibox$-guarded variable $p \in \iS$ 
 has a fixed point. 
\end{thm}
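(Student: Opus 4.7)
The plan is to treat the two directions separately, with the forward direction obtained essentially ``for free'' from Theorem \ref{theorem:diagonalization} and the converse hinging on a single clever choice of $\ibox$-guarded polynomial.

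For $(\Rightarrow)$: let $\gH$ be a $\iKW$-algebra and let $t(p)$ be a $\gH$-polynomial in which $p$ is $\ibox$-guarded. I would first lift $t$ to a formula $A(p,\vars{q}) \in \mF$ by replacing each element-constant in $t$ with a fresh propositional variable from $\vars{q}$; $p$ remains $\ibox$-guarded in $A$. Theorem \ref{theorem:diagonalization} then provides an $\mF$-formula $\diagf{A}{p}$ satisfying $\diagf{A}{p} \leftrightarrow A[\diagf{A}{p}/p] \in \iKW$, and since $p$ does not occur in $\diagf{A}{p}$ (by $\ibox$-guardedness and the remark immediately following Theorem \ref{theorem:diagonalization}), $\diagf{A}{p}$ is itself its own fixed point with respect to $p$. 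Evaluating $\vars{q}$ at the element-constants they replaced gives a fixed point of $t$ in $\gH$.

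For $(\Leftarrow)$: fix an arbitrary $a \in \gH$; the goal is to verify $\ibox(\ibox a \to a) \leq \ibox a$. The key move is to apply the fixed-point assumption to the polynomial $t(p) \deq \ibox p \to a$, in which $p$ is visibly $\ibox$-guarded. Let $b \in \gH$ satisfy $b = \ibox b \to a$. I would then argue:
\begin{itemize}
\item Since $a \leq \ibox b \to a = b$, we get $\ibox a \leq \ibox b$.
\item Applying $\ibox$ to $b = \ibox b \to a$ and invoking $\kax$ yields $\ibox b = \ibox(\ibox b \to a) \leq \ibox\ibox b \to \ibox a$, i.e., $\ibox b \wedge \ibox\ibox b \leq \ibox a$. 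Because $\gH$ is a $\iKFour$-algebra, $\ibox b \leq \ibox\ibox b$, so in fact $\ibox b \leq \ibox a$.
\end{itemize}
Combining these gives $\ibox a = \ibox b$, whence $\ibox a \to a = \ibox b \to a = b$ and finally $\ibox(\ibox a \to a) = \ibox b = \ibox a$, which is in fact stronger than the required $\kwax$-inequality.

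The only genuine subtlety I anticipate is conceptual bookkeeping rather than content: the Diagonalization Theorem is stated syntactically for $\mF$-formulas, whereas the statement of the present theorem quantifies over $\gH$-polynomials whose ``variables'' include genuine element-constants from $\gH$. I would therefore begin the argument by spelling out that the construction of $\diagf{A}{p}$ is uniform in its free variables $\vars{q}$, so that once $A$ is read as a polynomial by evaluating $\vars{q}$ at the corresponding constants, the fixed-point witness is simply the evaluation of $\diagf{A}{p}$---a point worth making because $\gH$-polynomials are the natural algebraic objects here, but the existing proof technology lives on the formula side.
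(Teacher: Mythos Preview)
Your proposal is correct and follows essentially the same approach as the paper: the forward direction is the Diagonalization Theorem (with your added remark making the formula/polynomial translation explicit), and the converse uses the same key polynomial $t(p)=\ibox p\to a$ to reach $\ibox a=\ibox b$ and hence $\ibox(\ibox a\to a)=\ibox a$. Your derivation of $\ibox a=\ibox b$ via the two inequalities $\ibox a\leq\ibox b$ and $\ibox b\leq\ibox a$ is a minor reorganization of the paper's route through $b\wedge\ibox b=a\wedge\ibox a$, but the substance is identical.
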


\newcommand{\qhm}{\iml_{\hml}}

\begin{proof}
\emph{The ``only if'' direction.} This is a direct corollary of Theorem \ref{theorem:diagonalization}.

\emph{The ``if'' direction.} Given any $\hml \in \gH$, consider the polynomial $t(p) = \ibox p \to \hml$. As $p$ is $\ibox$-guarded in it, it has a fixed point $\qhm \in \gH$; that is, $\qhm = \ibox \qhm \to \hml$. By the fact that $\to$ is conjugate (or residual) to $\wedge$, one half of this equality is equivalent to $\qhm\wedge\ibox\qhm \leq \hml$. On the other hand, $\hml \leq \ibox \qhm \to \hml = \qhm$ by general implication laws. Taken together, these two inequalities imply $\qhm \wedge \ibox\qhm = \hml \wedge\ibox\hml$: the $\leq$ direction from the first inequality, normality and $\kfax$, the $\geq$ direction from the second inequality and monotonicity of $\ibox$. Using normality again, we get $\ibox\qhm \wedge \ibox\ibox\qhm = \ibox\hml \wedge\ibox\ibox\hml$ and using $\kfax$ again, we arrive at $\ibox\qhm = \ibox\hml$. Then we get $\ibox(\ibox\hml \to \hml) = \ibox(\ibox\qhm \to \hml) = \ibox\qhm = \ibox\hml$. As $\hml \in \gH$ was chosen arbitrarily, we have that $\gH$ is a $\iKW$-algebra.
\end{proof}

There is an analogy between the above result and alternative axiomatizations for $\iKW$ presented in \cite{Ursini79:sl}.

\begin{cor} \label{cor:leofix} 
\mbox{\newline}
\begin{itemize}
\item A $\iR$-algebra $\gH$ is a $\iRLob$-algebra iff every $\gH$-polynomial $t(p)$ in one $\ibox$-guarded variable $p \in \iS$ 
 has a fixed point. 
\item \cite[Proposition 3]{Esakia06:jancl} A $\mHC$-algebra $\gH$ is a $\KM$-algebra iff every $\gH$-polynomial $t(p)$ in one $\ibox$-guarded variable $p \in \iS$ 
 has a fixed point. 
\end{itemize}
\end{cor}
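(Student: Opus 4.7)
My plan is to reduce both statements to Theorem~\ref{th:algfixp} by observing how the logics $\iRLob$ and $\iKM$ decompose relative to $\iKW$. From Figure~\ref{fig:intKfig} we read off $\iRLob = \iKW \dpl \iR$ and $\iKM = \iRLob \dpl \iNext = (\iKW\dpl \iR)\dpl \iNext$, so an algebra validates one of these systems precisely when it is simultaneously a $\iKW$-algebra and satisfies the additional axioms ($\rax$, or $\rax$ and $\nxtax$). The key auxiliary observation is that $\rax$ already implies $\kfax$ (substitute $\ibox A$ for $A$ in $A \to \ibox A$), so every $\iR$-algebra and every $\mHC$-algebra is a $\iKFour$-algebra. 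This will let me apply Theorem~\ref{th:algfixp} without further hypothesis.

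For the first bullet, the ``only if'' direction is immediate: every $\iRLob$-algebra is a $\iKW$-algebra, and Theorem~\ref{th:algfixp} (or directly Theorem~\ref{theorem:diagonalization}) already yields the fixed points. For the ``if'' direction, take an $\iR$-algebra $\gH$ in which every $\ibox$-guarded polynomial admits a fixed point. By the preceding paragraph $\gH$ is a $\iKFour$-algebra, so Theorem~\ref{th:algfixp} makes it a $\iKW$-algebra, hence a $\iKW\dpl\iR = \iRLob$-algebra. (If one prefers to avoid invoking the decomposition of $\iRLob$ and work with $\gtax$ directly, the end of the proof of Theorem~\ref{th:algfixp} yields $\ibox(\ibox\hml\to\hml) = \ibox\hml$; composing with $\rax$ applied to $\ibox\hml\to\hml$ gives $\ibox\hml\to\hml \leq \ibox\hml$, which is $\gtax$.)

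For the second bullet, the ``only if'' direction is again a consequence of Theorem~\ref{theorem:diagonalization} since $\iKM\supseteq\iKW$. For the ``if'' direction, a $\mHC$-algebra $\gH$ is both an $\iR$-algebra and an $\iNext$-algebra, so by the first bullet, fixed-point availability upgrades $\gH$ to an $\iRLob$-algebra; combining with validity of $\nxtax$ we obtain an $\iRLob\dpl\iNext = \iKM$-algebra, as required.

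There is no real obstacle in this proof---the work has been done in Theorem~\ref{th:algfixp}, and what remains is purely a matter of bookkeeping modal axioms. The only subtlety worth spelling out is the derivation $\rax \Rightarrow \kfax$, which is what allows Theorem~\ref{th:algfixp} to be applied without presupposing $\kfax$ in the hypotheses about $\iR$- and $\mHC$-algebras.
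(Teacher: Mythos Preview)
Your proposal is correct and matches the paper's intent: the paper states Corollary~\ref{cor:leofix} without proof, leaving it as an immediate consequence of Theorem~\ref{th:algfixp} together with the decompositions $\iRLob = \iKW \dpl \iR$ and $\iKM = \iRLob \dpl \iNext$ from Table~\ref{tab:addaxioms}. Your explicit observation that $\rax$ yields $\kfax$ (so $\iR$- and $\mHC$-algebras are already $\iKFour$-algebras, putting them in the scope of Theorem~\ref{th:algfixp}) is exactly the bookkeeping the paper suppresses.
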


 As stated above, 
Theorem \ref{theorem:diagonalization} occurred first as an algebraization of G\"{o}del's Diagonalization Lemma. 
While the connection between $\iKW$ and Heyting Arithmetic \HA\, is not as tight as the one between $\cKW$ and Peano Arithmetic \PA\, established by the completeness result of Solovay \cite{Solovay76:ijm} (see also \cite[Chapter 9]{Boolos93:lop}), \cite{Sambin76:sl} notes that Theorem \ref{theorem:diagonalization} yields a counterpart of the Diagonalization Lemma for  \emph{any intuitionistic first-order theory with a canonical derivability predicate}, including obviously \HA. At any rate, the relevance of fixpoint results for L\"{o}b-like logics is not limited to arithmetic. 

\begin{myremark}\label{}
It is worth mentioning here that---unlike the case of $\PA$---the search for a complete axiomatization of the provability logic of \HA\, is 
not over yet; \cite{Iemhoff01:phd} gives a fascinating account. Regarding the arithmetical interpretation of $\iSL$, see the discussion of $\HA^*$ in \cite[Sec. 4--5]{deJonghV95:emb}.
\end{myremark}


To begin with, in the classical case one can use Theorem \ref{theorem:diagonalization} to show that explicit smallest or greatest fixed-point operators are eliminable over $\cKW$. In other words, adding  $\mu$ or $\nu$ does not increase the expressive power of the classical modal logic of transitive and conversely well-founded structures; see \cite{AlberucciF09:sl,Benthem06:sl,Visser05:lncs}. Note that this includes all correctly formed expressions with $\mu$, without assuming that all occurrences of $p$ are $\ibox$-guarded: as usual, they only have to be positive. \cite[Section 3]{CateFL10:jancl} discusses an application in the context of expressivity of navigational fragments of XML query languages.

While I am not aware of an exact analogue of the results in
\cite{AlberucciF09:sl,Benthem06:sl,Visser05:lncs} in the
intuitionistic context,\footnote{It is a theme I am presently
  collaborating on with Albert Visser.} L\"{o}b-like modalities---more specifically, variants of systems $\iKW$ and $\iRLob$---have recently become rather popular  in type theory. Examples include:
\begin{itemize}
\item \emph{modality for recursion} \cite{Nakano00:lics,Nakano01:tacs} 
\item \emph{approximation modality} \cite{AppelMRV07:popl} 
\item \emph{guardedness type constructor} \cite{AtkeyMB13:icfp} 
\item \emph{next-step modality/next clock tick} \cite{KrishnaswamiB11:icfp,KrishnaswamiB11:lics} 
\item \emph{later operator} \cite{BentonT09:tldi,BirkedalM13:lics,BirkedalMSS12:lmcs,JaberTS12:lics}
\end{itemize}

One of reasons is precisely that such modalities 
 guarantee existence and uniqueness of fixed-points of suitably guarded type expressions. 
However, the modal spadework of 1970's seems rarely acknowledged.  
In \cite{Nakano00:lics}, which may be credited with introducing intuitionistic L\"{o}b-like modalities to the attention of this community, we find the following statement:

\begin{quote}
Similar results concerning the existence of fixed points
of proper type expressions \dots \emph{could
historically go back to the fixed point theorem of the logic of
provability} \dots \emph{The difference is that our logic is
intuitionistic}, and fixed points are treated as sets of realizers [the emphasis is mine--T.L.].
\end{quote}

This formulation suggests that Nakano was not aware  that the \emph{intuitionistic} fixed-point theorem had been already proved  in \cite{Sambin76:sl}, not to mention improvements possible above $\iRLob$  (cf. Remark \ref{rem:kinderg}). The only related  references quoted in \cite{Nakano00:lics} focus on classical $\cKW$---e.g, \cite{Boolos93:lop}---and in later papers even these are omitted. A valuable part of the logical tradition seems lost this way. Let us see what insights can be found in the work of the Tbilisi school.  

\section{Leo Esakia and Extensions of $\mHC$} 
\label{sec:deriv}

\subsection{$\mHC$ and Topological Derivative}

Leo Esakia and collaborators 
devoted special attention to the system $\imHC$ and its extensions. \cite{Esakia06:jancl} is an excellent overview. The abbreviation $\imHC$ stood for \emph{modalized Heyting calculus}. The reader may find the name surprising; after all, 
many natural intuitionistic modal systems are not subsystems of $\imHC$. Esakia \cite{Esakia06:jancl} was perfectly aware of that:

\begin{quote}
The postulate $\rax$ is not typical, while the postulate $\nxtaxb$ stresses even more ``nonstandardness'' of the chosen basic system $\mHC$ and of its extension $\KM$, which
enables one to draw a conventional ``demarcation line'' between $\mHC$ and the standard
intuitionistic modal logics.  
\end{quote}

\begin{myremark}\label{rem:monads}
Both axioms seem ``nonstandard'' mostly if one focuses on these intuitionistic modal logics which are obtained from  popular classical systems. It is enough to look at Table \ref{tab:semcond} to realize why it must be so: $\nxtaxb$ is trivially derivable in $\cK$ (its consequent being a classical tautology), while the combination of $\clax$ and $\rax$ yields that $\mcc \subseteq \diag$. That is, the only classical frames for $\cR$ are disjoint unions of reflexive and irreflexive points. However, $\rax$ is nowhere as pathological in a properly intuitionistic setting. There are many references on systems different from $\mHC$ and $\iKM$ where nevertheless $\rax$ is still derivable or even explicitly included as an axiom. A short and inexhaustive list includes \cite{Abadi07:gdp,Aczel01:mscs,BentonBP98:jfp,BirkedalMSS12:lmcs,Curry52:jsl,FairtloughM97:ic,GargA08:fossacs,GargP06:csfw,Goldblatt81:mlq,Goldblatt10:jlc,KrishnaswamiB11:lics,McbrideP08:jfp,Nakano00:lics,Nakano01:tacs}. They can be split roughly into two main groups. The first one---e.g., \cite{BirkedalMSS12:lmcs,KrishnaswamiB11:lics,Nakano00:lics,Nakano01:tacs}---concerns $\iRLob$ and has already been mentioned in Section \ref{sec:fixtheorem}. The second one---e.g., \cite{Aczel01:mscs,BentonBP98:jfp,Curry52:jsl,FairtloughM97:ic,GargA08:fossacs,GargP06:csfw,Goldblatt81:mlq,Goldblatt10:jlc}---concerns the system which is denoted here as $\iLax$ (a.k.a. $\iCLl$, see \cite{BentonBP98:jfp}).  \cite[Section 7.6]{Goldblatt03:jal} and \cite{Goldblatt10:jlc} are good if incomplete overviews of most  relevant papers on this system---the most important omissions being perhaps  \cite[Section 7]{PfenningD01:mscs} and  \cite{Abadi07:gdp,GargP06:csfw}. See also \cite{McbrideP08:jfp} 
for a discussion of type systems with $\iR$ modalities from programmer's point of view.
\end{myremark}

\cite{Esakia06:jancl} gives the following reasons for the importance of $\mHC$:
\begin{itemize}
\item Its connection to $\iKM$: 
$\imHC$ is $\iKM$ minus the L\"{o}b axiom $\rlax$. Note that $\kwax$ and $\rlax$ are  equivalent in the presence of $\rax$ 
\item The connection with \emph{intuitionistic temporal logic ``Always \& Before'' possessing rich expressive possibilities} 
\item The fact that $\imHC$ can be obtained as a fragment of $\qInt$ (or, in Esakia's notation, $\qHC$)---quantified intuitionistic propositional calculus. This is similar to the encoding of $\imHC$ in the internal language of a topos, see the last point
\item The topological connection with \emph{Cantor's scattered spaces, notions of the limit and isolated point}. 
 This will be discussed at length in this section 
\item Finally, as mentioned above, $\mHC$ turns out to be a natural
  fragment of the topos logic. This last point builds on all the preceding ones and will be  discussed in Sec. \ref{sec:scatopoi}
\end{itemize}

As we can see in Table \ref{tab:semcond}, the conditions on the accessibility relation $\mcc$ imposed by the axioms of $\imHC$---the combination of $\rax$ and $\nxtaxb$---is $\iccs \subseteq \mcc \subseteq \icc$. 
A natural question to ask is whether it is possible to enforce syntactically that  $\mcc$ is even more closely determined by $\icc$ as one of the two borderline cases, i.e., either $\mcc = \icc$ or $\mcc = \iccs$.

For $\mcc = \icc$, the answer is obviously positive. This is achieved precisely by the axioms of the logic $\iTriv$, strengthening $\nxtaxb$ 
to $\tax$. In fact, this is a semantic counterpart of an observation in \cite{Esakia06:jancl} that enriching any Heyting algebra with a trivial operator $\ibox^{\iTriv} x \deq x$  yields an  $\mHC$-algebra. Note here that whenever $\ibox$ is an $\imHC$- or even $\iR$-operator, its associated $\ribox$ is a $\iTriv$-operator.  

For $\mcc = \iccs$, the answer is obviously negative. 
Irreflexivity is a typical example of a condition which cannot be defined by any purely modal axiom, see \cite{BlackburndRV01:ml}. Here is perhaps the most natural proof.

\begin{expl}
Consider a frame for $\mHC$ defined as 
$(\omega, \icc, \iccs)$ where $\icc$ is the natural order $\leq$ on $\omega$. 
The dual algebra 
contains as a subalgebra the two-element Boolean algebra with $\ibox\bot=\bot$, which is  the dual of 
a single $\mcc$-reflexive point. Hence, no modal axiom can define the class of $\mcc$-irreflexive frames over the class of frames for $\mHC$.
\end{expl}



However, an ``irreflexive'' $\mHC$-operator is clearly definable on any Heyting algebra obtained as the dual of an intuitionistic Kripke frame $(W,\icc)$: for any $A \in \upset{W}$, we take $\irrbox A \deq \{ w \in W \mid \reluc{\{w\}}{\icc} - \{w\} \subseteq A \}$; again, see Table \ref{tab:semcond} for notation. It is straightforward to note that for any $w \in W$, $w \in \irrbox A$ iff for any $B \in \upset{W}$, $w \in (B \to A) \cup B$. This observation actually explains the shape of axiom $\nxtaxb$. Hence, 
given any complete Heyting algebra $\gH$, define its \emph{point-free  coderivative} \cite{EsakiaJP00:apal,Simmons82:lnm,Simmons:here} as $\irrbox \hml \deq \bigwedge\limits_{\iml \in \gH}(\iml \vee (\iml \to \hml))$.

\begin{prop} \label{prop:mhc}
For any complete Heyting algebra, its point-free coderivative is an $\mHC$-operator.
\end{prop}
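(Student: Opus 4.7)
The plan is to unpack what ``$\mHC$-operator'' means and then verify each ingredient directly from the definition. Since $\mHC = \iR \dpl \iNext$, the requirement splits into three parts: (a) $\irrbox$ is an operator on $\gH$, i.e., preserves $\top$ and binary meets (this subsumes $\kaxb$); (b) the algebraic form of $\rax$, namely $\hml \leq \irrbox \hml$ for every $\hml \in \gH$; and (c) the algebraic form of $\nxtaxb$, namely $\irrbox \hml \leq \iml \vee (\iml \to \hml)$ for every $\iml \in \gH$.

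Part (c) is immediate by construction, since $\irrbox \hml$ is defined as the meet of exactly such expressions. Part (b) reduces to observing that $\hml \leq \iml \to \hml$ holds trivially by residuation, whence $\hml \leq \iml \vee (\iml \to \hml)$ for each $\iml$; meeting over $\iml$ delivers $\hml \leq \irrbox \hml$.

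For (a), preservation of $\top$ is trivial because each conjunct degenerates: $\iml \vee (\iml \to \top) = \iml \vee \top = \top$. The substantive step is preservation of binary meets. The inequality $\irrbox(\hml \wedge \kml) \leq \irrbox \hml \wedge \irrbox \kml$ follows from monotonicity of $\irrbox$, which is inherited from the monotonicity of $\iml \to -$ in its second argument. For the converse, I would use the standard complete-lattice identity $\bigwedge_i a_i \wedge \bigwedge_j b_j = \bigwedge_{i,j}(a_i \wedge b_j)$ to rewrite
\[
\irrbox \hml \wedge \irrbox \kml = \bigwedge_{\iml,\jml \in \gH} \bigl((\iml \vee (\iml \to \hml)) \wedge (\jml \vee (\jml \to \kml))\bigr),
\]
and then restrict the double meet to the diagonal $\iml = \jml$, where distributivity of $\wedge$ over $\vee$ in the Heyting algebra collapses the conjunct to $\iml \vee ((\iml \to \hml) \wedge (\iml \to \kml)) = \iml \vee (\iml \to (\hml \wedge \kml))$. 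Since restricting a meet to a sub-family only enlarges it, the double meet is bounded above by the diagonal meet, which is precisely $\irrbox(\hml \wedge \kml)$.

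None of these steps is genuinely hard: $\nxtaxb$ is essentially built into the definition, $\rax$ follows from elementary residuation, and the only place needing care is preservation of binary meets, where the interaction between the infinite meet defining $\irrbox$ and the finite operations on $\gH$ is handled by the two standard complete-Heyting-algebra laws (commuting $\wedge$ with infinite $\bigwedge$, and distributing $\wedge$ over $\vee$).
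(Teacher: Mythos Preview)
Your proof is correct and is exactly the direct verification the paper intends: the paper itself gives no argument beyond ``a rather easy exercise for the reader; can be also extracted from the proof of \cite[Proposition 5]{Esakia06:jancl}.'' Your decomposition into (a) operator laws, (b) $\rax$, and (c) $\nxtaxb$, with the only nontrivial step being the diagonal-restriction argument for preservation of binary meets, is the expected route.
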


\begin{proof}
A rather easy exercise for the reader; can be also extracted from the proof of \cite[Proposition 5]{Esakia06:jancl}.
\end{proof}

There is a slightly different description of $\irrbox$. Take a Heyting algebra $\gH$ and $\hml, \iml \in \gH$ s.t. $\iml \leq \hml$. $\hml$ is \emph{$\iml$-dense} or \emph{dense in $[\iml,\top]$} if for any $\jml \in \gH$, we get that $\hml \wedge \jml = \iml$ implies $\jml = \iml$. Note that the standard topological notion of density can be considered a special case: 
an open set is topologically dense iff it is $\bot$-dense in the Heyting algebra of open sets of the space. The following was observed, e.g., in \cite{EsakiaJP00:apal}:  

\begin{fact}
For any Heyting algebra $\gH$ and any $\hml \geq \iml \in \gH$, $\hml$ is $\iml$-dense iff there exists $\jml \in \gH$ s.t. $\hml = \jml \vee (\jml \to \iml)$.
\end{fact}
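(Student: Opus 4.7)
The plan is to handle the two implications separately, with each direction essentially a one-line Heyting algebra manipulation once one picks the right witness in the existential direction.

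For the easy direction $(\Leftarrow)$, I would assume $\hml = \jml \vee (\jml \to \iml)$ and verify the density condition directly. Given any $\kml \in \gH$ with $\hml \wedge \kml = \iml$, distributivity yields $(\jml \wedge \kml) \vee ((\jml \to \iml) \wedge \kml) = \iml$, so in particular $\jml \wedge \kml \leq \iml$. By the residuation property this gives $\kml \leq \jml \to \iml$, hence $\kml = \kml \wedge (\jml \to \iml) \leq \iml$. Since $\hml \wedge \kml = \iml$ forces $\iml \leq \kml$, we conclude $\kml = \iml$.

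For the harder direction $(\Rightarrow)$, the key guess is to take $\jml \deq \hml$ itself. Then I need $\hml = \hml \vee (\hml \to \iml)$, and since $\hml \geq \iml$ gives $\hml \vee \iml = \hml$, it suffices to show that $\hml \to \iml = \iml$. To do so, I would apply the density hypothesis to $\hml \to \iml$. On the one hand $\hml \wedge (\hml \to \iml) \leq \iml$ holds always. On the other hand, the assumption $\hml \geq \iml$ implies $\hml \to \iml \geq \iml \to \iml = \top$\,---\,well, more carefully, $\hml \to \iml \geq \iml$ since $\hml \wedge \iml = \iml$ by $\hml \geq \iml$\,---\,so $\hml \wedge (\hml \to \iml) \geq \iml$. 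Together these give $\hml \wedge (\hml \to \iml) = \iml$, and $\iml$-density then forces $\hml \to \iml = \iml$, as required.

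The only nontrivial conceptual step is spotting that $\jml = \hml$ is the right witness; once that is fixed, everything reduces to standard Heyting algebra identities (residuation, distributivity, the fact that $\hml \geq \iml$ implies $\hml \to \iml \geq \iml$). I do not anticipate a real obstacle\,---\,the main content of the fact is really the clever choice of witness, not any delicate calculation.
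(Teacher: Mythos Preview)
Your proof is correct. The paper does not supply its own argument for this Fact; it is stated with a reference to \cite{EsakiaJP00:apal}, so there is nothing to compare against beyond confirming that your reasoning is sound---which it is, including the choice of witness $\jml = \hml$ and the use of density applied to $\hml \to \iml$.
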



\begin{cor} \label{cor:simmonsdef}
For any complete Heyting algebra $\gH$ and any $\iml \in \gH$, $$\irrbox\iml = \bigwedge \{ \hml \in \gH \mid \hml \geq \iml \text{ and  $\hml$ is  $\iml$-dense} \}.$$
\end{cor}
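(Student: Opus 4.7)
The plan is to deduce the corollary as an immediate consequence of the preceding Fact, observing that the meet in the definition of $\irrbox$ ranges over exactly the set of $\iml$-dense elements above $\iml$.

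First, I would note that every element of the form $\jml \vee (\jml \to \iml)$ with $\jml \in \gH$ automatically lies above $\iml$: indeed $\iml \wedge \jml \leq \iml$, hence $\jml \to \iml \geq \iml$, and so $\jml \vee (\jml \to \iml) \geq \iml$. In particular, such an element is in the range on which the notion of $\iml$-density is defined. By the ``if'' direction of the Fact, $\jml \vee (\jml \to \iml)$ is $\iml$-dense. This shows the inclusion
\[
\{\jml \vee (\jml \to \iml) \mid \jml \in \gH\} \;\subseteq\; \{\hml \in \gH \mid \hml \geq \iml \text{ and $\hml$ is $\iml$-dense}\}.
\]

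Second, the ``only if'' direction of the Fact gives the reverse inclusion: any $\iml$-dense $\hml \geq \iml$ can be written as $\jml \vee (\jml \to \iml)$ for some $\jml \in \gH$, so it already appears in the left-hand set. Hence the two sets coincide, and therefore so do their meets (which exist because $\gH$ is complete). By definition of the point-free coderivative, the meet of the left-hand set is precisely $\irrbox \iml$, while the meet of the right-hand set is the expression on the right of the corollary.

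There is no real obstacle here; the work has been done in the Fact (and implicitly in Proposition~\ref{prop:mhc}). The only small point worth flagging is the routine check $\jml \to \iml \geq \iml$, which ensures we are inside the subset of $\gH$ where ``$\iml$-density'' even makes sense, so that the ``if'' direction of the Fact applies without further qualification.
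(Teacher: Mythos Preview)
Your proposal is correct and is exactly the deduction the paper intends: the corollary is stated without proof precisely because the preceding Fact identifies $\{\jml \vee (\jml \to \iml) \mid \jml \in \gH\}$ with the set of $\iml$-dense elements above $\iml$, so the two meets coincide. Your explicit check that $\jml \vee (\jml \to \iml) \geq \iml$ (via $\iml \leq \jml \to \iml$) is the only detail the paper leaves implicit, and you handle it correctly.
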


Why \emph{coderivative}? The reader is referred to a detailed account
by Simmons \cite{Simmons:here}. Briefly, recall that in topology the \emph{Cantor-Bendixson derivative} of a set $A$ is the set of those $x$ whose every neighbourhood contains a point of $A$ \emph{other than $x$}; the dual operator (hence \emph{co-derivative}) consists of those $x$ which have an open neighbourhood \emph{entirely contained in $A \cup \{x\}$} \cite{Esakia06:jancl}. As it turns out, this indeed  coincides with  $\irrbox$ for practically all sensible topological spaces:

\begin{thm}[Simmons] \label{th:tzero}
For any $T_0$-space, its co-derivative operator coincides with the point-free coderivative $\irrbox$ on the Heyting algebra of open sets. 
\end{thm}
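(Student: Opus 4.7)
I will prove both inclusions between $d(U) \deq \{x : \exists\,\text{open}\,N\ni x,\; N\subseteq U\cup\{x\}\}$, the topological co-derivative, and $\irrbox U = \bigwedge_{\iml}(\iml \vee (\iml \to U))$, the point-free one. Along the way I use the specialization preorder (where $y \leq z$ means $y \in \overline{\{z\}}$) and, for each $y$, the open set $V^*_y \deq X \setminus \overline{\{y\}}$, which is the largest open missing $y$. The $T_0$-condition will be used precisely as antisymmetry of $\leq$.

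The easy direction $d(U) \subseteq \irrbox U$ needs no separation axiom. Given $x \in d(U)$ witnessed by $N$, I check $x \in \iml \cup (\iml \to U)$ for every open $\iml$: if $x \notin \iml$ then $\iml \cap N \subseteq N \setminus \{x\} \subseteq U$, so $N$ itself witnesses $x \in \iml \to U$. Writing $S$ for the set-theoretic intersection $\bigcap_\iml (\iml \cup (\iml \to U))$, this gives $d(U) \subseteq S$; and since $d(U)$ is obviously open (any $y \in N \setminus \{x\}$ already lies in $U \subseteq d(U)$), passing to the interior yields $d(U) \subseteq \text{int}(S) = \irrbox U$.

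For $\irrbox U \subseteq d(U)$, take $x \in \irrbox U = \text{int}(S)$ and fix an open $M \ni x$ with $M \subseteq S$. The goal is $M \subseteq d(U)$. For each $y \in M$, apply $y \in S$ with $\iml = V^*_y$; since $y \notin V^*_y$, this yields an open $N_y \ni y$ with $N_y \cap V^*_y \subseteq U$, and $N'_y \deq N_y \cap M$ is an open neighbourhood of $y$ inside $M$ with $N'_y \subseteq U \cup \overline{\{y\}}$. If $N'_y \cap \overline{\{y\}} = \{y\}$, then $N'_y \subseteq U \cup \{y\}$ directly witnesses $y \in d(U)$. Otherwise pick $z \in N'_y \cap \overline{\{y\}}$ with $z \neq y$: then $z \leq y$ and $z \neq y$, so $T_0$-antisymmetry forces $y \not\leq z$, i.e., $y \in V^*_z$. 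Simultaneously $z \in N'_y \subseteq M \subseteq S$ applied with $\iml = V^*_z$ produces an open $N_z \ni z$ with $N_z \cap V^*_z \subseteq U$; and $z \leq y$ forces every open containing $z$ to contain $y$, so $y \in N_z \cap V^*_z \subseteq U$ and $y \in d(U)$ trivially.

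The main obstacle this navigates is that in a $T_0$-space which fails to be $T_D$ one cannot shrink $N'_y$ so as to miss $\overline{\{y\}} \setminus \{y\}$ entirely; the naive $T_1$-style argument (treating $X \setminus \{y\}$ as effectively open) breaks. The workaround exploits that $\irrbox U$ supplies not just the pointwise datum $y \in S$ but an entire open $M \subseteq S$ around $x$, so that every potentially obstructive $z \in N'_y$ is again in $S$; antisymmetry of specialization then converts the obstruction into the stronger conclusion that $y$ already lies in $U$.
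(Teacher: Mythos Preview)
Your proof is correct. The paper, however, does not actually prove this theorem: its entire ``proof'' consists of a citation to Simmons' original paper \cite{Simmons82:lnm}, noting that the result was established there for $\irrbox$ expressed via the dense-element characterization of Corollary~\ref{cor:simmonsdef}. So there is nothing to compare at the level of argumentative strategy---you have supplied a complete self-contained proof where the paper defers to the literature.

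That said, your argument is worth a comment. The easy direction is routine, but your treatment of the hard direction is genuinely careful: rather than assuming $T_D$ (which would let you separate $y$ from $\overline{\{y\}}\setminus\{y\}$ by an open set), you exploit that $\irrbox U$ gives an \emph{open} $M\subseteq S$, so any obstructive $z\in N'_y\cap\overline{\{y\}}$ with $z\neq y$ is itself in $S$; the $T_0$-antisymmetry then lets you flip roles and conclude $y\in U$ outright. This is exactly the subtlety Simmons' original argument must handle, and you have identified and navigated it cleanly.
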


\begin{proof}
For $\irrbox$ defined as in Corollary \ref{cor:simmonsdef}, this was proved in \cite{Simmons82:lnm}.
\end{proof}

Obviously, as any intuitionistic Kripke frame with the Alexandroff topology given by $\upset{W}$ is $T_0$, we get that $\irrbox$ coincides with the dual of  topological derivative of this topology. It is, in fact, easier to prove directly than by Simmons' result. 

\begin{myremark}\label{}
One observation from \cite{Simmons82:lnm} is worth quoting here:  

\begin{quote}
for non-$T_0$-spaces the usual definition of isolated point does not quite capture the intended notion
\end{quote}
 
\noindent and hence for arbitrary spaces, the point-free definition of derivative given by $\irrbox$ seems in fact \emph{more adequate} than the standard one. The reader can verify this by extending the definition of intuitionistic Kripke frames to qosets rather than just posets 
and checking how both notions would fare in such a setting. 
\end{myremark}


\subsection{$\iKM$ and Scatteredness} \label{sec:scathey}

A complete Heyting algebra will be called \emph{scattered} if its coderivative $\irrbox$ is not only an $\mHC$-operator, but a $\KM$-operator. Recall Corollary \ref{cor:leofix} as an algebraic characterization of such a situation. 

\begin{prop} \label{prop:scat}
\mbox{\newline}
\begin{itemize}
\item For any topological space, its point-free coderivative 
is a $\KM$-operator \textbf{if} the space is scattered in the usual sense: that is, if each non-empty subset has an isolated point
\item For any $T_0$ topological space, its point-free coderivative is a $\KM$-operator \textbf{only if} the space is scattered
\end{itemize}
\end{prop}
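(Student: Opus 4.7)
The plan is to handle the two implications separately, in each case reducing to the strong L\"{o}b axiom $\rlax$. By Proposition~\ref{prop:mhc}, $\irrbox$ is already an $\mHC$-operator; since $\iKM = \imHC \lpl \rlax$, being a $\KM$-operator amounts to the algebraic inequality $\irrbox U \to U \subseteq U$ for every open $U$, where $\to$ is Heyting implication in the algebra $\mathcal{O}(X)$ of opens. Both clauses hinge on this inequality.

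For the \emph{if}-direction I would first isolate an auxiliary inclusion valid for \emph{arbitrary} topological spaces: whenever $x$ has an open neighbourhood $V$ with $V \setminus \{x\} \subseteq U$, one has $x \in \irrbox U$. This can be read straight off the point-free definition: given an arbitrary open $W$, either $x \in W$ and we are done, or $V \cap W \subseteq V \setminus \{x\} \subseteq U$, so $V \subseteq U \cup (X \setminus W)$, and openness of $V$ puts $x$ inside $W \to U$; intersecting over all $W$ yields $x \in \irrbox U$. Granting this lemma, assume $X$ is scattered and suppose towards a contradiction that $A := (\irrbox U \to U) \setminus U$ is non-empty. Scatteredness supplies an isolated point $x \in A$ with witnessing neighbourhood $N$; intersecting with the open set $\irrbox U \to U$ yields an open $W' := N \cap (\irrbox U \to U) \ni x$ with $W' \cap A = \{x\}$, hence $W' \setminus \{x\} \subseteq (\irrbox U \to U) \setminus A \subseteq U$. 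The auxiliary inclusion now places $x \in \irrbox U$; combined with $x \in \irrbox U \to U$, the modus ponens inequality $a \wedge (a \to b) \leq b$ forces $x \in U$, contradicting $x \in A$.

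For the \emph{only if}-direction, assume $X$ is $T_0$ and $\irrbox$ is a $\KM$-operator, and suppose for contradiction that some non-empty $S \subseteq X$ has no isolated points. A short check (splitting on $x \in S$ versus $x \in \overline{S} \setminus S$) shows that $\overline{S}$ still has no isolated points, so we may assume $S$ closed and set $U := X \setminus S$. Here I invoke Theorem~\ref{th:tzero} --- the only place $T_0$ enters --- to identify $\irrbox U$ with the topological co-derivative of $U$: a point $x \in S$ would lie in $\irrbox U$ only if there were an open $V \ni x$ with $V \cap S \subseteq \{x\}$, i.e.\ $x$ isolated in $S$, which is excluded. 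Hence $\irrbox U = U$, so $\irrbox U \to U = \top$, and the strong L\"{o}b inequality collapses this to $U = \top$, contradicting $S \neq \emptyset$.

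The delicate step is the auxiliary inclusion in Part~1: without $T_0$ we cannot simply read off $\irrbox U$ as the topological co-derivative, so the argument must be pushed through at the level of the infinite meet defining $\irrbox$. Once that lemma is in place, Part~1 is a routine diagonal-style argument on an alleged minimal counterexample, and Part~2 reduces to a one-line computation with a closed subset of $X$ having no isolated points.
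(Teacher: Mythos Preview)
Your argument is correct, but the paper takes a shorter route for Part~1. Rather than proving your auxiliary inclusion (which is, in effect, the $T_0$-free half of Theorem~\ref{th:tzero}), the paper simply observes that a non-$T_0$ space can never be scattered: two topologically indistinguishable points form a two-element set with no isolated point. Hence the hypothesis ``scattered'' already forces $T_0$, and both directions reduce to the $T_0$ case where Theorem~\ref{th:tzero} identifies $\irrbox$ with the ordinary co-derivative; the remaining work is then framed as point-set topology via the equivalent characterisation ``for every open $A \neq X$, $\irrbox A \setminus A$ is non-empty''. Your approach is more explicit about exactly which inclusion between the point-free and point-set co-derivatives survives without $T_0$, and it makes the strong L\"{o}b inequality do the work directly; the paper's approach is terser and avoids re-proving any part of Simmons' result. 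For Part~2 the two arguments are essentially the same.
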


\begin{proof}
A non-$T_0$-space can never be scattered, and for $T_0$-spaces, the point-free coderivative coincides with ordinary one as stated in Theorem \ref{th:tzero}. The remaining calculations are an exercise in point-set topology; in fact, the point-set part of this result has been shown first by Kuznetsov \cite{kuz79a,Muravitsky:here}. 
One can use an alternative characterization of scatteredness here: for any open set $A$ distinct from the whole space, $\irrbox A - A$ is non-empty.
\end{proof}


Let us summarize some of the results above:

\begin{cor}
A topological space  $\gT$ is scattered iff in the complete Heyting algebra of open sets of $\gT$, every polynomial in one $\irrbox$-guarded variable 
 has a fixed point.
\end{cor}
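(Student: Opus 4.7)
The plan is to assemble the corollary as a short chain of three results already established in this section, with essentially no new calculation required.

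First, I would invoke Proposition \ref{prop:mhc}: for any topological space $\gT$, its complete Heyting algebra $\gH$ of open sets equipped with the point-free coderivative $\irrbox$ is an $\mHC$-algebra. This is the algebraic setup on which the fixpoint characterization can act.

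Next, I would apply the second bullet of Corollary \ref{cor:leofix}: an $\mHC$-algebra is a $\KM$-algebra iff every polynomial in one $\ibox$-guarded (here $\irrbox$-guarded) variable has a fixed point. This turns the right-hand side of the desired biconditional into the purely algebraic statement ``$\irrbox$ is a $\KM$-operator on $\gH$,'' which is precisely the definition of the algebra $\gH$ being \emph{scattered} introduced at the opening of Section \ref{sec:scathey}.

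Finally, I would close the loop via Proposition \ref{prop:scat}, which translates topological scatteredness of $\gT$ into $\irrbox$ being a $\KM$-operator on $\gH$: the ``if'' direction holds unconditionally, while the converse direction uses the $T_0$ clause and the observation that a non-$T_0$ space is never topologically scattered. Concatenating these three equivalences yields the corollary. There is no genuine obstacle here---the only mildly delicate point worth flagging is the $T_0$ asymmetry in Proposition \ref{prop:scat}, but it is absorbed into the statement since topological scatteredness already entails $T_0$-like separation at isolated points.
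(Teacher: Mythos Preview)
Your assembly is exactly what the paper intends: the corollary is presented without proof, introduced by ``Let us summarize some of the results above,'' and the three ingredients you name---Proposition~\ref{prop:mhc}, Corollary~\ref{cor:leofix} (second bullet), and Proposition~\ref{prop:scat}---are precisely the results being summarized.

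One caveat: your closing remark about the $T_0$ asymmetry is slightly backwards. The delicate direction is ``fixed points $\Rightarrow$ scattered'': here you pass through ``$\irrbox$ is a $\KM$-operator,'' and Proposition~\ref{prop:scat} only yields topological scatteredness from this under a $T_0$ hypothesis. The observation that scattered spaces are automatically $T_0$ helps the \emph{forward} direction (which does not need it anyway), not the converse. Concretely, the indiscrete two-point space has open-set algebra $\{\bot,\top\}$ with $\irrbox\bot = \top = \irrbox\top$, so $\irrbox$ is a $\KM$-operator (indeed a $\iVer$-operator) and every guarded polynomial has a fixed point; yet the space has no isolated points and is not scattered in the classical sense. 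So the corollary, read literally, needs either a standing $T_0$ assumption or the point-free reading of ``scattered'' introduced at the start of Section~\ref{sec:scathey}. This is a wrinkle the paper itself leaves implicit; your proof plan is otherwise on target.
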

\begin{cor} \label{lm:kmkrcorr}
The following are equivalent for any $(W,\icc,\mcc)$: 
\begin{itemize}
\item $\mcc = \iccs$ and the Alexandroff topology $(W,\upset{W})$ is scattered
\item $(W,\icc,\mcc)$ is a frame for $\KM$
\item $\mcc = \iccs$ and it contains no infinite ascending chains
\end{itemize}
\end{cor}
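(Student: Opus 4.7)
The plan is to combine Table~\ref{tab:semcond} with Theorem~\ref{th:ursini} to reduce clause (2) to an explicit condition on $(W,\icc,\mcc)$, and then match it with (1) and (3). I would prove (1)$\Leftrightarrow$(3) as a purely topological fact, (3)$\Rightarrow$(2) by checking the axioms of $\iKM=\iRLob\dpl\iNext$ one by one, and finally the delicate direction (2)$\Rightarrow$(3).

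For (1)$\Leftrightarrow$(3), observe that in the Alexandroff topology the least open neighbourhood of $x$ is $\{y\mid x\icc y\}$, so $x\in S$ is isolated in $S$ iff $x$ is $\iccs$-maximal in $S$. Thus scatteredness of $(W,\upset{W})$ is equivalent to every non-empty $S\subseteq W$ having an $\iccs$-maximal element, i.e., to the ACC on $\icc$.

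For (3)$\Rightarrow$(2), assume $\mcc=\iccs$ with ACC. Then $\rax$ ($\mcc\subseteq\icc$), $\nxtax$ ($\iccs\subseteq\mcc$), and the transitivity axiom $\kfax$ all hold trivially, while for Theorem~\ref{th:ursini} any upper set $A\ne W$ admits an $\iccs$-maximal element $w$ of $W\setminus A$ by ACC, and such a $w$ belongs to $\ibox A\setminus A$, witnessing Noetherianity.

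For (2)$\Rightarrow$(3), we already have $\iccs\subseteq\mcc\subseteq\icc$ from $\nxtax$ and $\rax$, so the remaining tasks are to force irreflexivity of $\mcc$ and the ACC on $\iccs$. The main obstacle is irreflexivity, which is not definable by any modal axiom alone (cf.\ the example preceding Proposition~\ref{prop:mhc}); it has to emerge from the interaction of $\iccs\subseteq\mcc$ with Noetherianity. Supposing $w\mcc w$, set $A:=W\setminus\{v\mid v\icc w\}$, an upper set distinct from $W$; Noetherianity yields $u\in\ibox A\setminus A$ with $u\icc w$. The case $u=w$ gives $w\mcc w$ with $w\notin A$, contradicting $u\in\ibox A$; the case $u\iccs w$ yields $u\mcc w$ by $\nxtax$, again contradicting $u\in\ibox A$. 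Hence $\mcc$ is irreflexive, and combined with $\iccs\subseteq\mcc\subseteq\icc$ this gives $\mcc=\iccs$. The ACC now follows by an analogous construction: given an infinite chain $x_0\iccs x_1\iccs\ldots$, set $A:=W\setminus\bigcup_i\{v\mid v\icc x_i\}$ and extract $u\in\ibox A\setminus A$; then $u\icc x_i$ for some $i$ forces $u\iccs x_{i+1}$, hence $u\mcc x_{i+1}$ by $\nxtax$, yet $x_{i+1}\notin A$, a contradiction.
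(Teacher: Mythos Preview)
Your proposal is correct and follows essentially the same route as the paper: reduce clause (2) to the conjunction of the conditions in Table~\ref{tab:semcond} for $\rax$ and $\nxtax$ together with the transitivity and $\upset{W}$-Noetherianity of Theorem~\ref{th:ursini}, then match against (1) and (3). The only noteworthy differences are stylistic: the paper handles (1)$\Leftrightarrow$(3) by pointing to Proposition~\ref{prop:scat} rather than spelling out the Alexandroff argument, dismisses irreflexivity with ``quite obviously'' (your case split $u=w$ vs.\ $u\iccs w$ can in fact be collapsed using the frame condition $\icc\cmp\mcc\subseteq\mcc$, since $u\icc w\mcc w$ already yields $u\mcc w$), and obtains the ACC by rewriting Noetherianity as ``every nonempty downward-closed set has an $\icc$-maximal element'' rather than by your direct chain argument.
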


\begin{proof}
We only need to prove the equivalence of the last two conditions. $\iccs \subseteq \mcc$ is, as observed, enforced by $\mHC$. Theorem \ref{th:ursini} gives the corresponding semantic condition for $\kwax$. Quite obviously, $\upset{W}$-Noetherianity forces irreflexivity of $\mcc$. Thus, whenever $(W,\icc,\mcc)$ is a frame for $\KM$  (that is, the join of $\mHC$ and $\iKW$), we have $\mcc = \iccs$. Moreover, rewriting the condition of $\upset{W}$-Noetherianity for $\mHC$-frames, 
we obtain that for any $A \in \upset{W}-\{W\}$, there is $w \in \reluc{A}{\iccs}-A$. 
Rewriting further, we obtain that  any $B \neq \emptyset$ s.t. $B = \reldc{B}{\icc}$ has a maximal element wrt $\icc$. But this means that \emph{any} nonempty subset of $W$ has a maximal $\icc$-element.
\end{proof}




\subsection{Completeness, Lattice Isomorphisms and Bimodal Translations} \label{sec:complibt}


Two important kinds of results have been missing from this overview so far. First, while I discussed Kripke \emph{correspondence} for modal logics (Table \ref{tab:semcond}, Theorem \ref{th:ursini} and Corollary \ref{lm:kmkrcorr}), I have not discussed \emph{completeness}. Second, I have not said much about lattices of extensions of L\"{o}b-like logics and their relatives---in particular, about generalizations of the Blok-Esakia Theorem. 

This section fixes both oversights. 
Rather than using original proofs of Kuznetsov, Muravitsky (for $\KM$) and Esakia (for $\mHC$), we are going to use corollaries of Wolter and Zakharyaschev's results on bimodal translations \cite{WolterZ97:al,WolterZ98:lw}, briefly discussed also in  \cite[Section 4]{WolterZ:here}. Their techniques allow to interpret intuitionistic modal logics as 
fragments of classical polymodal ones (cf. the discussion of \emph{implict vs. explicit epistemics} in \cite{Benthem91:la}). In the case of the Blok-Esakia theorem for $\mHC$, we will be able to see why axioms of $\mHC$ and $\cwGrz$ have to look the way they look in order to allow the classical counterpart to be unimodal rather than polymodal, as it happens in the more general framework of \cite{WolterZ97:al,WolterZ98:lw}. 

Take the bimodal language $\biF$ with operators $\wzi$ and $\wzm$. 
For any formula $A \in \mF$, I will write $A_\wzi$ (respectively $A_\wzm$) for $A$ with all occurrences of $\ibox$ replaced with $\wzi$ (respectively $\wzm$). $\wzm$ is the default counterpart of the original modality $\ibox$  and $\wzi$ encodes the intuitionistic poset order, hence the notation.\footnote{The reader has to be warned that the notation in this section differs somewhat from that in references like \cite{WolterZ97:al,WolterZ98:lw,WolterZ:here}.} The logic $\cSF$ is the normal logic axiomatized by the following axioms: $\clax$, $\kax_\wzi$, $\tax_\wzi$, $\kfax_\wzi$ 
and $\kax_\wzm$; in other words, it is what modal logicians would describe as the \emph{fusion} of $\cSFour_{\mathbf{i}}$ and $\cK_{\mathbf{m}}$. The logic $\cSFM$ is $\cSF$ extended with $$ \mixax: \qquad \wzm A \leftrightarrow \wzi\wzm\wzi A.$$ The logic $\cGFM$ is $\cSF$ extended with $\mixax$ and $\sgrzax_\wzi$. The translation $\wztr: \mF \to \biF$ prefixes every subformula in $(\cdot)_{\wzm}$ with $\wzi$. Of course, many occurrences of $\wzi$ in the translation $\wztr A$ can be removed relative to logics defined above: 

\begin{fact} \label{fact:wztreq}
The following equivalences belong to $\cSF$: $\wztr A \leftrightarrow \wzi \wztr A$, $\wztr(A \wedge B) \leftrightarrow (\wztr A \wedge \wztr B)$, $\wztr(A \vee B) \leftrightarrow (\wztr A \vee \wztr B)$; in $\cSFM$, we moreover have $\wztr(\ibox A) \leftrightarrow \wzm\wztr A$. 
\end{fact}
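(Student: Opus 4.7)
The plan is to base everything on a single structural observation about the translation $\wztr$: for any $A \in \mF$, the formula $\wztr A$ has the shape $\wzi\psi$ for some $\psi \in \biF$. This is immediate from unfolding the inductive definition, which I would first spell out explicitly: $\wztr p = \wzi p$, $\wztr\bot = \wzi\bot$, $\wztr(B \circ C) = \wzi(\wztr B \circ \wztr C)$ for $\circ \in \{\wedge, \vee, \to\}$, and $\wztr(\ibox B) = \wzi\wzm\wztr B$. Once this ``top-level $\wzi$'' shape is in place, each of the four equivalences reduces to a short combination of normality and the S4 axioms for $\wzi$ that are built into $\cSF$.

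Concretely, I would discharge the equivalences in order. For $\wztr A \leftrightarrow \wzi\wztr A$: write $\wztr A = \wzi\psi$ and appeal to $\tax_\wzi$ and $\kfax_\wzi$, which together give $\wzi\psi \leftrightarrow \wzi\wzi\psi$. For $\wztr(A \wedge B) \leftrightarrow \wztr A \wedge \wztr B$: unfold the left side to $\wzi(\wztr A \wedge \wztr B)$, distribute $\wzi$ over $\wedge$ using $\kax_\wzi$ with $\necr$, then collapse each $\wzi\wztr A$ and $\wzi\wztr B$ by the first equivalence. For $\wztr(A \vee B) \leftrightarrow \wztr A \vee \wztr B$: the $\to$ direction follows from $\tax_\wzi$ applied to $\wzi(\wztr A \vee \wztr B)$; for the $\leftarrow$ direction use monotonicity of $\wzi$ (derived from $\kax_\wzi$ and $\necr$) on the embeddings $\wztr A \to \wztr A \vee \wztr B$ and $\wztr B \to \wztr A \vee \wztr B$, then replace $\wzi\wztr A$ by $\wztr A$ and $\wzi\wztr B$ by $\wztr B$ via the first equivalence. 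For the last equivalence $\wztr(\ibox A) \leftrightarrow \wzm\wztr A$, which lives in $\cSFM$ rather than $\cSF$: instantiate $\mixax$ at $X = \wztr A$ to obtain $\wzm\wztr A \leftrightarrow \wzi\wzm\wzi\wztr A$, then use $\wzi\wztr A \leftrightarrow \wztr A$ (the first equivalence) under the normal context $\wzi\wzm(\cdot)$ to rewrite the right-hand side as $\wzi\wzm\wztr A$, which is exactly $\wztr(\ibox A)$ by definition.

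There is essentially no obstacle here; the only thing to be a bit careful about is the order in which the clauses are proved, since the second, third and fourth all invoke the first. I would therefore present the four items sequentially, treating the shape lemma ``$\wztr A$ begins with $\wzi$'' and the first equivalence as a single opening step on which the remaining three rest.
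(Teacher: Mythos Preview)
Your proposal is correct. The paper states this result as a Fact without proof, so there is nothing to compare against; your argument via the ``top-level $\wzi$'' shape of $\wztr A$ together with the $\cSFour$-laws for $\wzi$ and a single application of $\mixax$ is exactly the intended routine verification.
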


For any intuitionistic normal modal logic $\Gamma \supseteq \cSF$ and any bimodal normal logic $\Delta \subseteq \biF$, let 
\begin{itemize}
\item $\wzmin\Gamma \deq \cSFM \lpl \{\wztr A \mid A \in \Gamma\}$ 
\item $\wzmax\Gamma \deq \wzmin\Gamma \lpl \sgrzax_\wzi$ 
\item $\wzint\Delta \deq \{ A \in \mF \mid \wztr A \in \Delta\}$
\end{itemize}
$\Delta$ is \emph{a $\biF$-companion of $\Gamma$} if for any $A \in \mF$, $A \in \Gamma$ iff $\wztr A \in \Delta$, i.e., iff $\wzint\Delta = \Gamma$.

\begin{thm}[\cite{WolterZ97:al,WolterZ98:lw,WolterZ:here}] \label{th:wz}
Let $\Delta \supseteq \cSF$ be  a normal bimodal logic and $\Gamma\subseteq \mF$ be an intuitionistic normal logic.  Then
\begin{description}
\item[\Acls] $\wzint\Delta$ is an intuitionistic normal modal logic
\item[\Bcls] $\wzmin\Gamma$ and $\wzmax\Gamma$ are, respectively, the smallest and the greatest $\biF$-companions of $\Gamma$ containing $\mixax$
\item[\Ccls] $\wzint$ preserves decidability, Kripke completeness and the finite model property. If $\mixax \in \Delta$, $\wzint$ also preserves canonicity
\item[\Dcls] $\wzmin$ preserves canonicity
\item[\Ecls] $\wzmax$ preserves the finite model property 
\item[\Fcls] $\wzmax$ is an isomorphism from the lattice of normal extensions of $\iK$ onto the lattice of normal extensions of $\cGFM$
\item[\Gcls] $\Gamma \supseteq \iKFour$ has the finite model property whenever its $\biF$-companions over $\cSFM \lpl \kfax_{\wzm}$ include a canonical subframe logic  
\end{description}
\end{thm}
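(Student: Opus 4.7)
The overall strategy is a Gödel--McKinsey--Tarski style embedding lifted to the modal setting. Given a bimodal $\cSF$-frame $(W,R_\wzi,R_\wzm)$, the axiom $\mixax$ is the frame-theoretic content of condition \refeq{eq:wzcond}, forcing $R_\wzi\cmp R_\wzm\cmp R_\wzi = R_\wzm$; hence such a frame is precisely an intuitionistic modal frame for $\iK$ with $\icc \deq R_\wzi$ and $\mcc \deq R_\wzm$. Dually, $\sgrzax_\wzi$ rules out non-trivial $R_\wzi$-clusters on validating frames, so $\cGFM$-frames are essentially intuitionistic modal frames in disguise. The translation $\wztr$, prefixing $\wzi$ before every subformula under $\wzm$, is exactly what encodes the persistency properties of intuitionistic denotations, and Fact \ref{fact:wztreq} is the syntactic backbone of this dictionary.

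For \Acls and \Bcls, I would argue by induction on derivations: using Fact \ref{fact:wztreq}, $\wztr$ sends the axioms and rules of $\iK$ into $\cSFM$, and $\wzint$ automatically preserves closure under substitution, $\mpr$ and $\necr$. That $\wzmin\Gamma$ is the least companion is immediate from its definition; that $\wzmax\Gamma$ is the greatest reduces, via Fact \ref{fact:wztreq} and the semantic observation above, to the fact that no companion of $\Gamma$ extending $\cSFM$ can go beyond the Grzegorczyk cluster-collapse of intuitionistic validity. Parts \Ccls and \Dcls then follow by standard model transformations: from a bimodal Kripke/finite/canonical model for $\Delta$, quotient by $R_\wzi$-clusters (or simply take the reduct when $\mixax\in\Delta$) to obtain the required intuitionistic modal model; for $\wzmin$-canonicity in \Dcls, one checks directly on the canonical extension that the $\wzi$- and $\wzm$-operators align with $\wztr$.

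Parts \Ecls, \Fcls, \Gcls form the technical core. For \Ecls, given a finite intuitionistic modal $\Gamma$-frame refuting $A$, one observes that a finite $\iK$-frame is already a partial order in the $\icc$-component, so nothing new is required to satisfy $\sgrzax_\wzi$, and the same frame (re-read bimodally) refutes $\wztr A$ in $\wzmax\Gamma$. For \Fcls, injectivity of $\wzmax$ follows from the equality $\wzint\circ\wzmax = \id$ (an easy consequence of \Bcls), while surjectivity reduces to showing that every normal extension of $\cGFM$ equals $\wzmax\wzint\Delta$ --- a Galois-style argument exploiting that Grzegorczyk frames are essentially intuitionistic.

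I expect \Gcls to be the main obstacle. It asks for FMP of $\Gamma\supseteq\iKFour$ from the existence of a canonical subframe $\biF$-companion over $\cSFM\lpl\kfax_\wzm$, which requires importing Zakharyaschev's full canonical-subframe-formula machinery into the bimodal setting and then transferring the resulting finite models back along $\wzint$. The delicate step is verifying that canonical subframe formulas in the bimodal language respect the persistency constraint imposed by $\mixax$ together with $\kfax_\wzm$, so that subframe selection commutes with the cluster-collapse used in \Ccls and the finite models produced are actually $\wzmax$-frames for $\Gamma$.
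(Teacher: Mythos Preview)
The paper does not prove this theorem at all: its ``proof'' consists entirely of pointers to specific results in the Wolter--Zakharyaschev references (Theorem 27, Corollary 28, Proposition 29, Theorem 30 in \cite{WolterZ97:al}; Theorems 11, 12 and Corollary 18 in \cite{WolterZ98:lw}), with only \Acls flagged as a direct consequence of Fact \ref{fact:wztreq}. This is a survey; the theorem is quoted, not established.

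Your sketch, by contrast, attempts the underlying arguments, and the broad strokes---the skeleton/cluster-quotient construction, the Galois connection between the two lattices, and the transfer of canonical-subframe machinery for \Gcls---are indeed what one finds in those references. Two points deserve care, though. For \Ccls, $\mixax$ by itself does not force $R_\wzi$ to be antisymmetric, so ``simply take the reduct when $\mixax\in\Delta$'' is not available; the quotient by $R_\wzi$-clusters is still the operative step. More seriously, your argument for \Ecls is incomplete: you only show how to refute formulas of the form $\wztr A$ on finite frames, whereas the finite model property for $\wzmax\Gamma$ requires refuting \emph{arbitrary} bimodal non-theorems $B\notin\wzmax\Gamma$. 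Re-reading a finite intuitionistic frame bimodally does not by itself bridge that gap; the actual argument in \cite{WolterZ98:lw} proceeds on the bimodal side and exploits $\sgrzax_\wzi$ to obtain the requisite finite countermodel.
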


\begin{proof}
\Acls can be easily proved from Fact \ref{fact:wztreq}; note that we need the assumption we are above $\cSF$. \Bcls is a consequence of Theorem 27 in \cite{WolterZ97:al}. \Ccls, \Dcls and \Ecls are consequences of Proposition 29 and Theorem 30 in \cite{WolterZ97:al} and Theorems 11 and 12 in \cite{WolterZ98:lw}.\Fcls is a consequence of Corollary 28 in \cite{WolterZ97:al}. \Gcls is a consequence of Corollary 18 in \cite{WolterZ98:lw}.
\end{proof}

\cite{WolterZ97:al,WolterZ98:lw} illustrate on many examples how powerful these results are. 
As it turns out, they also have corollaries of immediate interest for us.

\begin{cor} \label{cor:mhcfmp}
$\mHC$ is canonical and has the finite model property.
\end{cor}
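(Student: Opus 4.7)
The plan is to pass to a classical bimodal companion of $\mHC$ via Theorem \ref{th:wz} and pull each property back through the inverse operator $\wzint$. Set $\Delta_{\min} := \wzmin \mHC$ and $\Delta_{\max} := \wzmax \mHC$. By clause \Bcls both are $\biF$-companions of $\mHC$ containing $\mixax$, so in particular $\wzint \Delta_{\min} = \wzint \Delta_{\max} = \mHC$.

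For canonicity, I would apply clause \Dcls to conclude that $\Delta_{\min}$ itself is canonical. Since $\mixax \in \Delta_{\min}$, the second part of clause \Ccls then tells us that $\wzint$ preserves canonicity when restricted to such logics, so $\mHC = \wzint \Delta_{\min}$ is canonical.

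For the finite model property, I would turn instead to $\Delta_{\max}$. Clause \Ecls yields that $\Delta_{\max}$ has the finite model property, and the first part of clause \Ccls transfers it across $\wzint$, giving the finite model property for $\mHC$.

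The only subtle point worth checking is that clauses \Dcls and \Ecls really are applicable with no further assumption on the input $\mHC$: since they are stated as uniform properties of the operators $\wzmin$ and $\wzmax$ themselves, no extra verification is required. The main obstacle---such as it is---is purely clerical bookkeeping, namely chaining the relevant items of Theorem \ref{th:wz} in the correct order. It is worth noting in passing that the axiomatic content of $\mHC$ over $\iK$, namely $\rax$ and $\nxtaxb$, corresponds semantically to the very mild first-order conditions $\mcc \subseteq \icc$ and $\iccs \subseteq \mcc$ from Table \ref{tab:semcond}, which is precisely why the bimodal translation behaves so well; but no new mathematical content needs to be supplied here beyond what Theorem \ref{th:wz} already delivers.
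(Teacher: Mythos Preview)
Your argument is circular. Clauses \Dcls and \Ecls of Theorem \ref{th:wz} say that $\wzmin$ \emph{preserves} canonicity and $\wzmax$ \emph{preserves} the finite model property: that is, $\wzmin\Gamma$ is canonical \emph{provided $\Gamma$ already is}, and $\wzmax\Gamma$ has the fmp \emph{provided $\Gamma$ already does}. They are transfer results from the intuitionistic side to the classical bimodal side, not unconditional guarantees about the outputs. So when you invoke \Dcls to conclude that $\Delta_{\min}=\wzmin\mHC$ is canonical, you are assuming the very canonicity of $\mHC$ you set out to prove; likewise for \Ecls and the fmp. Your remark that ``no extra verification is required'' is precisely where the argument collapses.

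The paper avoids this circularity by working directly on the bimodal side. It computes $\wztr\rax$ and $\wztr\nxtaxb$ over $\cSFM$, rewrites them as the Sahlqvist formulas $\wzi A \to \wzm A$ and $\wzm B \wedge \wzid C \to \wzi(\wzid C \vee \wzi B)$, and observes that both have universal first-order correspondents. Hence $\wzmin\mHC$ is a canonical subframe logic over $\cSFM \lpl \kfax_{\wzm}$ by classical Sahlqvist theory---no appeal to \Dcls is needed. Canonicity of $\mHC$ then comes back via \Ccls, and the fmp via \Gcls (which requires exactly the ``canonical subframe'' condition just established, plus $\mHC \supseteq \iKFour$). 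Your parenthetical remark about the semantic conditions from Table \ref{tab:semcond} is pointing in the right direction, but that observation has to be turned into an actual Sahlqvist analysis of the translated axioms; it is not merely clerical bookkeeping.
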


\begin{proof}
First, note that   $$\cSFM \lpl \wztr\rax = \cSFM \lpl  (\wzi A \to \wzm A)\supseteq \cSFM \lpl \kfax_{\wzm}.$$ Clearly, $\wzi A \to \wzm A$  is a Sahlqvist formula with an universal FO counterpart.  Furthermore,  $\cSFM \lpl \wztr\nxtaxb$ is the same logic as the extension of $\cSFM$ with 
\begin{equation} \label{eq:bifnext}
\wzm B \wedge \wzid C \to \wzi(\wzid C \vee \wzi B). 
\end{equation}
The latter is a simple Sahlqvist implication (cf. e.g., \cite[Definition 3.47]{BlackburndRV01:ml}). Applying the algorithm in the proof of Theorem 3.49 in \cite{BlackburndRV01:ml} and doing some FO-preprocessing, we get an universal formula
\begin{equation} \label{eq:fonext}
\forall y,z,w.(x \icc y \wedge x \icc z \to (z \icc y \vee (z \icc w \to x \mcc w)))
\end{equation}
(where $\icc$ is the accessibility relation corresponding to $\wzi$ and $\mcc$ is the accessibility relation corresponding to $\wzm$). Thus, $\wzmin\mHC$ is a canonical subframe logic over $\cSFM \lpl \kfax_{\wzm}$. Now, canonicity of $\mHC$ follows from \Ccls and the fmp from \Gcls of Theorem \ref{th:wz}.
\end{proof}

\begin{myremark}\label{rem:fonxtaxb}
It is worth noting that the semantic counterpart of $\nxtax$ from Table \ref{tab:semcond}, i.e., $\iccs \, \subseteq \, \mcc$ is equivalent to \refeq{eq:fonext} above. 
For one direction, substitute $x = y$ in \refeq{eq:fonext} and use poset properties. For the other direction, note that whenever $x \icc y$ and $x \icc z$ but $\neg(z\icc y)$, then $x \iccs z$, ergo $x \mcc z$. Now whenever $z \icc w$, we can use the interaction between $\mcc$ and $\icc$ as expressed by $\mixax$ (in fact, even a weaker axiom would do). 
\end{myremark}

Canonicity of $\mHC$ has been noted
, e.g., in \cite{Esakia06:jancl,Goldblatt81:mlq}. I was unable to locate references where the finite model property has been explicitly claimed. 
The following corollary can appear more surprising, as bimodal logics over $\biF$ do not even occur in its statement.

\begin{cor}\label{cor:blokesakia}
The lattice of normal extensions of $\mHC$ is isomorphic to the lattice of normal extensions of $\cwGrz$. The sublattice of normal extensions of $\KM$ is isomorphic to the lattice of normal extensions of $\cKW$.
\end{cor}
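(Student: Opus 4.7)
Plan: The plan is to combine Theorem~\ref{th:wz}(F) with an explicit reduction of the bimodal companion $\wzmax(\mHC)$ (resp.\ $\wzmax(\KM)$) to the unimodal classical $\cwGrz$ (resp.\ $\cKW$).

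By Theorem~\ref{th:wz}(F), the map $\wzmax$ restricts to lattice isomorphisms
\[ [\mHC, \mF] \;\isom\; [\wzmax(\mHC), \biF] \quad\text{and}\quad [\KM, \mF] \;\isom\; [\wzmax(\KM), \biF], \]
so it suffices to exhibit lattice isomorphisms between the right-hand sides and, respectively, $[\cwGrz, \mF]$ and $[\cKW, \mF]$.

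First, I would identify the Kripke frames of $\wzmax(\mHC)$ semantically: by the computation in the proof of Corollary~\ref{cor:mhcfmp}, the added axioms $\wztr(\rax)$ and $\wztr(\nxtaxb)$ correspond respectively to $\mcc \subseteq \icc$ and $\iccs \subseteq \mcc$ (the latter via \refeq{eq:fonext} and Remark~\ref{rem:fonxtaxb}); together with $\sgrzax_\wzi$ from $\cGFM$ this says $\icc$ is a Noetherian partial order and $\mcc$ is determined by $\icc$ together with an arbitrary set $R \subseteq W$ of $\mcc$-reflexive points. Adding $\wztr(\rlax)$ in the $\KM$ case forces $R = \emptyset$, i.e.\ $\mcc = \iccs$.

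The key step is then the frame-level correspondence: a $\cwGrz$-frame $(W,S)$ (transitive, antisymmetric, Noetherian) yields the $\wzmax(\mHC)$-frame $(W, S\cup\diag, S)$, and conversely any $\wzmax(\mHC)$-frame $(W,\icc,\mcc)$ yields the $\cwGrz$-frame $(W,\mcc)$; these assignments are mutually inverse and restrict to a bijection between $\wzmax(\KM)$-frames and $\cKW$-frames. Validity is preserved modulo the formula translation sending $\wzi \mapsto \ribox$ and $\wzm \mapsto \ibox$ (recall $\ribox A = A \wedge \ibox A$), because in the bimodal frame the $\icc$-successors of $x$ are exactly the $S$-successors of $x$ together with $x$ itself. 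Lifting this to a lattice isomorphism on logics then rests on uniform Kripke completeness: on the right-hand side via Corollary~\ref{cor:mhcfmp} and Theorem~\ref{th:wz}(C,D,E), on the classical side via well-known results on $\cwGrz$ and $\cKW$.

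The main obstacle is precisely this last lifting step. Passing from a frame bijection to a lattice isomorphism requires a uniform completeness argument across all normal extensions, not just for the base logics. The natural route is via the canonical formula machinery of \cite{WolterZ97:al,WolterZ98:lw} already invoked in Theorem~\ref{th:wz}, supplemented by a parallel canonical-formula analysis for $\cwGrz$. Alternatively, one can mimic the original Blok--Esakia strategy algebraically, constructing mutually inverse functors between the varieties of $\mHC$-algebras and of $\cwGrz$-algebras---sending a $\cwGrz$-algebra to its $\ibox$-fixpoint Heyting algebra equipped with the coderivative $\irrbox$ recovered from $\ibox$ as in Corollary~\ref{cor:simmonsdef}, and conversely completing each $\mHC$-algebra into the corresponding $\cwGrz$-algebra.
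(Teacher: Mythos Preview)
Your opening move is exactly right and matches the paper: invoke Theorem~\ref{th:wz}\Fcls to reduce the problem to identifying the lattice of extensions of $\wzmax(\mHC)$ (resp.\ $\wzmax(\KM)$) with that of $\cwGrz$ (resp.\ $\cKW$). You also correctly spot the translation $\wzi \mapsto \ribox$, $\wzm \mapsto \ibox$.

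The gap is the one you yourself flag: a bijection between Kripke frames does not lift to a lattice isomorphism of logics unless you have uniform Kripke completeness for \emph{all} extensions, which you do not have (there are Kripke-incomplete logics above $\cwGrz$ and above $\cKW$). Your two suggested repairs---canonical-formula machinery, or an algebraic/functorial Blok--Esakia construction---would work in principle, but they are substantial projects, not one-line invocations.

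The paper sidesteps the whole issue by a purely syntactic argument that makes the frame detour unnecessary. Namely, one derives inside $\wzmax(\mHC)$ the equivalence
\[
\wzi A \;\leftrightarrow\; A \wedge \wzm A
\]
(using $\tax_\wzi$, the translated $\nxtaxb$ in the Sahlqvist form \refeq{eq:bifnext}, and $\sgrzax_\wzi$), together with $\grzax_\wzm$. This shows that in $\wzmax(\mHC)$ the modality $\wzi$ is \emph{term-definable} from $\wzm$, so $\wzmax(\mHC)$ is a definitional extension of a unimodal classical logic, which one then checks is axiomatized by $\kfax_\wzm$ and $\grzax_\wzm$, i.e.\ is $\cwGrz$. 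Term-definability of one signature in terms of the other immediately gives an isomorphism of the lattices of extensions, with no completeness hypotheses needed. For the $\KM$ clause one simply observes that adding $\wztr(\rlax)$ over $\wzmax(\mHC)$ amounts to adding $\kwax_\wzm$. So the piece you are missing is precisely the short syntactic derivation of $\wzi A \leftrightarrow A \wedge \wzm A$; once you have it, your frame-level picture becomes a consequence rather than the engine of the proof.
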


\begin{proof}
The heart of the proof is to notice that $\wzi A \leftrightarrow \wzm A \wedge A$ and $\grzax_{\wzm}$ are derivable in $\wzmax\mHC$; in fact, these two formulas axiomatize this logic over $\biF \lpl \kfax_{\wzm}$. Let us derive the first of them. For convenience, we will do it in the algebraic setting: 
\newcommand{\bywhat}[1]{ \qquad \qquad \text{by } {#1}}
\begin{align*}
\wzm A \wedge A \wedge \wzid\neg A \leq &  \bywhat{\tax_{\wzi}}\\ 
\wzm A \wedge \wzid(A \wedge \wzid \neg A) \leq & \bywhat{\text{\refeq{eq:bifnext}}} & \\ 
\wzi(\wzid(A \wedge  \wzid\neg A) \vee \wzi A) \leq & \bywhat{\tax_{\wzi}}  \\ 
\wzi(\wzid(A \wedge  \wzid\neg A) \vee A) = & \\
\wzi(\wzi(A \to \wzi A) \to A) \leq  & \, \wzi A \bywhat{\grzax_{\wzi}.} 
\end{align*}
  We get that $\wzmax\mHC$ is just a notational variant of $\wzmax\cwGrz$, with $\wzm$ being $\ibox$ and $\wzi$ being $\ribox$. This yields the first statement by clause \Fcls of Theorem \ref{th:wz}. For the second, it is enough to add the observation that over $\wzmax\mHC$, adding $\wztr\rlax$ is equivalent to adding $\kwax_{\wzm}$.
\end{proof}

The second statement of Corollary \ref{cor:blokesakia} above was first proved by Kuznetsov and Muravitsky in mid-1980's, see \cite{KuznetsovM86:sl,Muravitsky:here}. The first statement was announced in \cite{Esakia06:jancl} as follows:

\begin{quote}
Finally let us note that \dots the lattice Lat(mHC) of all extensions of mHC is isomorphic to the lattice Lat(K4.Grz) of all normal extensions of the modal system K4.Grz. However, a proof of this result requires additional considerations as the above algebraic machinery does not suffice for it.
\end{quote}

It seems that the proof has not been published so far.

\begin{cor}[\cite{Goldblatt81:mlq,KuznetsovM86:sl,mur81:mz}] \label{cor:kmfmp}
$\KM$ has the finite model property.
\end{cor}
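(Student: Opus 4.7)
The plan is to exploit the isomorphism from Corollary \ref{cor:blokesakia} between the lattice of extensions of $\KM$ and the lattice of extensions of $\cKW$, together with the preservation clauses of Theorem \ref{th:wz}, running the argument parallel to Corollary \ref{cor:mhcfmp} but transferring FMP in the opposite direction---down from the classical side rather than up from a canonical subframe companion. Note that the \Gcls-route used for $\mHC$ is not available here, because $\kwax_{\wzm}$ is famously non-canonical on the bimodal side; hence clause \Ccls paired with Segerberg's classical result has to do the work.

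First, I would recall from the proof of Corollary \ref{cor:blokesakia} that $\wzmax\KM$ is, up to notational variance, the classical provability logic $\cKW$: inside $\wzmax\mHC$ one already has $\wzi A \leftrightarrow \wzm A \wedge A$, making $\wzi$ definable from $\wzm$, and adding $\wztr\rlax$ amounts to adding $\kwax_{\wzm}$. Consequently any finite transitive, converse well-founded Kripke frame witnessing refutation of a non-theorem of $\cKW$ converts---by declaring $\icc$ to be the reflexive closure of $\mcc$---into a finite bimodal frame validating $\cSFM$, $\sgrzax_{\wzi}$ and $\wztr\rlax$ while still refuting the translated formula. Invoking Segerberg's classical finite model property for $\cKW$, one obtains FMP for $\wzmax\KM$.

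Finally, clause \Bcls of Theorem \ref{th:wz} guarantees that $\wzmax\KM$ is a $\biF$-companion of $\KM$, so $\wzint\wzmax\KM = \KM$; clause \Ccls guarantees that $\wzint$ preserves the finite model property. Composing, $\KM$ inherits FMP from $\wzmax\KM$.

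The only step genuinely requiring care is the classical-to-bimodal translation of finite countermodels in the first paragraph, but this is little more than bookkeeping once the defining identity $\wzi A \leftrightarrow \wzm A \wedge A$ has been secured in Corollary \ref{cor:blokesakia}; the serious mathematical input is entirely black-boxed inside the classical FMP for $\cKW$. An alternative, more direct route would bypass the WZ machinery altogether and apply a standard filtration argument to the concrete Kripke semantics of Corollary \ref{lm:kmkrcorr}, but the present route has the virtue of harmonizing cleanly with the framework already developed for $\mHC$ in Corollary \ref{cor:mhcfmp}.
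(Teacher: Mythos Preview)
Your proposal is correct and follows essentially the same route as the paper: identify $\wzmax\KM$ with $\cKW$ via the proof of Corollary~\ref{cor:blokesakia}, invoke the classical finite model property for $\cKW$, and then transfer it to $\KM$ using clause~\Ccls of Theorem~\ref{th:wz}. Your middle paragraph spelling out the frame conversion is more detail than strictly needed---since $\wzmax\KM$ is literally a notational variant of $\cKW$, its finite model property is immediate from that of $\cKW$ without rebuilding bimodal frames by hand---but the extra commentary on why the \Gcls-route fails here (non-canonicity of $\kwax_{\wzm}$) and the explicit appeal to \Bcls are both accurate and add clarity that the paper's terse two-line proof leaves implicit.
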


\begin{proof}
The proof of Corollary \ref{cor:blokesakia} has established that $\wzmax\KM$ is just a notational variant of $\cKW$, with $\wzm$ being $\ibox$ and $\wzi$ being $\ribox$. Now use clause \Ccls of Theorem \ref{th:wz} and the finite model property for $\cKW$ (see, e.g., \cite{BlackburndRV01:ml,Boolos93:lop,ChagrovZ97:ml,Fine85:jsl,Moss07:jpl} for references).
\end{proof}

\begin{myremark}\label{}
Note that we could also prove Corollary \ref{cor:mhcfmp} in an analogous way to Corollary \ref{cor:kmfmp}, using the fmp of $\cwGrz$ established explicitly by Amerbauer \cite{Amerbauer96:sl}. The latter is actually a direct consequence of $\cwGrz$ being a transitive subframe logic \cite{Fine85:jsl,Litak07:bsl}. However, I believe that the proof of Corollary \ref{cor:mhcfmp} provided above has some additional value: we obtained a convenient form of $\wztr\nxtaxb$---which we actually used in the proof of Corollary \ref{cor:blokesakia}---together with its FO translation, which also provides some additional insight, as discussed in Remark \ref{rem:fonxtaxb}. 
\end{myremark}

\begin{myremark}\label{}
It could be an interesting exercise---and very much in the spirit of the Tbilisi school---to show that the above-discussed results of \cite{WolterZ97:al,WolterZ98:lw,WolterZ:here} survive when the base bimodal logic is weakened from $\cSF$  to the fusion of $\cK_{\mathbf{i}} \lpl (A \wedge \wzi A \to \wzi\wzi A)$ with $\cK_{\mathbf{m}}$ and the translation $\wztr$ is modified to $\wztr^*$ replacing every subformula $A$ with $\wztr^*A \wedge \wzi\wztr^*A$. On the other hand, it is not obvious how much generality would be really gained in this way. Note that using Wolter and Zakharyaschev's original results we were able to investigate lattices of logics which are \textbf{not} extensions of $\cSFour$, such as $\cwGrz$ in Corollary \ref{cor:blokesakia} above.
\end{myremark}





\section{Scattered Toposes} \label{sec:scatopoi}

\renewcommand{\cmp}{\circ}

We are ready to discuss the \emph{topos of trees} 
of  \cite{BirkedalMSS12:lmcs}, \emph{scattered toposes} of  \cite{EsakiaJP00:apal} and the relationship between fixpoint results 
in both papers.

\subsection{Preliminaries on Topos Logic}

Just like Section \ref{sec:complibt} assumed certain familiarity with technicalities of modal logic, this section in turn assumes some familiarity with basics of category theory---mostly with the notions of a ccc (cartesian closed category), a functor and a natural transformation. Those readers who know more than that, in particular understand well the internal logic of a topos,  can probably skip this subsection. Due to obvious space constraints, the presentation has to be rather abstract and example-free; see \cite{Goldblatt06:topoi,johnstone2002sketches,MacLaneM92:sigl} for more examples and motivation. Furthermore, like any presentation of topos theory by logicians and for logicians, it can be accused of neglecting spatial intuitions. See, e.g., \cite{McLarty90:bjps} for a passionate polemic with the view that toposes were invented to generalize set-theoretical foundations of mathematics.\footnote{Speaking of \cite{McLarty90:bjps}, footnote 4 provides an argument that the plural form intended by Grothendieck was \emph{toposes} rather than \emph{topoi}. I stick to the same convention, also because---as a quick Google search shows---the form \emph{toposes} is used mostly by mathematicians, whereas \emph{topoi} seems prevalent for unrelated notions in the humanities. Besides, this was the form used by Leo, Mamuka and Dito.} Nevertheless,  applications of toposes in fields like algebraic geometry or foundations of physics or their actual historical origins are not directly relevant here. My aim is a minimalist presentation focusing on the contrast between the logic of a topos and that of a ccc, but also making clear how the Beth-Kripke-Joyal semantics is related to more familiar ones for the intuitionistic predicate logic. Of all accounts in the literature, the one in \cite{LambekS86:ihocl} is probably closest to this goal. 

Let $\ctC$ be a ccc with  the terminal object $\trmo$ and for any $Y \in \ctC$, let $\finim_Y$ be the unique element of $\ctC[Y,\trmo]$. I use the obvious notation for (finite) products, coproducts (whenever they exist, but in a topos they always do), their associated morphisms and I denote the ccc evaluation mapping $B^A\times A \to B$ as $\eval_{A,B}$. Recall that $\ctC$ is an \emph{elementary topos}  if there exists  an  object $\Omega \in \ctC$  s.t. $(\Omega, \trmo \stackrel{\topT}{\to} \Omega)$ is a \emph{subobject classifier}, i.e., for any \emph{monic} (left-cancellable morphism) $Y \stackrel{f}{\monar} X$ there exists exactly one mapping $X \stackrel{\chrm{f}}{\to} \Omega$ s.t. 
we have a pullback diagram:
\[
\vcenter{
\xymatrix{
Y \ar[d]_{\finim_Y}\ar@{>->}[r]^{f}_<<{\pb} & X \ar[d]^{\chrm{f}}\\
\trmo \ar[r]^{\topT} & \Omega
}}
\]

\noindent  As observed by C. Juul Mikkelsen, this definition already implies that $\ctC$ is \emph{bicartesian closed}, where the latter notion is defined as in, e.g., \cite{LambekS86:ihocl}; 
see \cite[Section 4.3]{Goldblatt06:topoi} for references.

Before we proceed with  formal  definitions, some general discussion can be helpful. In every category, topos or not, (equivalence classes of) monics into $X$ are abstract counterparts of subsets of $X$; in fact, they are called \emph{subobjects of $X$}, just like morphisms $\trmo \to X$ are \emph{global elements} of $X$. 
Global elements (or equivalence classes thereof) can be considered as special cases of subobjects: think of the usual identification of an element $x \in X$ with the subset $\{x\}$.\footnote{Note that toposes very rarely happen to mimic sets in having enough global elements to determine all subobjects; such special toposes are called \emph{well-pointed}.}  We can  go further and define a \emph{generalized element} of $X$ as \emph{any} morphism $A \to X$, which is then called \emph{$A$-based} or \emph{defined over $A$}. See \cite[Section V]{MacLaneM92:sigl} for a lucid and brief discussion of those notions.

 In particular, the global elements of $\Omega$ can be identified with logical constants, $X$-based generalized elements of $\Omega$ with predicates over $X$ (i.e., formulas with a single free variable from $X$) and $n$-ary propositional connectives with morphisms  $\Omega^n \to \Omega$. 
Set $\botT \deq \chrm{\finim_0}$, $\negT \deq \chrm{\botT}$, $\andT \deq \chrm{\prdar{\topT,\topT}}$ and $\orT \deq \chrm{\cprar{\prdar{\topT_\Omega,\id_\Omega},\prdar{\id_\Omega,\topT_\Omega}}}$. Recall that for any $X\in\ctE$, $\topT_X$ stands for $\topT \cmp \finim_X$ and $\eqcT_X$ stands for  $\chrm{\prdar{\id_X,\id_X}}$. The latter allows to define \emph{internal equality for generalized elements of type $X$} as $\sigma \eqT \tau \deq \eqcT_X \cmp \prdar{\sigma,\tau}$. That is, if $A \stackrel{\sigma}{\to} X$ and $B \stackrel{\tau}{\to} X$ are generalized elements of $X$, then $\sigma \eqT  \tau$ is a generalized element of $\Omega$ defined over $A \times B$. For $\Omega$, we can define not only $\eqcT_\Omega$, but also $\leqcT_\Omega$ as the equalizer of 
$\vcenter{
\xymatrix{
\Omega \times \Omega \ar@<1ex>[r]^{\andT}\ar@<-1ex>[r]_{\prdf} & \Omega
}}$. 
Implication, the only remaining intuitionistic connective, can be now defined as $\impT \deq \chrm{\leqcT_\Omega}$. 

Thus, in toposes one can reduce reasoning about the poset  of subobjects of any given object $X \in \ctC$ (in fact, whenever $\ctC$ is a topos, this poset is always a lattice and even a Heyting algebra---see \cite[Theorem IV.8.1]{MacLaneM92:sigl}) to reasoning about  $\ctC[X,\Omega]$ and further still to reasoning about \emph{an internal Heyting algebra} provided by a suitable exponential object. What this means is: in any category,  monics into $X$ have a natural preorder defined as $f \subseteq g$ if $f$ factors through $g$, i.e., there is a morphism $h$ s.t. $f = g \cmp h$. Dividing by equivalence classes with respect to $\subseteq$, we get a category-theoretic generalization of the poset of subsets of $X$ ordered by inclusion. In general, without understanding the global structure of $\ctC$, we are not likely to learn much about these posets of subobjects. But in a topos, the poset of subobjects of $X$ is isomorphic to something more tangible: namely, to $\ctC[X,\Omega]$, i.e., $X$-based generalized elements of $\Omega$. Think of the usual identification of subsets of $X$ with elements of $2^X$. 
Here is also where first- and higher-order aspects of the internal logic come into play. 

If $\ctC$ is a category with products, a \emph{power object} of $X \in \ctC$ is a pair $\vcenter{
\xymatrix{(\pwoC{X}, \memoC_X \ar@{>->}[r]^{\memmC_X} & \pwoC{X} \times X)}}$ s.t. for any $\vcenter{
\xymatrix{(Y, R \ar@{>->}[r]^{r} & Y \times X)}}$ there exists exactly one $\vcenter{
\xymatrix{Y \ar@{->}[r]^{f_r} & \pwoC{X}}}$ for which there is a pullback
\[
\vcenter{
\xymatrix{
R \ar[d]\ar@{>->}[r]^{r}_<<{\pb} & Y \times X \ar[d]^{f_r \times \id_X}\\
\memoC_X\ar@{>->}[r]^{\memmC_X} & \pwoC{X} \times X
}}
\]

As shown, e.g., in \cite[Theorem 4.7.1]{Goldblatt06:topoi}, in any topos we can take $\pwoC{X}$ to be $\Omega^X$ 
and the subobject  $\vcenter{
\xymatrix{\memoC_X \ar@{>->}[r]^{\memmC_X} & \Omega^X  \times X}}$ can be obtained by pulling back $\topT$ along $\Omega^X \times X \stackrel{\eval_{X,\Omega}}{\to} \Omega$. Thus, we see that in a topos, the notions of power object, subobject classifier and exponential object are indeed well-matched and we can define the membership predicate $\sigma \inT \tau$ for a pair of generalized elements  $(A \stackrel{\sigma}{\to} X, B \stackrel{\tau}{\to} \pwoC{X})$  as $\eval_{X,\Omega} \cmp \prdar{\tau,\sigma}$. 
We are now ready for a single definition formalizing the whole discussion above and more (see \cite[Sec. VI.5-7]{MacLaneM92:sigl} and also \cite{Crole93:c4t,LambekS86:ihocl}):

\begin{defn}[The  Mitchell-B\`{e}nabou languague]
  Consider a topos $\ctC$.  The  \emph{collection of ground types} and the \emph{signature} of the  Mitchell-B\`{e}nabou language of $\ctC$ are defined, respectively, as 
\begin{align*}
\Grnd{\ctC} \deq & \{ \synt{E} \mid  E \in \ctC \} \\
\Sg{\ctC} \deq & \{ \synt{f}:\synt{F_1},\dots,\synt{F_n} \to \synt{E} \mid f \in \ctC[F_1\times\dots\times F_n,E] \}
\end{align*}
(instead of $\synt{k}:\synt{\trmo} \to \synt{E}$ I will write $\synt{k}:\synt{E}$) 
and  \emph{the full collection of types $\Typs{\ctC}$} is 
$A, B ::= \synt{E} \mid \chtrm \mid \chom \mid A \chtim B \mid \chexp{A}{B}$ 
where $\synt{E} \in \Grnd{\ctC}$. 
$\chexp{A}{\chom}$ can be also written as $\chpow{A}$.
Fix, moreover, a supply of term variables $x, y, z \dots \in \tVar$. \emph{The collection of terms $\Trms{\ctC}$ over $\Sg{\ctC}$} is defined as 
\begin{equation*}
M, N
 ::=  x \mid  \synt{f}\overline{M} \mid \chfin \mid M \eqT N \mid \chprd{M}{N} \mid  \chprdf{M} \mid  \chprds{M} \mid \lambda x:A.M \mid \chof{M}{N} 
\end{equation*}
where $x \in \tVar$ and $\synt{f} \in \Sg{\ctC}$ is of suitable arity. The \emph{typing rules} and some standard abbreviations (including all logical connectives) of the language are defined in Table \ref{tab:mbrules}.
Interpretation of types, contexts and terms-in-context in $\ctC$ is given in Table \ref{tab:mbint}. 
\end{defn}

\def\bgskip{2.8mm}
\def\tbskip{3.8mm}

\begin{table}
\begin{tabular}{>{$}c<{$}>{$}c<{$}>{$}c<{$}>{$}c<{$}}
 \hline
\inferrule{ }{\Gamma, x:A \chr x:A} &  \inferrule{\Gamma \chr M : A \quad \Gamma \chr N : A}{\Gamma \chr M \eqT N : \chom} & \inferrule{ }{\Gamma\chr\chfin : \chtrm} &  \inferrule{\Gamma \chr M: A \quad  \Gamma \chr N:B}{\Gamma\chr\chprd{M}{N}: A \chtim B} \\[\tbskip]
\inferrule{\Gamma,x:A \chr M:B}{\Gamma\chr\lambda x:A.M:\chexp{A}{B}} & \inferrule{\Gamma\chr M:\chexp{A}{B} \quad \Gamma\chr N:A}{\Gamma \chr \chof{M}{N}:B} & \inferrule{\Gamma\chr M: A \chtim B}{\Gamma\chr\chprdf{M}: A} & \inferrule{\Gamma\chr M: A \chtim B}{\Gamma\chr\chprds{M}: B} \\[\tbskip] 
\multicolumn{4}{c}{\inferrule{ \synt{f}:\synt{F_1},\dots,\synt{F_n} \to \synt{E} \in \Sg{\ctC} \quad \Gamma \chr M_1 : \synt{F_1} \dots \Gamma \chr M_n : \synt{F_n}}{\Gamma \chr \synt{f}M_1\dots M_n : \synt{E}}} \\[\bgskip] 
\\
\chtrue \deq \chfin \eqT \chfin & & \multicolumn{2}{l}{$\phi \wedge \psi \deq \chprd{\phi}{\psi} \eqT \chprd{\chtrue}{\chtrue}$}  \\
\multicolumn{2}{l}{$\forall x:A.\phi \deq \lambda x:A.\phi \eqT \lambda x:A.\chtrue$} & \phi \chto \psi \deq \phi \wedge \psi \eqT \phi & \\ 
\multicolumn{2}{l}{$\exists x:A. \phi \deq \forall t:\synt{\Omega}.((\forall x:A.\phi \chto t) \chto t)$} & \chfalse \deq \forall t:\synt{\Omega}.t & \\
\multicolumn{2}{l}{$\phi \vee \psi \deq \forall t:\synt{\Omega}.((\phi \chto t) \wedge (\psi \chto t) \chto t)$} & \neg\phi \deq \phi \chto \chfalse  & 
\\[\tbskip]\hline
\end{tabular}
\caption{\label{tab:mbrules}Typing rules and defined abbreviations of the  Mitchell-B\`{e}nabou language of $\ctC$}
\end{table}

\begin{table}
\begin{tabular}{>{$}c<{$}>{$}c<{$}>{$}c<{$}>{$}c<{$}>{$}c<{$}}\hline
\dent{\chE} \deq E & \dent{\chtrm} \deq \trmo & \dent{\chom} \deq \Omega & \dent{A \chtim B} \deq \dent{A} \times \dent{B} & \dent{\chexp{A}{B}} \deq \dent{B}^{\dent{A}} \\[\tbskip]
\multicolumn{5}{c}{$\dent{x_1:A_1,\dots,x_n:A_n} \deq \dent{A_1} \times \dots \times \dent{A_n}$} \\[\tbskip]
\multicolumn{3}{c}{$\inferrule{ }{\dent{\Gamma, x:A \chr x:A} = \pi: \dent{\Gamma} \times \dent{A} \to \dent{A}}$} &
\multicolumn{2}{c}{$
\inferrule{ }{\dent{\Gamma\chr\chfin : \chtrm} = \finim_{\dent{\Gamma}}}
$} 
\\[\tbskip]
\multicolumn{5}{c}{$\inferrule{ f \in \ctC[A_1 \times\dots \times A_n, B] \quad \dent{\Gamma \chr M_1 : A_1} =  \sigma_1: \dent{\Gamma} \to \dent{A_1}  \dots \dent{\Gamma \chr M_n : A_n} = \sigma_n: \dent{\Gamma} \to \dent{A_n} }{\dent{\Gamma \chr \synt{f}M_1\dots M_n : B} \deq f \cmp \langle \sigma_1,\dots,\sigma_n\rangle: \dent{\Gamma} \to B}$} \\[\tbskip]
 \multicolumn{5}{c}{$\inferrule{\dent{\Gamma \chr M : A} =  \sigma: \dent{\Gamma} \to \dent{A} \quad \dent{\Gamma \chr N : A} =  \tau: \dent{\Gamma} \to \dent{A}}{\dent{\Gamma \chr M \eqT N : \chom} \deq  \chrm{\prdar{\id_{\dent{\Gamma}},\id_{\dent{\Gamma}}}} \cmp \prdar{\sigma,\tau}}$} \\[\tbskip] 
\multicolumn{5}{c}{$\inferrule{\dent{\Gamma \chr M: A} = \sigma: \dent{\Gamma} \to \dent{A}  \quad  \dent{\Gamma \chr N:B} = \tau: \dent{\Gamma} \to \dent{B}}{\dent{\Gamma\chr\chprd{M}{N}: A \chtim B} \deq \langle\sigma, \tau\rangle: \dent{\Gamma} \to \dent{A} \times \dent{B}}$} \\[\tbskip] 
\multicolumn{3}{c}{$\inferrule{\dent{\Gamma \chr M: A \times B} = \sigma: \dent{\Gamma} \to \dent{A}  \times \dent{B}}{\dent{\Gamma\chr\chprdf{M}: A} \deq \prdf\cmp\sigma: \dent{\Gamma} \to \dent{A}}$} &
\multicolumn{2}{c}{$\inferrule{\dent{\Gamma \chr M: A \times B} = \sigma: \dent{\Gamma} \to \dent{A}  \times \dent{B}}{\dent{\Gamma\chr\chprds{M}: B} \deq \prds\cmp\sigma: \dent{\Gamma} \to \dent{B}}$} \\[\tbskip]
\multicolumn{5}{c}{$\inferrule{\dent{\Gamma,x:A \chr M:B} = \sigma:\dent{\Gamma}\times\dent{A} \to \dent{B}}{\dent{\Gamma\chr\lambda x:A.M:\chexp{A}{B}} \deq \currya{\dent{\Gamma} \times \dent{A},\dent{B}}{\sigma}:\dent{\Gamma} \to \dent{B}^{\dent{A}}}$} \\[\tbskip]
\multicolumn{5}{c}{$\inferrule{\dent{\Gamma\chr M:\chexp{A}{B}} = \sigma: \dent{\Gamma} \to \dent{B}^{\dent{A}} \quad \dent{\Gamma\chr N:A} = \tau: \dent{\Gamma} \to \dent{A}}{\dent{\Gamma \chr \chof{M}{N}:B} \deq \eval_{\dent{A},\dent{B}} \cmp \prdar{\tau,\sigma}}$} \\[\tbskip]\hline
\end{tabular}
\caption{\label{tab:mbint}Interpretation of types, contexts and terms-in-context}
\end{table}



\begin{defn}[Forcing for an elementary topos]
Assume $\Gamma = x_1:\synt{F_1}, \dots, x_n:\synt{F_n}$ and $\Gamma \chrmn \phi: \chom$. By $\subob{\langle x_1\dots x_n \rangle}{F_1 \times \dots \times F_n}{\phi}$, I will denote the pullback of the following diagram:
\[
\vcenter{
\xymatrix{
& \dent{\Gamma} \ar[d]^{\dent{\Gamma\chr\phi:\chom}}\\
\trmo \ar[r]^{\topT} & \Omega
}}
\]
Now for $F \stackrel{f_1}{\to} F_1, \dots, F \stackrel{f_n}{\to} F_n$ write $\tmod{F}{f_1,\dots,f_n}{\phi}$ if $F \stackrel{\langle f_1,\dots,f_n\rangle}{\to} \dent{\Gamma}$ factors through $\subob{\langle x_1\dots x_n \rangle}{F_1 \times \dots \times F_n}{\phi} \to \dent{\Gamma}$. In what follows, $f_1\cmp g,\dots,f_n\cmp g$ will be denoted by $\ofseq\cmp g$. Moreover, let $\dent{\Gamma\chr\phi:\chom} = \topT_{\dent{\Gamma}}$ be written as  $\conss{\Gamma}{\ctE}{\phi}$.

\end{defn}


\begin{fact} \label{prop:truthcond}
\mbox{\newline}
\begin{itemize}
\item $\tmod{F}{\ofseq}{\phi}$ iff $\dent{\Gamma\chr\phi:\chom} \cmp \ofseq = \topT_F$
\item
 $\conss{\Gamma}{\ctE}{\phi}$ iff 
for any $F \stackrel{\ofseq}{\to} \dent{\Gamma}$, it holds that $\tmod{F}{\ofseq}{\phi}$ 
\end{itemize}
\end{fact}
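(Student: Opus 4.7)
The plan is to derive both clauses directly from the universal property of the pullback that defines the subobject $\subob{\langle x_1,\ldots,x_n\rangle}{F_1 \times \cdots \times F_n}{\phi}$. No new categorical input beyond this is needed; what is at stake is essentially a careful unpacking of definitions.

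For the first clause, I would proceed as follows. By construction, the monic $\subob{\langle x_1,\ldots,x_n\rangle}{F_1\times\cdots\times F_n}{\phi}\to\dent{\Gamma}$ is a pullback of $\trmo\stackrel{\topT}{\to}\Omega$ along $\dent{\Gamma\chr\phi:\chom}:\dent{\Gamma}\to\Omega$. Hence, by the universal property of this pullback, a morphism $F\stackrel{\ofseq}{\to}\dent{\Gamma}$ factors through $\subob{\cdots}{\cdots}{\phi}\to\dent{\Gamma}$ iff there exists a (necessarily unique) morphism $F\to\trmo$ rendering the ensuing square commutative. Because $\trmo$ is terminal, the only candidate is $\finim_F$, and the commutativity of the square reduces precisely to $\dent{\Gamma\chr\phi:\chom}\cmp\ofseq = \topT\cmp\finim_F$, which by definition is $\topT_F$.

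For the second clause, combining the first clause with the defining equation of $\conss{\Gamma}{\ctE}{\phi}$, the task becomes showing that $\dent{\Gamma\chr\phi:\chom}=\topT_{\dent{\Gamma}}$ is equivalent to the statement that $\dent{\Gamma\chr\phi:\chom}\cmp\ofseq=\topT_F$ for every $F\stackrel{\ofseq}{\to}\dent{\Gamma}$. The forward direction is obtained by post-composing both sides with $\ofseq$ and invoking terminality of $\trmo$: $\topT_{\dent{\Gamma}}\cmp\ofseq = \topT\cmp\finim_{\dent{\Gamma}}\cmp\ofseq = \topT\cmp\finim_F = \topT_F$. The converse is immediate by specialising to $F\deq\dent{\Gamma}$, $\ofseq\deq\id_{\dent{\Gamma}}$.

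There is, honestly, no substantive obstacle: the content of the Fact is exhausted by the universal property of the subobject classifier, which has already been deployed in defining $\subob{\cdots}{\cdots}{\phi}$. If anything, the only thing to watch is the tacit notational identification of $\ofseq$---formally an $n$-tuple of morphisms $f_1,\ldots,f_n$ with common domain $F$---with the induced mediating morphism $\langle f_1,\ldots,f_n\rangle:F\to\dent{\Gamma}$, so that $\dent{\Gamma\chr\phi:\chom}\cmp\ofseq$ is well-typed. Since this identification is already in force in the definition preceding the Fact, no additional care is required.
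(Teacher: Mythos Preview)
Your proposal is correct and is precisely the intended argument: the paper states this as a Fact without proof, as it is an immediate unpacking of the pullback definition of $\subob{\cdots}{\cdots}{\phi}$ and the terminality of $\trmo$. Your care with the identification of $\ofseq$ with $\langle f_1,\ldots,f_n\rangle$ is appropriate but, as you note, already built into the preceding definition.
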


The following result, which can be found as Theorem VI.6.1 in \cite{MacLaneM92:sigl} or Theorem II.8.4 in \cite{LambekS86:ihocl}, connects the definition of forcing given above with more standard intuitionistic semantics:

\begin{thm}[Beth-Kripke-Joyal semantics in an elementary topos] \label{thm:bkjs}
Assume $F \stackrel{f_1}{\to} E_1, \dots, F \stackrel{f_n}{\to} E_n$ and $\Gamma = x_1:\chE_1 \dots x_n:\chE_n$.
\begin{itemize}
\item $\tmod{F}{\ofseq}{\phi \wedge \psi}$ iff $\tmod{F}{\ofseq}{\phi}$ and $\tmod{F}{\ofseq}{\psi}$
\item $\tmod{F}{\ofseq}{\phi \vee \psi}$ iff there are arrows $G \stackrel{g}{\to} F$ and $H \stackrel{h}{\to} F$ s.t. $G \cpp H \stackrel{\cpr{g}{h}}{\epim} F$ is epi, $\tmod{G}{\ofseq \cmp g}{\phi}$ and $\tmod{H}{\ofseq \cmp h}{\psi}$
\item $\tmod{F}{\ofseq}{\phi \chto \psi}$ iff for any $G \stackrel{g}{\to} F$ it holds that $\tmod{G}{\ofseq \cmp g}{\psi}$ whenever $\tmod{G}{\ofseq \cmp g}{\phi}$
\item $\tmod{F}{\ofseq}{\neg\phi}$ iff  for any $G \stackrel{g}{\to} F$, it holds that $G \isom \inio$ whenever $\tmod{G}{\ofseq \cmp g}{\phi}$ \newline
For the case of quantified formulas, note that $\Gamma \chr \forall x_{n+1}:\chE_{n+1}.\phi$ iff $\Gamma, x_{n+1}:\chE_{n+1} \chr \phi$. Same holds for $\Gamma \chr \exists x_{n+1}:\chE_{n+1}.\phi$. Then we have:
\item $\tmod{F}{\ofseq}{\forall x_{n+1}:\chE_{n+1}.\phi}$ iff for every $G \stackrel{g}{\to} F$ and every $G \stackrel{g'}{\to} E_{n+1}$ it holds that $\tmod{G}{f_1 \cmp g,\dots, f_n \cmp g, g'}{\phi}$
\item $\tmod{F}{\ofseq}{\exists x_{n+1}:\chE_{n+1}.\phi}$ iff there exist $G \stackrel{g'}{\to} E_{n+1}$ and an epi $G \stackrel{g}{\epim} F$ s.t. $\tmod{G}{f_1 \cmp g,\dots, f_n \cmp g,g'}{\phi}$
\item $\tmod{F}{\ofseq}{\sigma \eqT \tau}$ iff $\dent{\Gamma \chr \sigma:\chE} \cmp \ofseq = \dent{\Gamma \chr \tau:\chE} \cmp \ofseq$
\end{itemize}
\end{thm}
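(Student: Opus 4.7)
The plan is to reduce each clause to the characterization in Fact~\ref{prop:truthcond}, namely $\tmod{F}{\ofseq}{\phi}$ iff $\dent{\Gamma\chr\phi:\chom}\cmp\ofseq = \topT_F$, and then unfold the denotation $\dent{\Gamma\chr\phi:\chom}$ using the abbreviations in Table~\ref{tab:mbrules} and the interpretation rules of Table~\ref{tab:mbint}. The case of $\sigma\eqT\tau$ is immediate: the denotation is $\chrm{\prdar{\id,\id}}\cmp\prdar{\dent{\sigma},\dent{\tau}}$, and postcomposing with $\ofseq$ gives $\topT_F$ precisely when $\dent{\sigma}\cmp\ofseq = \dent{\tau}\cmp\ofseq$ by the universal property of the subobject classifier.

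For conjunction, unfolding $\phi\wedge\psi\deq\chprd{\phi}{\psi}\eqT\chprd{\chtrue}{\chtrue}$ reduces to the previous case: the pair $\prdar{\dent{\phi},\dent{\psi}}\cmp\ofseq$ equals $\prdar{\topT_F,\topT_F}$ iff each component does. For implication, unfolding $\phi\chto\psi\deq\phi\wedge\psi\eqT\phi$ and chasing the pullback square defining the subobject $\subob{\cdots}{\cdots}{\phi}$ shows that $\tmod{F}{\ofseq}{\phi\chto\psi}$ is equivalent to the subobject $\ofseq^{*}\dent{\phi}\monar F$ being contained in $\ofseq^{*}\dent{\psi}\monar F$; the universal characterization over all $g:G\to F$ follows because containment of subobjects is witnessed exactly by factorization through arbitrary stages. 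Negation is then the instance $\psi = \chfalse$, using that $\dent{\chfalse} = \chrm{\finim_{\inio}}$, so that $g$ factors through $\dent{\phi}$ only when it factors through the initial object, which in a topos forces $G\isom\inio$.

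The main obstacle lies with disjunction and the existential quantifier, whose Mitchell--B\`{e}nabou definitions are impredicative. The standard resolution is to prove first the topos-theoretic fact that $\dent{\phi\vee\psi}$ is the image of the copair $\cprar{\prdar{\dent{\phi}\text{-monic}},\prdar{\dent{\psi}\text{-monic}}}$ under epi-mono factorization (i.e.\ the join of subobjects in $\mathrm{Sub}(\dent{\Gamma})$), and analogously that $\dent{\exists x:\chE_{n+1}.\phi}$ is the image of the projection $\dent{\Gamma,x:\chE_{n+1}\chr\phi}\monar\dent{\Gamma}\times E_{n+1}\to\dent{\Gamma}$. Granted this, $\ofseq$ factors through a join iff the pullback of the cover along $\ofseq$ is an epi onto $F$; then one decomposes $F$ as $G\cpp H\epim F$ with each summand landing in the respective disjunct (and symmetrically a single $G\epim F$ for $\exists$). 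The key fact invoked is that in a topos pullback preserves epis, which is what lets us produce the required cover.

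Finally, the universal clause follows from the adjunction between pullback along the projection $\pi:\dent{\Gamma}\times E_{n+1}\to\dent{\Gamma}$ and its right adjoint $\forall_{\pi}$ on subobjects: $\ofseq$ factors through $\dent{\forall x_{n+1}:\chE_{n+1}.\phi}$ iff $(\ofseq\times\id_{E_{n+1}})^{*}\dent{\phi}$ is all of $F\times E_{n+1}$, which is equivalent to saying that for every $g:G\to F$ and every $g':G\to E_{n+1}$ the tuple $\prdar{f_1\cmp g,\dots,f_n\cmp g,g'}$ factors through $\dent{\phi}$. The rest is routine bookkeeping with pullbacks and the universal property of $\Omega$; the bulk of the real work is the image-factorization lemma used in the disjunction/existential step.
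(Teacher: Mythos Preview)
The paper does not actually prove this theorem: it is quoted as a standard result, with the sentence preceding it reading ``The following result, which can be found as Theorem VI.6.1 in \cite{MacLaneM92:sigl} or Theorem II.8.4 in \cite{LambekS86:ihocl}, connects the definition of forcing given above with more standard intuitionistic semantics.'' So there is nothing in the paper to compare your argument against.

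That said, your sketch is the standard route taken in those references: reduce to Fact~\ref{prop:truthcond}, unwind the Mitchell--B\'{e}nabou abbreviations, and for $\vee$ and $\exists$ invoke the identification of the impredicative definitions with image factorizations (joins in $\mathrm{Sub}$ and $\exists_\pi$ respectively), using that epis are stable under pullback in a topos. The one point worth flagging is that the ``image-factorization lemma'' you allude to is indeed the nontrivial step and is itself a small piece of topos theory (equivalence of the second-order encodings of $\vee$, $\exists$ with the expected subobject-lattice operations); you are right to isolate it, and in a full write-up it should either be proved or given a precise citation (e.g.\ \cite[VI.5]{MacLaneM92:sigl} or \cite[II.8]{LambekS86:ihocl}).
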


The clauses for $\exists$ and $\vee$ above resemble those of intuitionistic Beth semantics. This is why ``Beth-Kripke-Joyal'' seems a more appropriate name in the general case of an arbitrary elementary topos; see, e.g., \cite[Section 14.6]{Goldblatt06:topoi}. However, when the topos happens to be the topos of \emph{presheaves}, i.e., covariant functors into  $\Set$, on a given small category $\ctR$---in particular, a poset taken as a category---the definition of forcing can be significantly simplified.
\footnote{Reader should be warned that in most of categorical literature, presheaves are assumed to be \emph{contra}variant, but see, e.g., \cite{Ghilardi89:aml} for an example of the covariant convention.}

Perhaps the most straightforward account of this simplification can be found in 
 \cite{LambekS86:ihocl}. 
First,  the  clause for disjunction can be ``kripkefied'' for \emph{indecomposable} objects and the clause for existential quantifiers---for \emph{projective} ones \cite[Proposition 8.7]{LambekS86:ihocl}. Second, the second clause of Fact \ref{prop:truthcond} suggests that to check validity of a given judgement-in-context $\Gamma\chr\phi:\chom$ in a topos, it is enough to restrict attention to those $F$ which belong to \emph{a generating set} for a given topos. Third, by the Yoneda Lemma, in a topos of presheaves $\Set^\ctR$ for an arbitrary small category $\ctR$, objects of the form $\hm{R}{C} \deq \ctR[C,-]$ for any given $C \in \ctR$ satisfy all these conditions: they are projective, indecomposable and do form a generating set. Moreover, also by the Yoneda Lemma,  elements of $\Set^\ctR[\hm{R}{C},F]$  are in $1-1$ correspondence with elements of $F(C)$:
\begin{align*}
\Set^\ctR[\hm{R}{C},F] \ni f & \to \ylt{f} \deq f_C(\id_C) \\
F(C) \times \hm{R}{C} \ni (c,h) & \to \ylm{c}(h) \deq Fh(c) 
\end{align*}
Note also that in clauses like the one for $\chto$, we can restrict attention to those  $G \stackrel{g}{\to} F$ whose source $G$ lies in the generating set. In the case of $\Set^{\ctR}$, this means replacing  $G \stackrel{g}{\to} \hm{R}{C}$ with elements of $\Set^{\ctR}[\hm{R}{D},\hm{R}{C}]$. But, by the Yoneda Lemma again, these can be replaced with arrows in $\ctR[C,D]$ (note the change of direction!).

Taking all this into account, we can obtain the following modified version of the semantics---this time properly ''Kripkean'' (see \cite[Proposition 9.3]{LambekS86:ihocl}).


\begin{cor}[Kripke-Joyal semantics in a topos of presheaves] \label{cor:kjs}
  Let $\ctR$ be a small category, $F_1,\dots,F_n \in \Set^{\ctR}$, $C \in \ctR$, $c_1 \in F_1(C)$, \dots  $c_n \in F_n(C)$ , $\Gamma = c_1:\chF_1, \dots, c_n:\chF_n$ and $\Gamma \chr \phi:\chom$.  Write $\tmod{C}{\ocseq}{\phi}$ for $\tmod{\hm{R}{C}}{\ylm{c}_1,\dots,\ylm{c}_n}{\phi}$. Given any $f \in \ctR[C,D]$, write $f(\ocseq)$ for $\ylm{c}_1(f),\dots,\ylm{c}_n(f)$---that is, $Ff(c_1), \dots, Ff(c_n)$. Then we have:
  \begin{itemize}
\item $\tmod{C}{\ocseq}{\phi \wedge \psi}$ iff $\tmod{C}{\ocseq}{\phi}$ and $\tmod{C}{\ocseq}{\psi}$
\item $\tmod{C}{\ocseq}{\phi \vee \psi}$ iff $\tmod{C}{\ocseq}{\phi}$ or $\tmod{C}{\ocseq}{\psi}$
\item $\tmod{C}{\ocseq}{\phi \chto \psi}$ iff for any  $f \in \ctR[C,D]$, 
$\tmod{D}{f(\ocseq)}{\psi}$ whenever $\tmod{D}{f(\ocseq)}{\phi}$
\item $\tmod{C}{\ocseq}{\neg\phi}$ iff  for any  $f \in \ctR[C,D]$, it does not  hold that $\tmod{D}{f(\ocseq)}{\phi}$ 
\item $\tmod{C}{\ocseq}{\forall x_{n+1}:\chF_{n+1}.\phi}$ iff for every $f \in \ctR[C,D]$ and $d \in F_{n+1}(D)$, it holds that $\tmod{D}{f(\ocseq),d}{\phi}$
\item $\tmod{C}{\ocseq}{\exists x_{n+1}:\chF_{n+1}.\phi}$ iff there exist $d \in F_{n+1}(C)$ s.t. $\tmod{C}{\ocseq,d}{\phi}$
  \end{itemize}
\end{cor}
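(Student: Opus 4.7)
The plan is to derive each clause from the corresponding clause of the general Beth-Kripke-Joyal semantics (Theorem~\ref{thm:bkjs}) by exploiting the three well-known features of representable presheaves $\hm{R}{C} = \ctR[C,-]$ in $\Set^{\ctR}$: (i) they form a generating family, (ii) each one is projective, (iii) each one is indecomposable, combined systematically with the Yoneda lemma bijection $\Set^{\ctR}[\hm{R}{C}, F] \cong F(C)$ given in the excerpt by $f \mapsto \ylt{f}$ and $c \mapsto \ylm{c}$.

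First I would set up the Yoneda-based bookkeeping: verify that precomposition with a Yoneda image $\hm{R}{f}$ of $f \in \ctR[C,D]$ corresponds under the bijection to the transition function $Ff$, so that $\ylm{c} \cmp \hm{R}{f} = \widehat{Ff(c)}$. This is what licences the notation $f(\ocseq)$ for the restriction of $\ocseq$ along $f$ and turns generalized elements based at $\hm{R}{C}$ into plain elements of $F(C)$. The conjunction clause then transfers verbatim from Theorem~\ref{thm:bkjs}.

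For implication, negation, and universal quantification the clauses of Theorem~\ref{thm:bkjs} quantify over arbitrary $G \stackrel{g}{\to} F$; using (i), it suffices to restrict attention to $G$ of the form $\hm{R}{D}$, because any general $G$ admits an epic cover by representables and the forcing conditions above are stable under such covers (as witnessed by Fact~\ref{prop:truthcond}). Once $G$ is forced to be representable, Yoneda replaces $\Set^{\ctR}[\hm{R}{D}, \hm{R}{C}]$ by $\ctR[C,D]$, producing precisely the $f$'s of the Kripke clauses. For negation I would additionally observe that $\hm{R}{D} \not\isom \inio$ for any $D$ (since $\id_D \in \hm{R}{D}(D)$), so the vacuous ``$G \isom \inio$'' escape clause in Theorem~\ref{thm:bkjs} simply disappears.

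For disjunction and existential quantification the simplification is more substantial. In the disjunction clause of Theorem~\ref{thm:bkjs} one has an epi $G \cpp H \epim \hm{R}{C}$; indecomposability (iii) forces $\hm{R}{C}$ to factor entirely through the image of one of the summands, so one of the two disjuncts already holds on $\hm{R}{C}$ itself. For existential quantification, projectivity (ii) is what allows us to remove the epic covering altogether: given an epi $G \epim \hm{R}{C}$ together with a witness $g' : G \to F_{n+1}$, projectivity of $\hm{R}{C}$ provides a section $\hm{R}{C} \to G$ whose composite with $g'$ yields a witness defined over $\hm{R}{C}$, which by Yoneda is just an element $d \in F_{n+1}(C)$. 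The main obstacle I expect is arguing disjunction and existence cleanly: one has to check carefully that coproducts and epimorphisms in $\Set^{\ctR}$ are computed pointwise and that indecomposability and projectivity of representables really do apply in the required form. Everything else reduces to Yoneda bookkeeping.
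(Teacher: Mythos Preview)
Your proposal is correct and follows essentially the same route as the paper: the paper's ``proof'' is the paragraph immediately preceding the corollary, which invokes exactly the three ingredients you name (representables are a generating set, projective, and indecomposable) together with the Yoneda bijection, citing Lambek--Scott \cite[Propositions 8.7 and 9.3]{LambekS86:ihocl} for the details. Your write-up is in fact more explicit than the paper's on several points, e.g.\ the observation that $\hm{R}{D}\not\isom\inio$ for negation, and the use of Fact~\ref{prop:truthcond} to justify descent of forcing along epis; the only place where your phrasing is slightly loose is disjunction, where ``factor through the image of one of the summands'' tacitly also uses projectivity (to split the epi $G+H\epim\hm{R}{C}$) before indecomposability picks a summand---but this is exactly how the standard argument goes.
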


\cite{Ghilardi89:aml}  uses toposes of presheaves as a generalization of  Kripke semantics for the intuitionistic first-order logic to prove incompleteness results. 
 Of numerous follow-ups of that work, let me just mention \cite{NagaokaI97:mlq,Skvortsov12:sl}.  Let us also note 
 that the derivation of Corollary \ref{cor:kjs} from Theorem \ref{thm:bkjs} takes a somewhat more roundabout route in \cite{MacLaneM92:sigl}: 
 toposes of presheaves are handled there as a subclass of toposes of \emph{sheaves on a site}.

\subsection{Non-expansive Morphisms, Fixpoints and  Scattered Toposes} \label{sec:scafix}

Let $\ctE$ be an elementary topos. 
Call an endomorphism $f \in \ctE[X,X]$ \emph{unchanging}  \cite{EsakiaJP00:apal} or \emph{non-expansive}  if 
\[
\conss{}{\ctE}{\forall x,y:\synt{X}.(\synt{f}x \eqT \synt{f}y \chto x \eqT y) \chto x \eqT y}.
\]
As noted in \cite{EsakiaJP00:apal}, in a boolean setting \emph{non-expansive} means just \emph{constant}: negate the sentence and play with boolean laws. Obviously then a \emph{classical} proof that a non-expansive endomorphism on a non-empty set has a unique fixed point does not carry much computational content. In a constructive setting, however, the situation is  different. 




Assume $\chr \phi, \psi: \chom^{\synt{X}}$ and $f \in \ctE[X,X]$ and define:

\begin{align*}
\Isubt(\phi) \deq & 
\forall x,y:\synt{X}(\chof{\phi}{x} \chand \chof{\phi}{y} \chto x \eqT y)\\
\Icont{\phi}{\psi} \deq &  \forall x:\synt{X}.(\chof{\phi}{x}\chto\chof{\psi}{x}) \\
\Imaxst(\phi) \deq & \Isubt(\phi) \chand \forall \alpha:\chom^{\synt{X}}.(\Isubt(\alpha) \chand \Icont{\phi}{\alpha} \chto \Icont{\alpha}{\phi}) \\
\Icontr(\synt{f}) \deq & \forall x,y:\synt{X}.(\synt{f}x \eqT \synt{f}y \chto x \eqT y) \chto x \eqT y \\
\Ifix_{\synt{f}} \deq & \lambda x:\synt{X}.(x \eqT \synt{f}x)
\end{align*}


With this apparatus, we can state the main Theorem of Section 3 of \cite{EsakiaJP00:apal}:

\begin{theorem} \label{th:scanonexp}
Assume $\ctE$ is an elementary topos and $f \in \ctE[X,X]$
. Then
\[
\conss{}{\ctE} \Icontr(\synt{f}) \chto \Imaxst(\Ifix_{\synt{f}}).
\]
\end{theorem}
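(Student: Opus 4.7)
The plan is to argue entirely inside the Mitchell-Bénabou language of $\ctE$, exploiting the constructive strength of $\Icontr(\synt{f})$ in two independent applications. Concretely, I assume $\Icontr(\synt{f})$ throughout and split the target conjunction into (i) $\Isubt(\Ifix_{\synt{f}})$ and (ii) the maximality clause, reasoning generically in the internal logic (equivalently, via the Kripke--Joyal clauses of Theorem \ref{thm:bkjs}). Since $\Icontr(\synt{f})$ lives in a topos and the final conclusion is a higher-order intuitionistic formula, there is no need to drop out of the internal language at any point.

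For (i), I assume $x,y:\synt{X}$ with $\chof{\Ifix_{\synt{f}}}{x}$ and $\chof{\Ifix_{\synt{f}}}{y}$, i.e.\ $x \eqT \synt{f}x$ and $y \eqT \synt{f}y$. I instantiate $\Icontr(\synt{f})$ at this pair: the antecedent $\synt{f}x \eqT \synt{f}y \chto x \eqT y$ reduces by substitution of $x$ for $\synt{f}x$ and $y$ for $\synt{f}y$ to the tautology $\synt{f}x \eqT \synt{f}y \chto \synt{f}x \eqT \synt{f}y$, which is trivially inhabited. Hence $\Icontr(\synt{f})$ delivers $x \eqT y$, establishing $\Isubt(\Ifix_{\synt{f}})$.

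For (ii), I fix $\alpha:\chom^{\synt{X}}$ and assume $\Isubt(\alpha) \chand \Icont{\Ifix_{\synt{f}}}{\alpha}$. Given any $x:\synt{X}$ with $\chof{\alpha}{x}$, I must show $x \eqT \synt{f}x$. I instantiate $\Icontr(\synt{f})$ at the pair $(x,\synt{f}x)$: it suffices to verify the antecedent $\synt{f}x \eqT \synt{f}(\synt{f}x) \chto x \eqT \synt{f}x$. So assume $\synt{f}x \eqT \synt{f}(\synt{f}x)$; this makes $\synt{f}x$ a fixed point of $\synt{f}$, i.e.\ $\chof{\Ifix_{\synt{f}}}{\synt{f}x}$. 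The containment $\Icont{\Ifix_{\synt{f}}}{\alpha}$ then yields $\chof{\alpha}{\synt{f}x}$. Together with the standing $\chof{\alpha}{x}$, subsingleton-ness of $\alpha$ gives $x \eqT \synt{f}x$, discharging the antecedent and hence, via $\Icontr(\synt{f})$, the goal.

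The steps are essentially routine once one is comfortable with the internal language; the one subtle point --- and the only place where genuine ingenuity is required --- is the choice of instantiating $\Icontr(\synt{f})$ at $(x,\synt{f}x)$ in part (ii). This trick is what converts the implicative, ``self-referential'' content of $\Icontr$ into an actual construction of a fixed point witness out of the hypothesis that $x$ lives in a subsingleton dominating $\Ifix_{\synt{f}}$. Everything else is purely propositional manipulation of equalities and subsingleton predicates. No recourse to excluded middle or to boolean reasoning is needed, which is exactly the point: in the classical setting the statement collapses to the triviality that a constant map on a non-empty set has a unique fixed point, whereas here it yields a genuinely constructive, topos-internal fixpoint theorem.
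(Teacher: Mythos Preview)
Your argument is correct and is exactly the kind of internal Mitchell--B\'{e}nabou derivation the paper has in mind: the paper does not spell out a proof but refers the reader to a Coq formalization and to the original argument in \cite[p.~105]{EsakiaJP00:apal}, carried out in the internal language. Your two-part decomposition, the trivial discharge of the antecedent in (i) via the fixed-point equations, and the key instantiation of $\Icontr(\synt{f})$ at $(x,\synt{f}x)$ in (ii) are precisely the ingredients of that argument.
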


\begin{myremark}\label{}
A proof formalized in the Coq proof assistant is available from the author, see Remark \ref{rem:coq}. Those who would like to try a manual yet rigorous proof in the  Mitchell-B\`{e}nabou language should do  first Exercise 5 in \cite[p. 139]{LambekS86:ihocl} 
and then formalize the proof  in \cite[p. 105]{EsakiaJP00:apal} using all the abbreviations given above. 
\end{myremark}

\begin{myremark}
 Some of derivations in
  \cites{Esakia98:bsl,EsakiaJP00:apal}, especially when fully
  formalized in a proof assistant, make explicit an interesting fact:
  certain axioms governing  quantification
  which are not generally intutionistically valid (e.g.,
  the Kuroda axiom and its  generalization due to Casari), become valid in
  presence of a $\iKM$ modality, even though these axioms do not
  involve any modalities at all in their statement. Some related
  observations are made by \cite{Biraben12:bsl} (a reference I became aware of having written the
  published version of this overview).
\end{myremark}

In words, this result says: \emph{the fixpoints of a non-expansive endomorphism form a maximal subterminal subobject}.\footnote{The corresponding theorem in \cite{EsakiaJP00:apal} contained also an  additional statement about density of the support of the fixed-point subobject, but this does not seem essential for us here.} The syntactic shape of $\Isubt{f}$ easily suggests that subterminality is the internal counterpart of ``being of cardinality at most one'', i.e., uniqueness of fixpoints. 
However, the situation with existence is more complicated. 
 First of all, toposes of presheaves can differ significantly from the topos of sets in having non-trivial objects with \emph{no global elements whatsoever}. More importantly, even being inhabited is not enough to ensure maximal subterminal objects are global elements.
\begin{expl}[\cite{EsakiaJP00:apal}] \label{ex:nofix}
Consider the topos of presheaves on  $(\omega+1,\unrhd)$, where $\unrhd$ is the converse of the standard ordinal order. Presheaf $X$ defined as $X(n) = n +1$ and $X(\omega) = \omega$ with $X(\beta \unrhd \alpha)(n) = min(n,\alpha)$ is clearly inhabited. Furthermore, $f: X \to X$ defined as $f_n(i) = min(i+1,n)$ and $f_\omega(i)=i+1$ is a non-expansive endomorphism. Yet it fails to have a fixpoint---i.e., a global element $\trmo \stackrel{c}{\to} X$ s.t. $f \cmp c = c$. 
\end{expl}
Of course, we can do better in special cases.

\begin{cor} \label{cor:injective}
Whenever 
$X \in \ctE$ is s.t. any maximal subterminal subobject of $X$ is a global element (for example, $X$ is an injective object), 
there exists $\finim \stackrel{c}{\to} X$ s.t. $f \cmp c = c$  for any non-expansive $f \in \ctE[X,X]$.
\end{cor}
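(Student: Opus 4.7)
The plan is to apply Theorem \ref{th:scanonexp} to extract the subobject of fixpoints of $f$ inside $X$, and then use the hypothesis on $X$ to promote this subobject to an actual global element.

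First I would observe that, since $f$ is non-expansive, the internal sentence $\Icontr(\synt{f})$ holds in $\ctE$, so Theorem \ref{th:scanonexp} yields $\conss{}{\ctE}{\Imaxst(\Ifix_{\synt{f}})}$. Externally, $\Ifix_{\synt{f}} = \lambda x:\synt{X}.(x \eqT \synt{f}x)$ classifies the equalizer of $f$ and $\id_X$; call this subobject $m:M\monar X$. The internal validity of $\Imaxst(\Ifix_{\synt{f}})$ then translates to: $m$ is subterminal as a subobject of $X$, and any subterminal subobject of $X$ through which $m$ factors coincides with $m$.

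Next I would invoke the hypothesis on $X$ to produce a global element $c:\trmo\to X$ whose image (as a subobject of $X$) equals $m$. In particular $c$ factors through $m$ as $c = m\cmp c'$ for some $c':\trmo\to M$. Since $m$ is the equalizer of $f$ and $\id_X$, we have $f\cmp m = m$, and hence $f\cmp c = f\cmp m\cmp c' = m\cmp c' = c$. To handle the parenthetical case, I would verify that injectivity of $X$ implies the required hypothesis: if $m:M\monar X$ is a maximal subterminal subobject, then $\finim_M:M\to\trmo$ is monic by subterminality, so by injectivity $m$ extends along $\finim_M$ to some $c:\trmo\to X$ with $c\cmp\finim_M = m$; then $m$ factors through $c$, the subterminal subobject represented by $c$ contains that represented by $m$, and by maximality the two agree.

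The main obstacle I expect is the careful bookkeeping between the internal, Mitchell-B\`{e}nabou formulation of Theorem \ref{th:scanonexp} and its external consequences about the subobject $M\monar X$---in particular, making rigorous the passage from the internal predicate $\Imaxst(\Ifix_{\synt{f}})$ to the external statement that $m$ is maximal among the subterminal subobjects of $X$, and that its representation as a global element is equivalent (under the hypothesis) to $m$ itself. Once this translation is in place, the rest of the argument is a routine diagram chase through the equalizer property of $m$.
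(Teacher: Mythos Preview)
Your proposal is correct and follows exactly the route the paper intends: the corollary is stated without proof precisely because it is meant to be the immediate externalization of Theorem~\ref{th:scanonexp} together with the hypothesis on $X$, and your sketch spells out those steps (including the injective case) accurately.
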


We could try to express unique existence in the internal logic using the standard abbreviation $\exists!$ for ``exists exactly one''. However, as kindly pointed out by Thomas Streicher, this abbreviation works as intended in toposes of presheaves, but not necessarily in arbitrary ones. 




But where is the place for  a modality in all this? Say that $\anymo: \Omega \to \Omega$ is a \emph{strong L\"{o}b operator} if 
$
\conss{}{\ctE}{\forall p:\synt{\Omega}.(\chanymo p \chto p) \chto p}$.
 Also, call a morphism $f \in \ctE[X,Z]$ \emph{$\anymo$-contractive} if \, 
$\conss{}{\ctE}{\forall x,y:\synt{X}.\chanymo(x \eqT y) \chto (\synt{f}x \eqT \synt{f}y)}$:

\begin{cor}
Let $\anymo: \Omega \to \Omega$ be a strong L\"{o}b operator, $f \in \ctE[X,X]$, 
and assume that $f$ is $\anymo$-contractive. Then  $f$ is non-expansive and hence its subobject of fixed points is a maximal subterminal one. 
\end{cor}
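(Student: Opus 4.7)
The plan is straightforward: derive non-expansivity of $f$ from $\anymo$-contractivity together with the strong Löb principle for $\anymo$, and then quote Theorem \ref{th:scanonexp} to conclude that the fixpoint subobject $\Ifix_{\synt{f}}$ is a maximal subterminal subobject of $X$.

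To show non-expansivity, I would reason inside the Mitchell-Bénabou language, fixing generic $x,y:\synt{X}$ and assuming $\synt{f}x \eqT \synt{f}y \chto x \eqT y$. The goal is $x \eqT y$. I would instantiate the strong Löb axiom with $p \deq (x \eqT y) : \chom$, reducing the goal to establishing $\chanymo(x \eqT y) \chto x \eqT y$. So assume $\chanymo(x \eqT y)$; then $\anymo$-contractivity gives $\synt{f}x \eqT \synt{f}y$, and the standing hypothesis yields $x \eqT y$. Discharging the assumptions in the correct order produces precisely $\Icontr(\synt{f})$.

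Once non-expansivity is established, Theorem \ref{th:scanonexp} applied to $f$ delivers $\conss{}{\ctE}{\Imaxst(\Ifix_{\synt{f}})}$, which is the second conclusion.

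I expect no real obstacle here; the only subtlety is keeping track of the formal manipulations inside the internal language (in particular, the correct instantiation of the strong Löb schema and the order in which implications are discharged), but this is purely a matter of bookkeeping in higher-order intuitionistic logic and can be mechanically checked, for instance in the Coq formalization alluded to in Remark \ref{rem:coq}. Note that the argument uses neither the specific shape of $X$ nor any injectivity hypothesis: the global element interpretation of the fixpoint (as in Corollary \ref{cor:injective}) requires extra assumptions, but the statement here only asserts maximal subterminality, which follows directly from Theorem \ref{th:scanonexp}.
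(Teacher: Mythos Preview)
Your proposal is correct and matches the paper's proof in substance: both derive non-expansivity from $\anymo$-contractivity via the strong L\"{o}b principle applied at $p \deq (x \eqT y)$, then invoke Theorem~\ref{th:scanonexp}. The only cosmetic difference is that the paper routes through the two-variable equivalent form $(\chanymo p \chto (p \vee q)) \chto ((q \chto p) \chto p)$ of $\rlax$ (citing \cite[Theorem~2(iv)]{EsakiaJP00:apal}) and then substitutes, whereas you unfold the same reasoning directly in natural-deduction style.
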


\begin{proof}
  We have that 
$
\conss{}{\ctE}{\forall p,q:\synt{\Omega}.(\chanymo p\chto(p\vee q)) \chto ((q \chto p)\chto p)}.
$
 In fact, this is an equivalent form of $\rlax$---cf. the proof of Theorem 2(iv) in \cite{EsakiaJP00:apal}. 
Now substitute $x \eqT y$ for  $p$ and $\synt{f}x \eqT \synt{f}y$ for $q$ to get the result.
\end{proof}

\cite{EsakiaJP00:apal} states the result only for a special case of contractiveness and a special subclass of toposes (introduced below) but this generalization  is straightforward. As before, we can derive the conclusion about the existence of unique fixed points \emph{whenever every maximal subterminal object  of $X$ happens to be a global element}---e.g., whenever $X$ is injective.

Define $\chobmo \phi \deq \forall t:\chom.(t \vee (t \chto \phi))$, i.e., an internalized 
 coderivative. 
We have the following counterpart of Proposition \ref{prop:mhc}:

\begin{prop}
In any elementary topos $\ctE$,
  we have $\conss{}{\ctE}{\forall p:\chom.p \chto \chobmo p}$ and  $\conss{}{\ctE}{\forall p,q:\chom.\chobmo p \chto (q \vee (q \chto p))}$.
\end{prop}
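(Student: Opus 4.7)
The plan is to work purely in the Mitchell--B\`{e}nabou language and note that both parts are intuitionistically trivial once the definition of $\chobmo$ is unfolded; the result is really just a syntactic rewriting of the witness $\iml \vee (\iml \to \hml)$ from Proposition~\ref{prop:mhc} into the internal language of an arbitrary elementary topos, with the universally quantified $t:\chom$ playing the role of $\iml$. No topos-specific machinery beyond the internal intuitionistic predicate calculus will be needed, so I do not expect to invoke Beth--Kripke--Joyal semantics at all.

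For the first statement, I would assume $p:\chom$ and $p$ in context, and aim to derive $\forall t:\chom.(t \vee (t \chto p))$. Fix an arbitrary $t:\chom$; it suffices to choose the right disjunct. Since $p$ holds in context, weakening gives $t \chto p$, and hence $t \vee (t \chto p)$ by $\vee$-introduction. Discharging $t$ and $p$ yields $\forall p:\chom.\, p \chto \chobmo p$.

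For the second statement, unfold $\chobmo p$ to $\forall t:\chom.(t \vee (t \chto p))$ and instantiate the universal quantifier with $t \deq q$. This gives $q \vee (q \chto p)$ immediately, which is exactly the conclusion. Discharging $p,q$ finishes the proof.

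The only step that requires any care is the validity of these derivations in the internal logic: one should check that all rules used (universal instantiation, $\vee$-introduction, $\chto$-introduction, weakening) are available as basic inferences in the Mitchell--B\`{e}nabou calculus. This is well-documented, e.g., in \cite{LambekS86:ihocl,MacLaneM92:sigl}, and poses no real obstacle --- an elementary topos interprets the full intuitionistic higher-order calculus, so both lines go through mechanically. In fact, a complete formal derivation in the Coq proof assistant is already part of the formalization mentioned in Remark~\ref{rem:coq}.
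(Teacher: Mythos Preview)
Your proof is correct and matches the paper's own stance: the paper states this proposition without proof, presenting it only as the internal-language counterpart of Proposition~\ref{prop:mhc} (itself left as an easy exercise). Your unfolding of $\chobmo$ and the two routine intuitionistic derivations are exactly the expected argument, and your remark about Coq formalization aligns with Remark~\ref{rem:coq}.
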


A \emph{scattered topos} is defined analogously to scattered locales or  Heyting algebras in Section \ref{sec:scathey} by the validity of the only remaining $\iKM$ law, i.e., the  axiom 
$
\forall p:\chom.(\chobmo p \chto p) \chto p.
$
Thus, scattered toposes are those where $\chobmo$ is a strong L\"{o}b operator. This notion turns out to have several equivalent characterizations, 
 see \cite{EsakiaJP00:apal}. Let us discuss in detail here another one for the special case of $\Set^\ctR$:

\begin{thm} \label{th:scatpre}
Let $\ctR = (W,\icc)$ be a poset. Then $\Set^\ctR$ is scattered iff $ (W,\icc,\iccs)$ satisfies any of the equivalent conditions in Corollary \ref{lm:kmkrcorr}.
\end{thm}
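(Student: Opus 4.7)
The plan is to unfold the scatteredness axiom using the Kripke--Joyal
semantics of Corollary \ref{cor:kjs} and then appeal to the material
on $\irrbox$ developed in Section \ref{sec:deriv}.

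The first step is to identify the ingredients. In the topos of covariant
presheaves on $(W,\icc)$, the subobject classifier $\Omega$ sends a
world $w$ to the set $\upset{\upclose{w}}$ of upward-closed subsets of
the principal up-set $\upclose{w}$, with restriction along $w\icc w'$
given by intersection with $\upclose{w'}$. Hence an interpretation of a
free variable $p:\chom$ at $w$ is literally an element
$U\in\upset{\upclose{w}}$. The Kripke--Joyal clauses of
Corollary \ref{cor:kjs} for the propositional connectives reproduce
the familiar intuitionistic Kripke semantics on $(W,\icc)$, and the
clause for $\forall t:\chom$ ranges $t$ over every
$T\in\upset{\upclose{w''}}$ at every future $w''\succeq w$. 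A short
calculation then shows that at $(w,U)$ the forcing relation
$\tmod{w}{U}{\chobmo p}$ is equivalent to $w\in\irrbox U$ computed
inside the Heyting algebra $\upset{\upclose{w}}$, where $\irrbox$ is
the point-free coderivative introduced just before
Proposition \ref{prop:mhc}.

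The second step is to evaluate the axiom
$\forall p:\chom.(\chobmo p\chto p)\chto p$. The outer universal reduces,
via Kripke--Joyal, to quantification over every $w\in W$ and every
$U\in\upset{\upclose{w}}$, and the remaining two implications collapse
to the condition ``$\irrbox U\subseteq U$ implies $w\in U$'', i.e.\
$U=\upclose{w}$. This is precisely the statement that the coderivative
on $\upset{\upclose{w}}$ is a $\KM$-operator in the sense of
Corollary \ref{cor:leofix}.

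The final step recognises this as an already established criterion.
Since $(W,\icc)$ is a poset, the Alexandroff topology on each
$\upclose{w}$ is $T_0$, so Theorem \ref{th:tzero} identifies the
algebraic $\irrbox$ with the topological coderivative. Proposition
\ref{prop:scat} then converts the $\KM$-operator condition into
scatteredness of the Alexandroff topology on $\upclose{w}$, and
Corollary \ref{lm:kmkrcorr} rephrases scatteredness of this topology as
the absence of infinite $\icc$-ascending chains in $\upclose{w}$.
Because every infinite ascending chain in $W$ lies in some principal
up-set $\upclose{w_0}$, the condition holds for every $w$
simultaneously iff $(W,\icc,\iccs)$ satisfies the third (and hence any)
clause of Corollary \ref{lm:kmkrcorr}. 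The hard part is the first step:
one must verify carefully that the Kripke--Joyal clause for
$\forall t:\chom$ really does scan all upward-closed subsets at all
future worlds, so that $\chobmo p$ evaluates on the nose to $\irrbox U$;
once this is in place, the remainder is a straightforward chain of
appeals to the theory built up in Section \ref{sec:deriv}.
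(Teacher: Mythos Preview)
Your proposal is correct and follows essentially the same route as the paper's proof. The paper makes explicit exactly the two points you flag as needing care: it records the concrete description of $\obmo_w$ (your identification of $\tmod{w}{U}{\chobmo p}$ with $w\in\irrbox U$) as a separate Fact, and it isolates the atomic clause $\tmod{w}{A}{t}\Leftrightarrow A=\reluc{\{w\}}{\icc}$ before generalizing it to arbitrary $\phi(t)$; after that, both arguments unwind the two implications via Kripke--Joyal and land on the same Noetherianity condition, with the paper appealing directly to the computation in Corollary~\ref{lm:kmkrcorr} while you take the (equally valid) detour through Proposition~\ref{prop:scat} and topological scatteredness of each $\upclose{w}$.
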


\begin{proof}

For a direct proof
, it is useful to compute the semantic meaning of $\chobmo$. Define $\obmo: \Omega \to \Omega$ as $\dent{p:\chom \chr \chobmo p:\chom}$. Recall also that in a topos of the form $\Set^{\ctR}$ for $\ctR = (W,\icc)$, $\Omega(w)$ is equal to $\{ A \cap \reluc{\{w\}}{\icc} \mid A \in \upset{W} \}$. 
A morphism $f:\Omega \to \Omega$ is a natural transformation: a family of mappings $\{ f_w: \Omega(w) \to \Omega(w) \mid w \in W\}$ satisfying 
\begin{center}
$f_z(A \cap \reluc{\{z\}}{\icc}) = f_w(A) \cap \reluc{\{z\}}{\icc}$ for any $A \in \Omega(w)$, $z \cci w$.
\end{center}

Now let us note the following

\begin{fact}\label{prop:obmo}
For any topos of the form $\Set^\ctR$ where $\ctR = (W,\icc)$, for any $w \in W$ and for any $A \in \Omega(w)$ (i.e., $A$ an upward closed subset of $ \reluc{\{w\}}{\icc}$),
\[
\obmo_w(A) = \{z \cci w \; \mid  \; \reluc{\{z\}}{\iccs} \subseteq A\}.
\]
\end{fact}

The reader may want to consult Section \ref{sec:deriv} and Table \ref{tab:semcond} for the notation used above; in particular, recall that  $\reluc{\{z\}}{\iccs} =  \reluc{\{z\}}{\icc} - \{z\}$. Note also that we can add an atomic clause to Corollary \ref{cor:kjs} in the preceding section: 
\begin{fact}\label{fact:atclause}
$\tmod{w}{A}{t}$ (where $\Gamma = t:\chom$ and $A\in \Omega(w)$) iff $A = \reluc{\{w\}}{\icc}$.
\end{fact}

This fact, while rather basic, is worth an explicit proof, as it helps to put together several definitions and propositions above:

\begin{proof}[Proof of Fact \ref{fact:atclause}]
$\tmod{w}{A}{t}$ is an abbreviation for $\tmod{\hm{R}{w}}{\ylm{A}}{t}$, while this in turn can be reformulated as $\ylm{A} = \topT \cmp \finim_{\hm{R}{w}}$ (Fact \ref{prop:truthcond}). In particular, $\ylm{A}(w \icc w) =  \reluc{\{w\}}{\icc}$. But  $\ylm{A}(w \icc w) = A$.
\end{proof}

Fact \ref{fact:atclause} can be generalized with variable $t$ on the right hand side of the turnstile replaced with  arbitrary $\phi(t)$. Somewhat informally speaking,  $\tmod{w}{A}{\phi(t)}$ iff the value of $\phi(A)$ contains $\reluc{\{w\}}{\icc}$  (think of $\phi$ here as a polynomial on the Heyting algebra of upward closed subsets of $\reluc{\{w\}}{\icc}$).

Putting all this together, we get that $\tmod{w}{A}{(\chobmo p \chto p) \chto p}$ iff for any $z \cci w$,  it holds that $\tmod{z}{A \cap \reluc{\{z\}}{\icc}}{(\chobmo p \chto p)}$ implies $\reluc{\{z\}}{\icc} \subseteq A$. That is, $\{ z' \cci z \; \mid  \; \reluc{\{z'\}}{\iccs} \subseteq A \} \subseteq A$ only if $\reluc{\{z\}}{\icc} \subseteq A$ and in order for $\Set^{(W,\icc)}$ to be scattered this has to hold for any $w \in W$, any $z \cci w$ and any $A \in \Omega(w)$. But then the reasoning can be completed just like in the case of Corollary \ref{lm:kmkrcorr}. This finishes the proof of Theorem \ref{th:scatpre}.

\end{proof}

\subsection{Topos of Trees and Its Generalizations} \label{sec:toptrees}

\cite{BirkedalMSS12:lmcs} introduced the  \emph{topos of trees} (or
\emph{forests}) $\toptrees$, i.e., the topos of presheaves on
$(\omega,\unrhd)$, where $\unrhd$ is the converse of the usual order on
$\omega$. Let us begin with the observation that an axiomatization of the
$\imHC$-logic of the underlying frame of $\toptrees$ can be obtained
by means of techniques introduced in Section \ref{sec:complibt}.

\begin{theorem} \label{th:kmlc}
$\iKM \lpl \iLC$ is the propositional logic of $(\omega,\unrhd,>)$.  
\end{theorem}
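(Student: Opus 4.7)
The plan is to establish soundness and completeness separately. Soundness is a direct application of frame conditions; completeness proceeds by transferring a classical result along the Blok--Esakia isomorphism of Corollary~\ref{cor:blokesakia}.

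For soundness, I would invoke Corollary~\ref{lm:kmkrcorr}: one has $\mcc = \iccs = {>}$, and no infinite $\unrhd$-ascending chain with distinct consecutive elements can exist in $\omega$, since such a chain would yield a strictly decreasing sequence contradicting well-foundedness of the standard order. Hence $(\omega,\unrhd,{>})$ validates $\iKM$. Linearity of $\unrhd$ further renders any two $\unrhd$-upsets comparable under inclusion, so $\gdax$ is validated and the frame is sound for $\iKM \lpl \iLC$.

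For completeness, the first task is to identify the Blok--Esakia companion of $\iKM \lpl \iLC$. Applying $\wzmax$ to $\gdax$ and using the identification $\wzi = \ribox$ obtained in the proof of Corollary~\ref{cor:blokesakia}, one arrives (after absorbing outer $\wzi$'s) at $\ribox(A \to B) \vee \ribox(B \to A)$; over $\cKW$ this is equivalent to $\dtax$. Thus the companion is the Segerberg logic $\cKW \lpl \dtax$ (commonly denoted $\axst{GL.3}$), a well-studied provability-style system which is sound and complete with respect to the single frame $(\omega,{>})$ and enjoys the finite model property with respect to finite strict linear orders.

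The second task is to transfer this back. By clause~\Ccls of Theorem~\ref{th:wz}, the finite model property passes from the companion to $\iKM \lpl \iLC$. Any finite $\iKM \lpl \iLC$-frame is a finite scattered linearly ordered $\mHC$-frame, hence of the form $(\{0,\dots,n\},\geq,{>})$ for some $n$; this is precisely the generated subframe of $(\omega,\unrhd,{>})$ at the point $n$. Since generated subframes preserve intuitionistic modal validity, every non-theorem of $\iKM \lpl \iLC$ is already refuted in $(\omega,\unrhd,{>})$.

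The main obstacle is the first completeness task: verifying cleanly that the image of $\gdax$ under $\wzmax$ really captures linearity in the form of $\dtax$ rather than a weaker variant. This is the bimodal manipulation of Section~\ref{sec:complibt}; once it is done, the remainder assembles routinely.
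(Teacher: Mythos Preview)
Your approach coincides with the paper's: both identify $\wzmax(\iKM\lpl\iLC)$ with $\cKW\lpl\dtax$ and then invoke the classical fact that this is the logic of $(\omega,>)$. There is, however, a slip in the step you yourself flag as the main obstacle. The translation $\wztr$ prefixes \emph{every} subformula with $\wzi$, so $\wztr\gdax$ (after using Fact~\ref{fact:wztreq} for $\vee$) is $\wzi(\wzi p\to\wzi q)\vee\wzi(\wzi q\to\wzi p)$; the inner $\wzi$'s on the variables are not ``outer'' and cannot be absorbed. Your displayed form $\ribox(A\to B)\vee\ribox(B\to A)$ is actually \emph{not} valid on $(\omega,>)$: take $A$ true only at $1$, $B$ true only at $0$, and evaluate at $2$. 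The correct image is $\ribox(\ribox A\to\ribox B)\vee\ribox(\ribox B\to\ribox A)$, and \emph{this} is equivalent to $\dtax$ over $\cKW$ (both express weak connectedness of $\mcc$). Once corrected, your identification of the companion is exactly the paper's.

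For the final transfer, the paper does not go via the fmp and generated subframes; it appeals directly to a frame-completeness transfer property of $\wzmax$ from \cite[Prop.~21]{WolterZ97:al} not listed in Theorem~\ref{th:wz}. Your route through clause~\Ccls (fmp of $\cKW\dtax$ passing to $\iKM\lpl\iLC$) and then embedding finite rooted models as generated subframes of $(\omega,\unrhd,>)$ is a legitimate alternative that stays within the machinery the paper actually states. One small imprecision: a finite $\iKM\lpl\iLC$-frame need not itself be a chain, only a disjoint union of chains; you should pass to the subframe generated by the refuting point before identifying it with $(\{0,\dots,n\},\geq,>)$.
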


\begin{proof}[Proof (sketch)]
 Just like in the first sentence of the proof of Corollary
 \ref{cor:kmfmp}, we note that $\wzmax{(\iKM \lpl \iLC)}$ is a just a
 notational variant of $\cKW\dtax\deq \cKW \lpl \dtax$. This logic in turn can be shown (using standard modal
 techniques) to be the logic of natural numbers with reverse strict
 order taken as a classical Kripke frame. The result now follows from properties
 of $\wzmax{\cdot}$ which are not quoted explicitly in Theorem
 \ref{th:wz}, but can be extracted immediately from the original references, for example \cite[Prop. 21]{WolterZ97:al}.
\end{proof}
 
Reasoning analogous to those in proofs of  Corollary \ref{cor:kjs} and Theorem
\ref{th:scatpre}  yields then that $\iKM \lpl \iLC$ coincides with the set of $\toptrees$-validities in the Mitchell-B\'{e}nabou
language restricted to  (topos-theoretical
counterparts of) connectives in $\iF$ together  with $\chobmo$, the
latter interpreted in in $\Set^{(\omega,\unrhd)}$ by $\obmo$ as
specified by Fact \ref{prop:obmo} above. Similar observations underly
recent proof-theoretical investigations of this logic in \cite{CloustonG14:sequent}.

Theorem \ref{th:scatpre} shows that the topos of trees is  scattered. 
$\Omega$-endomorphism ``$\rhd$'' (this notation here would risk
clashing with the one for a strict partial order and its converse)
defined  in \cite{BirkedalMSS12:lmcs} is easily seen to coincide with
$\obmo$. The Internal Banach Fixpoint Theorem 2.9 of \cite{BirkedalMSS12:lmcs} shows that $\obmo$-contractive mappings on arbitrary inhabited objects  in $\toptrees$ do have (unique) fixpoints.

Now, Example \ref{ex:nofix} above shows that such a strong statement is not valid in arbitrary scattered toposes of presheaves, even quite similar to $\toptrees$. The crucial Lemma 2.10 in  \cite{BirkedalMSS12:lmcs} is not amenable to far-reaching generalizations. 

However, \cite[Section 8]{BirkedalMSS12:lmcs} discusses a whole class of toposes together with a notion of a contractiveness guaranteeing fixpoint's existence. The class in question are \emph{sheaves} on complete  Heyting algebras with a well-founded basis  \cite{DiGianantonioM04:fossacs} rather than just presheaves on Noetherian partial orders---crucially, $\toptrees$ can be also seen  as such a sheaf topos---and the required notion of contractiveness is stronger than the one expressible in the internal logic. 

Let us elaborate on the last point. 
As we saw, toposes allow an internal interpretation of modalities as morphisms $\Omega \to \Omega$.  
Actually, from the ``propositions as predicates'' perspective,  any operation on subobjects  of a given object  is a ``local'' candidate for a modality. 
 However,  constructive or categorical logic is mostly about ``propositions as types''; see, e.g., \cite{AlechinaMdPR01:csl,BellinPR01:m4m,dePaivaR11:carnielli,BiermanP00:sl,PfenningD01:mscs,dePaivaR11:carnielli} for modal aspects. 
 This perspective 
works even with mild assumptions about the underlying category. 
In particular, \emph{algebraic type theories} require only finite products, whereas ccc's correspond to \emph{functional type theories} \cite{Crole93:c4t}: those whose type system is in fact that of Brouwerian semilattices of Remark \ref{rem:brouwerian} above. To see 
the details of this correspondence, just remove the rules for $\chom$, $\eqT$ and all abbreviations using these from Tables \ref{tab:mbrules} and \ref{tab:mbint}, then interpret conjunctions as products and implication as exponentation.

From this perspective, modalities correspond to endofunctors. 
 In particular, ND systems for $\iLax$ and $\iSFour$ are interpreted by, respectively, \emph{monads} and \emph{comonads}---see, e.g., references in Remark \ref{rem:monads}---and $\iSL$ yields a special subclass of \emph{pointed} or \emph{applicative}  functors. \cite{McbrideP08:jfp} 
 More precisely, one obtains a variant of  
  \cite[Definition 6.1]{BirkedalMSS12:lmcs}. Possible differences are: the modal assumption of normality forces only being monoidal wrt cartesian structure (cf. \cite{BellinPR01:m4m,dePaivaR11:carnielli}) rather than preservation of all finite limits as in the second clause of that definition; 
furthermore, the assumption of uniqueness in the first clause would
rely on exact reduction and conversion rules of the proof system. A
systematic study of such endofunctors in a cartesian setting---i.e.,
assuming only the presence conjunction among propositional connectives---has been
undertaken by \cite{MiliusL13:fics} under the name of \emph{guarded
  fixpoint categories}.  Other possible names for such endofunctors include \emph{contraction}, \emph{delay}, \emph{(strong) L\"{o}b}, $\iSL$ and \emph{MGRT}, the last being an abbreviation of the original name in \cite{BirkedalMSS12:lmcs}.

One can relate these two views on modalities. Whenever $F: \ctC \to \ctC$ is   monic-preserving and $\ctC$ has pullbacks, associate with a $F$-coalgebra $C \stackrel{\gamma}{\to} FC$ a modality $\pbcl{\gamma}{F}$ on subobjects  $M \stackrel{m}{\monar} C$:
 

\[
\vcenter{
\xymatrix{
\pbcl{\gamma}{F} M \ar[d]\ar@{>->}[r]^-{\pbcl{\gamma}{F} m}_<<{\!\!\!\!\!\!\!\pb} & C \ar[d]^{\gamma}\\
FM \ar@{>->}[r]^{Fm    } & FC
}}
\]

\noindent (see \cite{AdamekMMS12:fossacs} for the history of this diagram in papers on well-founded coalgebras). Furthermore, whenever $F$ is pointed (applicative), i.e., a $\iR$-endofunctor 
with $\point{F}: 1 \to F$ being the \emph{point} or \emph{unit} of $F$, $\point{F}_M$ is a subcoalgebra of $\point{F}_C$ for any $M \stackrel{m}{\monar} C$ 
and hence $m \leq \pbcl{\point{F}_C}{F} m$, i.e., the ``local'' translation of $\rax$ is universally valid. 
\begin{myremark}
\cite[Theorem 6.8]{BirkedalMSS12:lmcs} allows to isolate sufficient conditions
ensuring that the operator $\pbcl{\point{F}_C}{F}$ induced by
a $\iSL$-endofunctor $F: \ctC \to \ctC$ is a $\iSL$-modality: pullback-preservation
of $F$ 
 and $\ctC$  being a
topos. While it is not mentioned in \cite{BirkedalMSS12:lmcs}, one can
find natural counterexamples when such conditions are dropped!
\end{myremark}

The operation on subobjects of $C \in \toptrees$ induced by $\obmo: \Omega \to \Omega$ from Fact \ref{prop:obmo} is defined in an alternative way in \cite{BirkedalMSS12:lmcs}: as  $[\delbirk.\point{\delbirk}_C]$ for a delay endofunctor $\delbirk: \toptrees \to \toptrees$, whose action on objects is $(\delbirk C)(0) \deq \trmo$ and $(\delbirk C)(n+1) \deq (\delbirk C)(n)$. In a sense, $\delbirk$ can be called the \emph{Cantor-Bendixson endofunctor}. Factoring through it is the desired ``external'' notion of contractivity ensuring fixpoint's existence. Both notions nicely complement each other:
\begin{quote}
\dots the external notion provides for a simple algebraic theory of fixed points
for not only morphisms but also functors (see Section 2.6), whereas the internal notion is useful when working in the internal logic. \cite{BirkedalMSS12:lmcs} 
\end{quote}

Could \cite{EsakiaJP00:apal} have had more impact if the authors had a) employed the external perspective on modalities in addition to the internal one and b) had the hindsight of \cite{DiGianantonioM04:fossacs}? 
 This is  rather too counterfactual a question to  consider. Note also that what matters from the point of view of \cite{BirkedalMSS12:lmcs}---and Theoretical CS in general---is the use made of these external and internal L\"{o}b modalities. \cite[Section 3]{BirkedalMSS12:lmcs} constructs a model of a programming language with higher-order store and recursive types entirely inside the internal logic of $\toptrees$.  
 \cite[Section 4]{BirkedalMSS12:lmcs} provides semantic foundation
for dependent type theories extended  with a $\iSL$ modality  and guarded recursive types; this can be regarded as an extension of fixpoint results along the lines of Section \ref{sec:fixtheorem} above to predicate and higher-order constructive logics. \cite[Section 5]{BirkedalMSS12:lmcs} shows that a class of (ultra-)metric spaces commonly used in  modelling corecursion on streams is equivalent to a subcategory of $\toptrees$. 

Clearly, this is a large area rather overlooked by researchers on (intuitionistic) modal logic side. 
There is no space here to discuss my own  work in progress, e.g., on the Curry-Howard interpretation of $\mHC$, 
but let us conclude with a question from participants of ToLo III: 
is there a natural subclass of  internal  modalities in toposes (endomorphisms $\Omega \to \Omega$) inducing external modalities (endofunctors) in a generic way? 

\begin{myremark}
Earlier incarnations of this paper included also a question by Lars Birkedal: 
what are additional logical principles which would allow a scattered
topos to model not only guarded (co-)recursion, but also, e.g.,
countable nondeterminism? However, a recent work
\cite{BizjakBM14:rtlc} addressed the issue: expressing the additional property
needed for stating the adequacy of the logical relation requires more
than logical connectives introduced so far. It can be only stated in
presence of an additional modality right adjoint to double negation. 
\end{myremark}

\paragraph{Acknowledgements}

Special thanks are due to: the anonymous referee for careful reading and many helpful comments;  Guram Bezhanishvili, for an invitation to write this chapter, but also his tolerance, understanding and patience during the whole process; Alexander Kurz for his support, interest and numerous discussions; Mamuka Jibladze and the late Dito Pataraia, particularly for all the discussions we had in Kutaisi. 
Apart from this, the list of people who 
provided feedback and/or suggestions  include (in no particular order)
Lars Birkedal, Rasmus M\o gelberg,  Stefan Milius, Thomas Streicher,
Sam Staton, Alex Simpson, Gordon Plotkin, Johan van Benthem, Ramon
Jansana, Achim Jung, Andrew Pitts, Conor McBride, Nick Benton,
Guilhem Jaber, Dirk Pattinson,  Drew Moshier, Nick Bezhanishvili,
Alexei Muravitsky, Fredrik Nordvall Forsberg,  Ben Kavanagh, Oliver
Fasching, participants of Domains X, TYPES 2011, MGS 2012 and seminars
at the University of Leicester. For the extended version, I would also
like to acknowledge  Ranald Clouston, Rajeev Gor\'{e} and Ale\v{s} Bizjak.  
During early stages  of the write-up process, I was supported by the EPSRC grant EP/G041296/1. 

\printbibliography 










\end{document}